\newcommand{\itemshort}{
	\setlength{\itemsep}{0mm}
	\setlength{\parskip}{0mm}
	\setlength{\topsep}{-2mm}
}
\newcommand{\Match}{Ask}
\newcommand{\M}{\mathcal{M}}
\newcommand{\MM}{\mathcal{M}^+ }
\newcommand{\algoName}{\textsc{MaxMatch} }
\newcommand{\Cstart}{D_\mathcal{E}}
\newcommand{\Cend}{D_\mathcal{E}'}
\newcommand{\noUpdate}{\mathcal{F}}
\newcommand{\matchNB}{\mu}
\newcommand{\singleNB}{\sigma}
\newcommand\Exe{\mathcal{E}}
\newcommand\Voisin{N}
\newtheorem{definition}{Definition}
\newtheorem{lemma}{Lemma}
\newtheorem{theorem}{Theorem}
\newtheorem{corollary}{Corollary}
\newtheoremstyle{TheoremNum}
	{\topsep}{\topsep}         
	{\itshape}                      
	{}                                   
	{\bfseries}                     
	{.}                                  
	{ }                                  
	{\thmname{#1}\thmnote{ \bfseries #3}} 
\theoremstyle{TheoremNum}
\newtheorem{duplicateTheorem}{Theorem}
\newtheorem{duplicateCorollary}{Corollary}
\newtheorem{duplicateLemma}{Lemma}
\title{Polynomial self-stabilizing algorithm and proof for a 2/3-approximation of a maximum matching}
\date{}
\author[1]{Johanne Cohen}
\author[2]{Khaled Ma\^amra}
\author[1]{George Manoussakis}
\author[2]{Laurence Pilard}
\affil[1]{LRI-CNRS, Universit\'e Paris-Sud, Universit\'e Paris Saclay, France,

\texttt{\{johanne.cohen, george.manoussakis\}@lri.fr}}
\affil[2]{LI-PaRAD, Universit\'e Versailles-St. Quentin, Universit\'e Paris Saclay,  France, 
\texttt{\{khaled.maamra, laurence.pilard\}@uvsq.fr}}
\begin{document}
\maketitle

\begin{abstract}
We present the first polynomial self-stabilizing algorithm for finding a $\frac23$-approximation of a maximum matching in a general graph. The previous best known algorithm has been presented by Manne \emph{et al.} \cite{ManneMPT11} and has a sub-exponential time complexity under the distributed adversarial daemon \cite{Coor}. Our new algorithm is an adaptation of the Manne \emph{et al.} algorithm and works under the same daemon, but with a time complexity in $O(n^3)$ moves.  
Moreover, our algorithm only needs one more boolean variable  than the previous one, thus as in the Manne \emph{et al.} algorithm, it only requires a constant amount of memory space (three identifiers and $two$ booleans per node). 
\end{abstract}

\section{Introduction} 

In graph theory, a \emph{matching} $M$ in a graph $G$ is a subset of the edges of $G$ without common nodes.  A matching is \emph{maximal} if no proper superset of $M$ is also a matching whereas  a \emph{maximum} matching is a maximal matching with the highest cardinality among all possible maximal matchings.
Some (almost) linear time approximation algorithm for the maximum weighted matching problem   have been well studied \cite{DrakeH03,Preis1999}, nevertheless these algorithms are not distributed. They are based on  a simple greedy strategy using   \emph{augmenting path}.  An \emph{augmenting path} is a path, starting and ending in an unmatched node, and where every other edge is either unmatched or matched; \emph{i.e.} for each consecutive pair of edges, exactly one of them must belong to the matching. Let us consider the example in Figure~\ref{30eme}, page~\pageref{fig:correction}. In this figure, $u$ and $v$ are matched nodes and $x$, $y$ are unmatched nodes. The path $(x, u, v, y)$ is an augmenting path of length $3$ (written  \emph{$3$-augmenting path}). It is well known $\cite{doi:10.1137/0202019}$ that given a graph $G=(V,E)$ and a matching $M\subseteq E$,  if there is no augmenting path of length $2k-1$ or less, then $M$ is a $\frac{k}{k+1}-$approximation of the maximum matching. See \cite{DrakeH03} for the weighted version of this theorem. The greedy strategy in \cite{DrakeH03,Preis1999} consists in finding all augmenting paths of length $\ell$ or less and by switching  matched and unmatched edges of these paths in order to improve the maximum matching approximation.

In this paper, we present a self-stabilizing algorithm for finding a maximum matching with approximation ratio $2/3$ that uses the greedy strategy presented above. Our algorithm stabilizes after $O(n^{3})$ moves under the adversarial distributed daemon.

\section{Model} \label{sec:model}

The system consists of a set of processes where two adjacent processes can communicate with each other. The communication relation is represented by an undirected graph $G = (V, E)$ where $|V | = n$ and $|E| = m$. Each process corresponds to a node in $V$ and two processes $u$ and $v$ are adjacent if and only if $(u,v)\in E$. The set of neighbors of a process $u$ is denoted by $\Voisin(u)$ and is the set of all processes adjacent to $u$, and $\Delta$ is the maximum degree of $G$. We assume all nodes in the system have a unique identifier.

For the communication, we consider the \emph{shared memory model}.  In this model, each process maintains a set of \emph{local variables} that makes up the \emph{local state} of the process.   A process can read its local variables and the local variables of its neighbors, but it can write only in  its own local variables. A \emph{configuration} $C$ is the local states of all processes in the system.   Each process executes the same algorithm that consists of a set of \emph{rules}. Each rule is of the  form of ${<}name{>}::\text{ \textbf{if} } {<}guard{>} \text{ \textbf{then} } {<}command{>}$.  The \emph{name} is the name of the rule.  The  \emph{guard} is a predicate over the variables of both the process and its neighbors. The  \emph{command} is a sequence of actions assigning new values to the local variables of the process.
 
A rule is \emph{activable} in a configuration $C$ if its guard in $C$ is true. A process is \emph{eligible} for the rule $\mathcal{R}$ in a configuration $C$ if   its rule $\mathcal{R}$ is activable  in $C$ and we say the process is \emph{activable} in $C$.  An \emph{execution} is an alternate sequence of configurations and actions  $\Exe = C_0,A_0,\ldots,C_i,A_i, \ldots$, such that $\forall i\in \mathbb{N}^*$, $C_{i+1}$ is obtained by executing the command of at least one rule
that is activable  in $C_i$ (a process that executes such a rule makes a \emph{move}). More precisely, $A_i$ is the non empty set of activable  rules in $C_i$ that has been executed to reach $C_{i+1}$ and such that each process has at most one of its rules in $A_i$. We use the notation $C_{i} \mapsto C_{i+1}$ to denote this transition in $\Exe$.
Finally, let $\Exe'=C'_{0},A'_{0},\cdots,C'_{k}$ be a finite execution. We say $\Exe'$ is a \emph{sub-execution} of $\Exe$ if and only if   $\exists t\geq 0$ such that $\forall j \in [0,\cdots,k]$:($C'_{j}=C_{j+t} \land A'_{j}=A_{j+t}$).

An \emph{atomic operation} is such that no change can take place during its run, we usually assume that an atomic operation is instantaneous.
In the shared memory model, a process $u$ can read the local state of all its neighbors and update its whole local state in one atomic step. Then, we assume here that a rule is an atomic operation. An execution is \emph{maximal} if it is infinite, or it is finite and no process is activable in the last configuration. All algorithm executions considered here are assumed to be maximal.

A \emph{daemon} is a predicate on the executions. We consider only the most powerful one: the \emph{adversarial distributed daemon} that allows all executions described in the previous paragraph. Observe that we do not make any fairness assumption on the executions. 

An algorithm is \emph{self-stabilizing} for a given specification, if there exists a sub-set $\cal L$ of the set of all configurations such that: every execution starting from a configuration of $\cal L$ verifies the specification  (\emph{correctness})  and  starting from any configuration, every execution eventually reaches a configuration of $\cal L$   (\emph{convergence}). $\cal L$ is called the set of \emph{legitimate configurations}.  A configuration is \emph{stable} if no process is activable in the configuration. The algorithm presented here, is \emph{silent}, meaning that once the algorithm has stabilized, no process is activable. In other words, all executions of a silent algorithm are finite and end in a stable configuration. Note the difference with a non silent self-stabilizing algorithm that has at least one infinite execution with a suffix only containing legitimate configurations, but not stable ones.

\section{Algorithm} \label{sec:algo}

The algorithm presented in this paper is called $\algoName$, and  is based on the  algorithm presented by Manne \emph{et al.}~\cite{ManneMPT11}. As in the Manne \emph{et al.} algorithm, $\algoName$  assumes there exists an underlying maximal matching algorithm, which has reached a stable configuration. 
Based on  this stable maximal matching,  $\algoName$ builds a $\frac23$-approximation of the maximum matching by detecting and then deleting all $3$-augmenting paths. Once a $3$-augmenting path is detected, nodes rearrange the matching accordingly, \emph{i.e.}, transform this path with one matched edge into a path with two matched edges. This transformation leads to the deletion of the augmenting path and increases by one the cardinality of the matching. The algorithm   stabilizes when there is no augmenting path of length three left. By the result of Hopcroft \emph{et al.} \cite{doi:10.1137/0202019}, we obtain a $\frac23$-approximation of the maximum matching.

This underlying stabilized maximal matching can be built, for instance, with the self-stabilizing maximal matching algorithm from \cite{ManneMPT09} that stabilizes in $O(m)$ moves under the adversarial distributed daemon (so the same daemon than the one used in this paper). Observe that this algorithm is silent, meaning that the maximal matching remains constant once the algorithm has stabilized. Then, using a classical composition of this two algorithms \cite{Dolev00}, we obtain a total time complexity in $O(n^2\times n^3)= O(n^5)$ moves under the adversarial distributed daemon.

In the rest of the paper,  $\M$  is the underlying maximal matching, and $\MM$ is the set of edges built by our algorithm $\algoName$ (see Definition \ref{defMM}). 
For a set of nodes $A$, we define $single(A)$ and $matched(A)$ as the set of unmatched and matched nodes in $A$, accordingly to the underlying maximal matching $\M$. Moreover, $\M$ is encoded with the variable $m_u$. 
If $(u,v)\in \M$ then $u$ and $v$ are \emph{matched nodes} and we have: $m_u = v \land m_v = u$. If $u$ is not incident to any edge in $\M$, then $u$ is a \emph{single node} and $m_u=null$.
Since we assume the underlying maximal matching is stable, a node membership in $matched(V)$ or $single(V)$ will not change, and each node $u$ can use the value of $m_u$ to determine which set it belongs to.\\

\noindent\textbf{Variables description:}
In order to facilitate the rematching, each node $u \in V$ maintains three pointers and two boolean variables. The pointer $p_u$ refers to a neighbor of $u$ that $u$ is trying to (re)match with. If $p_u = null$ then the matching of $u$ has not changed from the maximal matching. Thus, the matching $\MM$ built by our algorithm is defined as follows:

\begin{definition} \label{defMM}
The set of edges built by  algorithm \algoName is 
$\MM = \{ (u,v) \in \M : p_u=p_v=null \} \cup \{ (a,b) \in E \setminus \M : p_a=b \land p_b=a\}$
\end{definition}

For a matched node $u$, pointers $\alpha_u$ and $\beta_u$ refer to two nodes in $single(N(u))$ that are \emph{candidates} for a possible rematching with $u$. 
Also, $s_u$ is a boolean variable that indicates if the node  $u$ has performed a successful rematching with its single node candidate. Finally, $end_u$  is a boolean variable that indicates if both $u$ and $m_u$ have performed a successful rematching or not.
For a single node $x$, only one pointer $p_x$ and one boolean variable $end_x$ are needed. $p_x$ has the same purpose as the $p$-variable of a matched node. 
The $end$-variable of a single node allows the matched nodes to know whether it is \emph{available} or not. A single node is \emph{available} if it is possible for this node to eventually rematch with one of its neighboring married node, \emph{i.e.}, $end_{x}=$\textit{False}. 

In our algorithm, $Unique(A)$ returns the number of unique elements in the multi-set $A$, and $Lowest(A)$ returns the node in $A$ with the lowest identifier. If $A=\emptyset$, then  $Lowest(A)$ returns $null$. Moreover, rules have priorities. In the algorithm, we present rules from the highest priority (at the top) to the lowest one (at the bottom).\\

\begin{figure}
\input{algo}
\end{figure}

\noindent\textbf{Graphical convention:}
We will follow the above conventions in all the figures: 
matched nodes are represented with double circles and single nodes with simple circles. Moreover, all edges that belong to the maximal matching $\M$ are represented with a double line, whereas the other edges are represented with a simple line. Black arrows show the content of the local variable $p$. If the $p$-value is $null$, we draw a \textsf{'T'}. A prohibited value is first drawn in grey, then scratched out in black.  If there is no knowledge on the $p$-value, nothing is drawn.  
For instance, in Figure \ref{3eme}, page~\pageref{3eme}, $x$ is a single node, $u$ and $v$ are matched nodes and $(u,v)\in \M$,  $p_u = x$, and $p_{x}\neq u$. In Figure \ref{30eme} page~\pageref{30eme}, $p_{u}=\bot$.

~\\
Now, we present the proof of our algorithm.

\section{Correctness Proof}

We first introduce some notations. A matched node $u$ is said to be \emph{First} if $AskFirt(u) \neq null$. In the same way, $u$ is \emph{Second} if $AskSecond(u) \neq null$.
Let $\Match : V \to V \cup \{null\}$ be a function where 
$\Match(u) =  AskFirst(u) $ if $AskFirst(u)\neq null$, otherwise $\Match(u) =AskSecond(u)  $. We will say a node makes a \emph{match} rule if it performs a $MatchFirst$ or $MatchSecond$ rule.

\textcolor{black}{If $C$ and $C'$ are two configurations in $\Exe$, then we note $  C \leq C'$ if and only if $C$ appears  before $C'$  in $\Exe$ or if $  C = C'$.  Moreover, we  write $\Exe \backslash C$ to denote all configurations of $\Exe$ except configuration $C$.}

\textcolor{black}{
\begin{definition}
Let $G=(V,E)$ be a graph and $M$ be a maximal matching of $G$. $(x, u, v, y)$ is a \emph{$3$-augmenting path} on $(G,M)$ if: 
\begin{enumerate}
\itemshort
\item $(x, u, v, y)$ is a path in $G$ (so all nodes are distincts);
\item $\{ (x,u), (v,y) \} \subset E\setminus  M$;
\item $(u,v) \in M$
\end{enumerate}
\end{definition}
\noindent For instance, in Figure \ref{30eme}, $(x, u, v, y)$ is a $3$-augmenting path.\\}

Recall that the set of edges built by our algorithm \algoName is 
$\MM = \{ (u,v) \in \M : p_u=p_v=null \} \cup \{ (a,b) \in E \setminus \M : p_a=b \land p_b=a\}$.

For the correctness part of the proof, we prove that in a stable configuration, $\MM$ is a $2/3$-approximation of a maximum matching on graph $G$. To do that we demonstrate there is no 3-augmenting path on $(G,\MM)$. In particular we prove  that for any  edge $(u,v) \in \M$, we have either $p_u = p_v = null$, or $u$ and $v$ have two distincts single neighbors they are rematched with, \emph{i.e.}, $\exists x\in single(N(u)), \exists y\in single(N(v))$ with $x\neq y$ such that $(p_{x}=u) \wedge (p_{u}=x) \wedge (p_{y}=v) \wedge (p_v=y)$. In order to prove that, we show  every other case for $(u,v)$ is impossible.  Main studied cases are shown in Figure \ref{fig:correction}. Finally, we prove that if $p_u = p_v = null$ then $(u,v)$ does not belong to a 3-augmenting-path on $(G,\MM)$.

\noindent \begin{figure}[h!]
\begin{center}
\hfill
\subfloat[By Lemma \ref{lem:3}]{
	\includegraphics[scale=0.34]{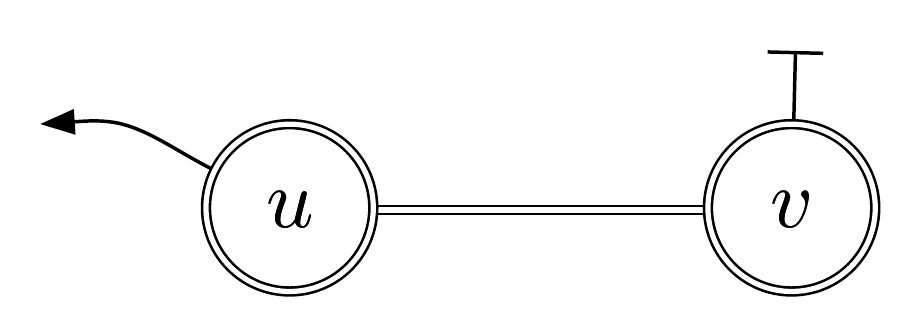}
	\label{2eme}
}\hfill
\subfloat[By Lemma \ref{2}]{
	\includegraphics[scale=0.34]{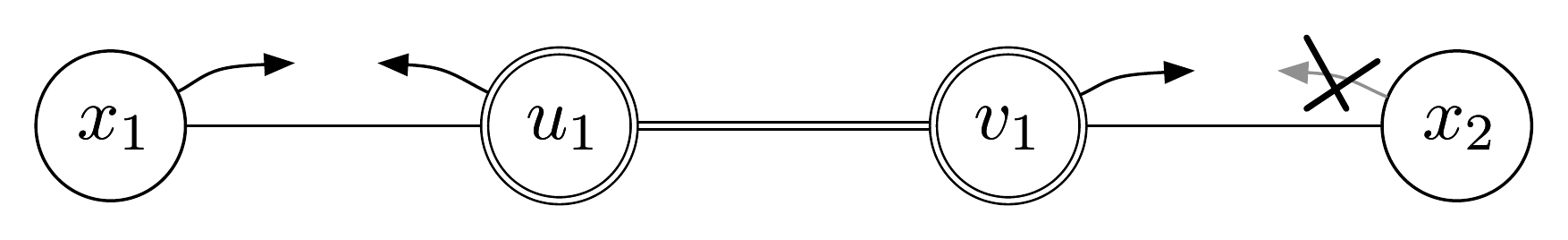}
	\label{fig:2}
}\hfill
\subfloat[By Lemma \ref{lem:utile}]{
	\includegraphics[scale=0.34]{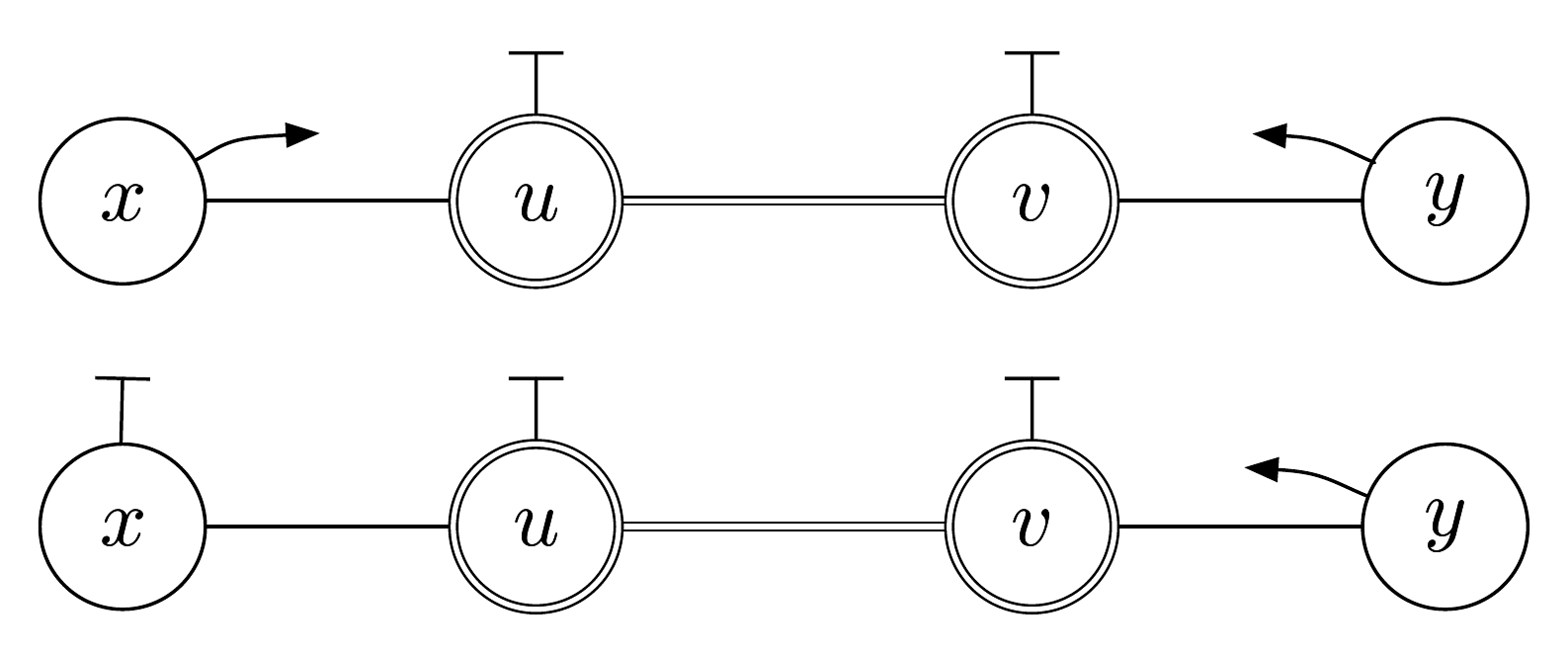}
	\label{1er}
}\hfill
\\
\hfill
\subfloat[By Lemma \ref{4}]{
	\includegraphics[scale=0.35]{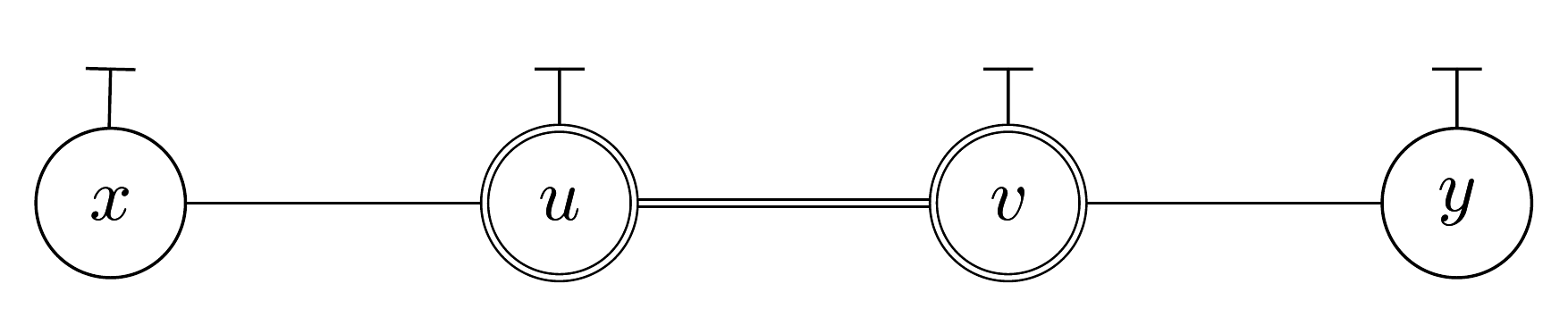}
	\label{30eme}
}\hfill
\subfloat[By Lemma \ref{5}]{
	\includegraphics[scale=0.35]{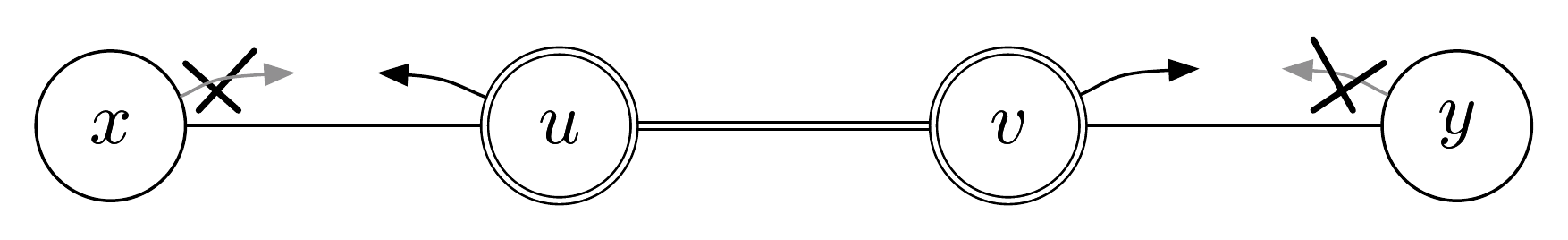}
	\label{3eme}
}\hfill
\caption{Impossible situations in a stable configuration.}
\label{fig:correction}
\end{center}
\end{figure}


\begin{lemma}\label{lem:trivial} 
In any stable configuration, we have the following properties: 
\begin{itemize}
\itemshort
\item  $\forall u \in matched(V) : p_u = \Match(u)$; 
\item  $\forall x \in single(V)$ : if $p_x=u$ with $u\neq null$, then $u \in matched(N(x)) \land p_u=x \land end_u=end_x$.
\end{itemize}
\end{lemma}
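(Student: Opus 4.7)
The plan is a contradiction argument: assume the configuration is stable but some conclusion of the lemma fails, then exhibit a rule whose guard is enabled.

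The single-node bullet is almost an immediate reading of the three single-node rules. For a single $x$ with $p_x = u \neq null$: if $u \notin matched(N(x))$, then SingleNode-UpdateP fires through its second disjunct ($p_x \notin matched(N(x)) \cup \{null\}$); if $p_u \neq x$, it fires through the third ($p_x \neq null \land p_{p_x} \neq x$); and if $end_u \neq end_x$, then SingleNode-UpdateEnd fires (its first two guard conjuncts being already granted by what we just proved). So stability forces all three conjuncts simultaneously.

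For the matched-node bullet I would split on $(AskFirst(u), AskSecond(u))$. When $AskFirst(u)\neq null$, a discrepancy $p_u\neq AskFirst(u)$ is exactly the first bracket disjunct of MatchFirst, whose preamble is satisfied. When both returns are $null$, $\Match(u) = null$ and any $p_u \neq null$ triggers the first disjunct of ResetMatch. When $AskFirst(u)=null$ but $AskSecond(u)\neq null$, I would split further on $s_{m_u}$: if $s_{m_u}=True$, a discrepancy triggers MatchSecond via $p_u\neq AskSecond(u)$; if $s_{m_u}=\mathit{False}$ and $p_u\neq null$, a discrepancy triggers the second disjunct of ResetMatch. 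The residual subcase---$s_{m_u}=\mathit{False}$ and $p_u=null$---is the delicate one: no rule at $u$ fires, so I would chase the pointer chain starting at $v := m_u$. Since $AskSecond(u)\neq null$ forces $AskFirst(v)\neq null$, MatchFirst$(v)$ imposes $p_v = x := AskFirst(v)$; the $s$-clause of that guard, using $p_u = null \in \{AskSecond(u), null\}$, then forces $s_v = (p_x = v) = \mathit{False}$, so $p_x \neq v$. SingleNode-UpdateP at $x$ (whose witness set contains $v$) forces $p_x = y$ for some $y \in matched(N(x))\setminus\{v\}$ with $p_y = x$; Update$(v)$'s fifth disjunct together with SingleNode-UpdateEnd at $x$ propagates $end_x = end_y = \mathit{False}$. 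Re-applying the case analysis at $y$ with $p_y = x \neq null$ yields $\Match(y) = x$, after which $y$'s MatchFirst formula forces $s_y = (p_{m_y} \in \{AskSecond(m_y), null\}) = \mathit{True}$; this finally makes MatchSecond$(m_y)$ activable (noting $p_{m_y}$ must be a single neighbor or $null$ by Update$(m_y)$'s fourth disjunct), the contradiction.

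I expect this last subcase to be the main obstacle: every other branch reads off a guard in one line, whereas here one has to chase a pointer path through MatchFirst$(v)$, UpdateP$(x)$, Update$(v)$, UpdateEnd$(x)$, MatchFirst$(y)$, and MatchSecond$(m_y)$ before locating the activable rule. If the chain does not terminate with $m_y$ directly, I would iterate the same step along the chain and invoke the finiteness of $V$ to force termination.
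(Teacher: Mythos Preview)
Your single-node bullet and the first three branches of your matched-node bullet are exactly the paper's proof: a one-rule-per-case contradiction. Where you diverge is the ``delicate subcase'' ($AskFirst(u)=null$, $AskSecond(u)\neq null$, $s_{m_u}=False$, $p_u=null$). The paper dispatches the entire $AskSecond$ case with the single sentence ``We can apply the same result for the case where $AskSecond(u)\neq null$'', treating it as symmetric to the $AskFirst$ case and never isolating the $s_{m_u}$ hypothesis at all. You are right that the symmetry is not literal---the guard of \emph{MatchSecond} carries the extra conjunct $s_{m_u}=True$---so your instinct to chase the pointer chain is reasonable, and it is in fact the same device the paper deploys later in Lemma~\ref{2}.

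That said, your chain argument as written has gaps. After reaching $y$ with $p_y=x$ and $\Match(y)=x$, you pass directly to ``$y$'s MatchFirst formula'', but your own case split allows $y$ to be Second (with $s_{m_y}=True$); that branch must be handled separately (there, $end_y=False$ and the \emph{MatchSecond} $end$-formula force $p_{m_y}\neq AskFirst(m_y)$, so \emph{MatchFirst} fires at $m_y$). In the branch where $y$ is First, you assert $s_y=(p_{m_y}\in\{AskSecond(m_y),null\})=True$ and then that \emph{MatchSecond}$(m_y)$ is activable, but neither step is justified: if $s_y=False$ then indeed $p_{m_y}\neq null$ and \emph{ResetMatch} fires at $m_y$; but if $s_y=True$ with $p_{m_y}=AskSecond(m_y)$, the first bracket disjunct of \emph{MatchSecond}$(m_y)$ is \emph{false} and nothing fires there---the chain genuinely continues through $x':=p_{m_y}$ to a new $y'$. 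Your fallback ``iterate and invoke finiteness'' is the right idea, but to close it you need a strictly monotone invariant along the chain (compare the paper's Lemma~\ref{2}, where the single nodes have strictly increasing identifiers because every link is First); finiteness alone does not rule out a cycle.
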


\begin{proof}
First, we will  prove the first property. We consider the case where  $AskFirst(u) \neq null$.  We have   $p_{u} = AskFirst(u) $, otherwise node $u$ can execute rule $AskFirst$. We can apply the same result for the case where $AskSecond(u) \neq null$. Finally, we consider the case where  $AskFirst(u) = AskSecond(u) = null$. If  $p_{u}\neq null$, then node $u$ can execute rule $ResetMatch$ which yields the contradiction. Thus, $p_{u}= null$.

Second, we consider a stable configuration $C$ where $p_{x}=u$, with $u\neq null$. $u\in matched(N(x))$, otherwise $x$ is eligible for an \emph{UpdateP} rule. Now there are two cases: $p_{u}=x $ and $p_{u}\neq x$.  If $p_{u}\neq x$, this means that $p_{p_{x}}\neq x$. Thus, $x$ is eligible for rule $UpdateP$, and this yields to a contradiction with the fact  that $C$ is stable.  
Finally,  we have $end_{u}=end_{x}$, otherwise  $x$ is eligible for rule $UpdateEnd$.
\end{proof}

 
\begin{lemma}\label{lem:3}
 Let $(u, v)$ be an edge in $\M$. Let $C$ be a configuration. If $p_u \neq null  \land p_v=null$ holds in $C$ (see Figure \ref{2eme}), then $C$ is not stable.
\end{lemma}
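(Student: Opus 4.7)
The plan is a proof by contradiction: assume $C$ is stable with $(u,v) \in \M$, $p_u \neq null$, $p_v = null$, and exhibit an activable rule.

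The first step is a local case analysis at $u$ and $v$. ResetMatch rules out $AskFirst(u) = AskSecond(u) = null$; and if $AskSecond(u) \neq null$ then $AskFirst(v) \neq null$ and $v$'s MatchFirst fires (since $p_v = null$). Hence $AskFirst(u) \neq null$ and $AskSecond(u) = null$. Symmetrically $AskFirst(v) = null$ (else $v$'s MatchFirst fires), and a short arithmetic on the $\alpha,\beta,Unique$ conditions, discarding the degenerate case $\alpha_v = \alpha_u$ with $\beta_v = null$ (which would make $AskFirst(v) \neq null$), forces $AskSecond(v) \neq null$. For $v$'s MatchSecond not to fire we then need $s_u = False$. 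Stability of $u$'s MatchFirst therefore pins down $p_u = \alpha_u =: x \in single(N(u))$, $end_u = False$, and $p_x \neq u$ (because the $s_u$-clause evaluates to $p_{p_u} = u$). SingleNode-UpdateP produces a matched neighbour $u_1 \neq u$ of $x$ with $p_x = u_1$ and $p_{u_1} = x$; Update at $u$ then forces $end_x = False$, so $end_{u_1} = False$ by SingleNode-UpdateEnd.

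The heart of the proof will be an iterative chain argument. Set $(u_0,v_0) := (u,v)$, $y_0 := x$, and for $k \geq 1$ define $u_k := p_{y_{k-1}}$, $v_k := m_{u_k}$, so that $p_{u_k} = y_{k-1}$ with the pointer reciprocated ($p_{y_{k-1}} = u_k$). Repeating the previous analysis on $(u_k,v_k)$ yields $AskFirst(u_k) \neq null$, $AskFirst(v_k) = null$, $AskSecond(v_k) \neq null$. If at some step $p_{v_k} = null$, MatchFirst stability at $u_k$ together with reciprocation forces $s_{u_k} = True$ (since $p_{p_{u_k}} = u_k$ and $p_{v_k} = null \in \{AskSecond(v_k),null\}$), so $v_k$'s MatchSecond fires, contradicting stability. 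Otherwise set $y_k := p_{v_k}$; the $P_2$ clause of $v_k$'s MatchSecond (using $end_{v_k} = False$, obtained by propagating $end_{y_{k-1}} = False$ through $u_k$'s MatchFirst and SingleNode-UpdateEnd) gives $p_{y_k} \neq v_k$, SingleNode-UpdateP at $y_k$ then produces $u_{k+1}$, and the chain continues.

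The finishing ingredient is strict monotonicity $y_0 < y_1 < y_2 < \cdots$: from $AskFirst(v_k) = null$ one gets $\alpha_{v_k} \geq \alpha_{u_k} = y_{k-1}$, and the identity $y_k = Lowest(\{\alpha_{v_k},\beta_{v_k}\} \setminus \{y_{k-1}\})$ combined with $\alpha_{v_k} < \beta_{v_k}$ (Update stability) yields $y_k > y_{k-1}$ in every sub-case. Since the single vertices of $G$ form a finite set, the chain cannot extend forever, so it must reach the terminating case $p_{v_k} = null$, contradicting the stability of $C$. The main obstacle I anticipate is the symmetric ``MatchSecond branch'' at some $u_k$, where $p_{u_k} = AskSecond(u_k)$ rather than $AskFirst(u_k)$ (so $AskFirst(v_k)$ would be non-null and the above bookkeeping collapses); ruling this branch out by showing that otherwise either $v_k$'s MatchFirst or the second disjunct of $u_k$'s ResetMatch would fire is the fiddly case analysis that must be carried through at every step of the chain.
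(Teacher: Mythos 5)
Your route is workable but it is both far heavier than the paper's and, as you yourself flag, not finished. The paper disposes of this lemma in three lines: by Lemma~\ref{lem:trivial}, every matched node $w$ in a stable configuration satisfies $p_w=\Match(w)$, and the predicates $AskFirst$/$AskSecond$ are coupled so that $\Match(u)\neq null$ forces $\Match(v)\neq null$ (if $u$ is Second then $v$ is First by definition; if $u$ is First then $Unique(\{\alpha_u,\beta_u,\alpha_{v},\beta_{v}\})\geq 2$ leaves a candidate for $AskSecond(v)$ --- your ``degenerate case'' $\alpha_v=\alpha_u\wedge\beta_v=null$ being exactly the one where $v$ is First instead). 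Hence $p_v=\Match(v)\neq null$, contradiction. Your first paragraph in fact re-derives all of this and reaches $AskSecond(v)\neq null$ while $p_v=null$; but instead of stopping there you observe --- correctly, and this is a subtlety the paper's appeal to Lemma~\ref{lem:trivial} glosses over --- that $s_u=False$ disables $v$'s $MatchSecond$ and the second disjunct of $v$'s $ResetMatch$ needs $p_v\neq null$, so stability is not violated \emph{locally} at $v$, and you launch a global chain. That chain is essentially the paper's proof of Lemma~\ref{2} transplanted here; the paper's architecture avoids this duplication by settling Lemma~\ref{lem:3} cheaply and reserving the chain argument for Lemma~\ref{2}.

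The genuine gap is the case you name and postpone: the ``Second branch'' at a link, i.e.\ $p_{u_k}=AskSecond(u_k)=y_{k-1}$ with $AskFirst(v_k)\neq null$. Your proposed fix (``$v_k$'s $MatchFirst$ or the second disjunct of $u_k$'s $ResetMatch$ would fire'') only covers the sub-case where $z:=AskFirst(v_k)=p_{v_k}$ does not point back: then $MatchFirst$-stability at $v_k$ forces $s_{v_k}=False$ and $ResetMatch$ fires at $u_k$. In the sub-case $p_z=v_k$ all four pointers of $(y_{k-1},u_k,v_k,z)$ are reciprocated, $s_{v_k}=True$, and neither of your two rules is activable; you must instead use the $end$-flag, e.g.\ $MatchSecond$-stability at $u_k$ forces $end_{u_k}=True$ while $UpdateEnd$-stability at $y_{k-1}$ forces $end_{u_k}=end_{y_{k-1}}=False$ (the value you carried down the chain) --- this is the mechanism of Lemma~\ref{lem:augmentingpath}. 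Without closing this branch the induction step does not go through, so the proof as written is incomplete; the monotonicity $y_{k-1}<y_k$ and the termination by finiteness are fine.
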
 

\begin{proof}
By contraction. We assume  $C$ is  stable.  From Lemma~\ref{lem:trivial},  we have $p_u=\Match(u) \neq null$ and $p_v=\Match(v)$. So, by definition of predicates $AskFirst$ and  $AskSecond$,  $\Match(u)=x\neq null$ implies that $\Match(v) \neq  null$. This contradicts that fact that  $p_v=\Match(v) =null$.
\end{proof}


\begin{lemma}\label{lem:augmentingpath} 
Let $(x, u, v, y)$ be a 3-augmenting path on $(G,\M)$. Let $C$ be a stable configuration. 
In C, if $p_x=u$, $p_u=x$, $p_v=y$ and $p_y = u$, then $end_x= end_u=end_v=end_y=True$.
\end{lemma}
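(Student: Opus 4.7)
The overall idea is to use Lemma~\ref{lem:trivial} to reduce the claim to $end_u=end_v=True$, and then to recover both equalities from the fact that the \emph{MatchFirst} rule at $u$ and the \emph{MatchSecond} rule at $v$ are not activable in the stable configuration $C$. (I read the hypothesis ``$p_y=u$'' as a typo for $p_y=v$, consistent with $(x,u,v,y)$ being a path and with $p_v=y$.)

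First I would apply Lemma~\ref{lem:trivial} to the two single nodes: from $p_x=u\neq null$ we get $end_u=end_x$, and from $p_y=v\neq null$ we get $end_v=end_y$. Hence it is enough to prove $end_u=end_v=True$. Next, applying the first bullet of Lemma~\ref{lem:trivial} to the matched nodes gives $p_u=\Match(u)=x$ and $p_v=\Match(v)=y$, so each of $u,v$ is either \emph{First} or \emph{Second}. A short inspection of the tie-breaking clauses inside $AskFirst$ rules out that $u$ and $v$ are both First: that would force $\alpha_u=\alpha_v$, contradicting $\alpha_u=x\neq y=\alpha_v$. Thus, up to renaming, $u$ is First with $AskFirst(u)=x$, and $v$, not being First, must be Second with $\Match(v)=AskSecond(v)=y$.

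The heart of the argument is unfolding the non-activability of \emph{MatchFirst} at $u$ and of \emph{MatchSecond} at $v$. For $u$, since $AskFirst(u)=x\neq null$, the bracket in the \emph{MatchFirst} guard must evaluate to false; using $p_u=x=AskFirst(u)$, $p_{p_u}=p_x=u$, and $p_{m_u}=p_v=y=AskSecond(v)$, this forces $s_u=True$ and $end_u=end_v$. For $v$, the fact just established ($s_{m_v}=s_u=True$) together with $AskSecond(v)=y\neq null$ makes the first two conjuncts of the \emph{MatchSecond} guard true, so again the remaining bracket must be false; evaluating $end_v=(p_v=AskSecond(v)\wedge p_{p_v}=v\wedge p_{m_v}=AskFirst(m_v))$ with the known values gives $end_v=True$. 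Hence $end_u=end_v=True$, and a final application of Lemma~\ref{lem:trivial} lifts this to $end_x=end_y=True$.

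The one delicate step I anticipate is ruling out the ``both First'' case: it requires enumerating the three disjuncts of the tie-breaking in $AskFirst(u)$ against the three in $AskFirst(v)$ and checking that every combination is either self-contradictory (e.g.\ $\alpha_u<\alpha_v$ and $\alpha_v<\alpha_u$), or contradicts $\beta_u=null$ versus $\beta_u\neq null$ or $u<v$ versus $v<u$, or collapses to $\alpha_u=\alpha_v$ which contradicts $x\neq y$. Once that is done, the remainder is purely bookkeeping: the stability of \emph{MatchFirst}/\emph{MatchSecond} pins down $s_u$ and $end_v$ to the exact Boolean values written in the rules, and those values come out to $True$ precisely because the four given equalities $p_x=u,\ p_u=x,\ p_v=y,\ p_y=v$ make every conjunct hold.
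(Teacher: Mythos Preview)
Your proposal is correct and follows essentially the same route as the paper: use Lemma~\ref{lem:trivial} to reduce to the matched nodes, identify which of $u,v$ is \emph{First} and which is \emph{Second}, and then read off $s_u$, $end_v$, $end_u$ from the non-activability of \emph{MatchFirst} at $u$ and \emph{MatchSecond} at $v$. The only cosmetic differences are that the paper asserts the ``w.l.o.g.\ $u$ is First'' step without justification (whereas you spell out the tie-breaking argument), and the paper orders the deductions as $s_u=True$ (from \emph{MatchFirst}), then $end_v=True$ (from \emph{MatchSecond}), then $end_u=True$ (returning to \emph{MatchFirst}), while you extract $s_u=True$ and $end_u=end_v$ simultaneously from \emph{MatchFirst} before turning to \emph{MatchSecond}.
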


\begin{proof}
From Lemma~\ref{lem:trivial}, $p_u=\Match(u)$ (resp. $p_v=\Match(v)$)  thus $\Match(u)\neq null$ and  $\Match(v)\neq null$. 
W.l.o.g, we can assume that $AskFirst(u) \neq null$.  We have $s_u=True$, otherwise $u$ can execute $MatchFirst$ rule. Now, as $s_u=True$, we must have $end_v=True$, otherwise $v$ can execute $MatchSecond$ rule. As $s_u=end_v=True$, we must have $end_u=True$, otherwise $u$ can execute $MatchFirst$ rule. From Lemma~\ref{lem:trivial}, we can deduce that $end_x= end_u=end_v=end_y=True$ and this concludes the proof.
\end{proof}


\begin{lemma}\label{2}
Let $(x_1, u_1, v_1, x_2)$ be a 3-augmenting path on $(G,\M)$. Let $C$ be a  configuration. 
If $p_{x_1}=u_1 \land p_{u_1}= x_1  \land p_{v_1}=x_2 \land p_{x_2}\neq v_1$ holds in $C$ (see Figure \ref{fig:2}), then $C$ is not stable.
\end{lemma}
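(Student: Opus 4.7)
The plan is to assume for contradiction that $C$ is stable and track which rule must necessarily be activable. First I observe that $p_{x_2}$ cannot be $null$: if it were, then $v_1\in matched(N(x_2))$ with $p_{v_1}=x_2$ would place $v_1$ into the set $\{w\in matched(N(x_2)) : p_w=x_2\}$, making \textbf{SingleNode-UpdateP} activable at $x_2$. Hence $p_{x_2}=w$ for some $w\notin\{null,v_1\}$, and Lemma~\ref{lem:trivial} then gives $w\in matched(N(x_2))$, $p_w=x_2$ and $end_w=end_{x_2}$. A second application of Lemma~\ref{lem:trivial} to the matched nodes yields $\Match(u_1)=x_1$ and $\Match(v_1)=x_2$.

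Next I case-split on how $\Match$ is realised at $u_1$ and $v_1$. \emph{Case A}: $AskFirst(u_1)=x_1$ and $AskFirst(v_1)=x_2$. Then $\alpha_{u_1}=x_1$ and $\alpha_{v_1}=x_2$, and since $x_1\neq x_2$ the only way the inner guard of $AskFirst$ holds on both sides is $x_1<x_2$ (for $u_1$) and $x_2<x_1$ (for $v_1$) simultaneously --- impossible. \emph{Case B}: $AskFirst(v_1)=x_2$ and $AskFirst(u_1)=null$, so $AskSecond(u_1)=x_1$. Because $p_{x_2}\neq v_1$, the formula $cond_s$ of \textbf{MatchFirst} at $v_1$ evaluates to False, so stability forces $s_{v_1}=False$. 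But then the second disjunct of \textbf{ResetMatch} at $u_1$ (namely $AskSecond(u_1)\neq null \wedge p_{u_1}\neq null \wedge s_{v_1}=False$) holds, so $u_1$ is activable.

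\emph{Case C}: $AskFirst(u_1)=x_1$ and $AskFirst(v_1)=null$, with $AskSecond(v_1)=x_2$. Evaluating each rule's conditions at $u_1$ and $v_1$ in the stable configuration, I derive: $s_{u_1}=True$ from \textbf{MatchFirst} at $u_1$ (its $cond_s$ is True); $end_{v_1}=False$ and hence $s_{v_1}=False$ from \textbf{MatchSecond} at $v_1$ (whose $cond_{end,MS}$ requires $p_{x_2}=v_1$, which fails); and then $end_{u_1}=False$ from \textbf{MatchFirst} at $u_1$. The crucial step is \textbf{Update} at $v_1$: since $p_{v_1}=x_2\neq null$ and $p_{p_{v_1}}=w\neq v_1$, non-activability reduces to $(\alpha_{v_1},\beta_{v_1})=BestRematch(v_1) \vee end_{x_2}=False$. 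In either disjunct, $x_2$ (which is necessarily the first or second element of $BestRematch(v_1)$, since $AskSecond(v_1)=x_2$) must belong to the candidate set, which in turn forces $end_{x_2}=False$. Consequently $end_w=False$.

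It remains to derive a contradiction from $end_w=False$ at the matched pair $(w,w')\in\M$. By Lemma~\ref{lem:3}, $p_{w'}\neq null$; write $p_{w'}=z$. If $p_z=w'$, then $(x_2,w,w',z)$ is a $3$-augmenting path on $(G,\M)$ with all four pointers consistently mutual (the distinctness $z\neq x_2$ follows from $p_{x_2}=w\neq w'=p_z$), so Lemma~\ref{lem:augmentingpath} gives $end_{x_2}=True$, contradicting the $end_{x_2}=False$ derived above. If instead $p_z\neq w'$, the configuration restricted to $(x_2,w,w',z)$ exactly reproduces the hypothesis of the present lemma, and the same argument can be iterated along the resulting chain; in a finite graph this chain must eventually terminate either at a dangling pointer (handled by \textbf{SingleNode-UpdateP}) or at a fully-paired $3$-augmenting path (handled by Lemma~\ref{lem:augmentingpath}). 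The main obstacle I anticipate is precisely this final sub-case: making the "chain of stolen candidates'' argument precise --- ensuring that the iteration is well-founded (e.g.\ by choosing a minimal counter-example in the graph) rather than circular, so that the contradiction from Lemma~\ref{lem:augmentingpath} or from the $p=null$ termination actually bites.
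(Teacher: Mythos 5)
Your Cases A--C reproduce the paper's analysis faithfully (the paper only treats your B and C, since exactly one endpoint of a matched edge can be First; your explicit dispatch of A via the identifier comparison in $AskFirst$ is fine), and your derivation of $end_{x_2}=False$ via the $Update$ guard at $v_1$, followed by the construction of the next link $(x_2,w,w',z)$ of the chain, is exactly the paper's route. But the point you flag as "the main obstacle" is a genuine gap, and it is precisely where the real content of the lemma lies: nothing you have written rules out the chain closing into a cycle, and in a finite graph an unterminated chain \emph{must} revisit a single node, so without a well-foundedness argument you never reach the "dangling pointer" or "fully-paired path" terminations and no contradiction is obtained. Your suggested fix (a minimal counter-example) does not close it either: choosing the counterexample path with smallest head $x_1$ only tells you that the next head satisfies $x_2\geq x_1$, which is consistent with a cycle.

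The paper's resolution is to extract strict monotonicity from the First/Second structure. Each new path $(x_i,u_i,v_i,x_{i+1})$ satisfies the hypothesis of the lemma, so your Case B applies to it and kills the orientation in which $u_i$ is Second; hence every $u_i$ along the chain is First. Then $\alpha_{u_i}=AskFirst(u_i)=x_i$, the guard of $AskFirst$ forces $\alpha_{v_i}\geq\alpha_{u_i}=x_i$, and $x_{i+1}=AskSecond(v_i)=Lowest(\{\alpha_{v_i},\beta_{v_i}\}\setminus\{\alpha_{u_i}\})$ together with $\alpha_{v_i}\leq\beta_{v_i}$ yields $x_{i+1}>x_i$. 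The sequence of identifiers $x_1<x_2<\cdots$ is strictly increasing, so the chain cannot cycle, and finiteness of the graph gives the contradiction. You have all the ingredients for this step (your Case B is exactly the tool needed), but the argument must actually be made; as written the proof is incomplete.
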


\begin{proof}
By contraction. We assume $C$ is  stable. From Lemma~\ref{lem:trivial}, $\Match(u_1)=x_{1}$ and  $\Match(v_1)=x_{2}$.

First we assume that $AskSecond(u_1)=x_{1}$ and $AskFirst(v_1)=x_{2}$. The local variable $s_{v_1}$ is $False$,  otherwise $v_1$ would be eligible for executing the $MatchFirst$ rule. Since $AskSecond(u_1)\neq null \wedge p_{u_1} \neq null \wedge s_{v_1} =False$, this implies that $u_1$ is eligible for the $ResetMatch$ rule which is a contradiction.

Second, we assume that $AskFirst(u_1)=x_{1}$ and $AskSecond(v_1)=x_{2}$.  We have $s_{u_1} = True$, otherwise $u_1$ can execute the $MatchFirst$ rule. This implies that $end_{v_1} = False$, otherwise $v_1$ can execute the $MatchSecond$ rule. As $end_{v_1} = False$, then $end_{u_1} = False$, otherwise $u_1$ can execute the $MatchFirst$ rule. From Lemma~\ref{lem:trivial},   $end_{x_{1}}= end_{u_1}=end_{v_1}=False$. Since $\Match(v_1)=x_{2}$, we have $x_{2}\in \{\alpha_{v_1},\beta_{v_1}\}$. Let us assume $end_{x_2}=True$. Then  $x_2\not\in BestRematch(v_1)$ and then $v_1$ is elligible for an $Update$. Thus $end_{x_2}=False$. 

Therefore, $C$ is a configuration such that $u_1$ is $First$ and $v_1$ is $Second$ with $end_{x_{1}}=end_{u_1}=end_{v_1}=end_{x_{2}}=False$. Now we are going to show  there exists another augmenting path $(x_2, u_2, v_2, x_3)$ with $end_{x_2}=end_{u_2}=end_{v_2}=end_{x_3}=False$ and $p_{u_2}=x_{2}$, $p_{x_2}=u_{2}$, $p_{v_{2}}=x_{3}$ and $p_{x_3}\neq v_2$ such that $u_{2}$ is $First$ and $v_{2}$ is $Second$ (see Figure~\ref{etape}).

\begin{figure}[h]
\begin{center}
\includegraphics[scale=0.35]{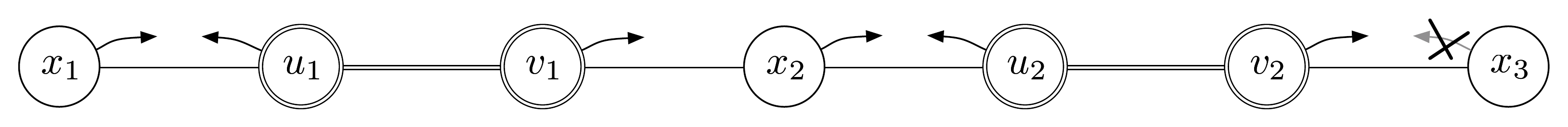}
\caption{A chain of 3-augmenting paths.}
\label{etape}
\end{center}
\end{figure}

$p_{x_2}\neq null$ otherwise $x_2$ is elligible for an \emph{UpdateP} rule. Thus there exists a vertex $u_2 \neq v_1$ such that  $p_{x_{2}}=u_2$. From Lemma~\ref{lem:trivial}, $u_2\in matched(N(x_2))$ and $p_{u_2}=x_{2}$. Therefore, there exists a node $v_2=m_{u_2}$. From Lemma~\ref{lem:3}, we can deduce that $p_{v_2} \neq null$ and there exists a node $x_3$ such that $p_{v_2} = x_{3}$. $x_3\in single(N(v_2))$ otherwise $x_2$ is eligible for an \emph{Update} rule. Finally, if $p_{x_{3}} = v_2 $, then  Lemma~\ref{lem:augmentingpath} implies that $end_{x_2}=end_{a_2}=end_{b_2}=end_{x_3}=True$.  This yields to the contradiction with the fact $end_{x_{2}}=False$. So, we have $p_{x_{3}} \neq v_2 $. 

We can then conclude that $(x_2, u_2, v_2, x_3)$ is a 3-augmenting path such that $p_{x_2}=u_2 \land p_{u_2}= x_2  \land p_{v_2}=x_3 \land p_{x_3}\neq v_2$. This augmenting path has the exact same properties than the first considered augmenting path $(x_1, u_1, v_1, x_2)$ and in particular $u_1$ is First. 

Now we can continue the construction in the same way. Therefore, for $C$ to be stable, it has to exist a chain of 3-augmenting paths $(x_1, u_1, v_1, x_2, u_2, v_2, x_3, \ldots, x_i, u_i, v_i, x_{i+1}, \ldots)$ where $\forall i \geq 1: (x_i, u_i, v_i, x_{i+1})$ is a 3-augmenting path with  $p_{x_i}=u_i \land p_{u_i}= x_i  \land p_{v_i}=x_{i+1} \land p_{x_{i+1}}= v_{i+1}$  and $u_i$ is First. Thus, $x_{1} < x_{2} < \ldots < x_{i} <\dots$ since the $u_i$ will always be First. Since the graph is finite some $x_{k}$ must be equal to some $x_{\ell}$ with $\ell \neq k$ which contradicts the fact that the identifier' sequence is strictly increasing.
\end{proof}


\begin{lemma}
\label{5}
Let $(x, u, v, y)$ be a 3-augmenting path on $(G,\M)$. Let $C$ be a configuration. If $p_u=x \land p_x \neq u  \land p_v=y \land p_y \neq v$ holds in $C$ (see Figure \ref{3eme}), then $C$ is not stable.
\end{lemma}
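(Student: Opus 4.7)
My plan is to argue by contradiction, assuming $C$ is stable. The first step is to apply Lemma~\ref{lem:trivial} to deduce that $p_u=\Match(u)=x$ and $p_v=\Match(v)=y$, so both $\Match(u)$ and $\Match(v)$ are non-null. Inspecting the $AskFirst$ and $AskSecond$ predicates, $AskSecond(w)\neq null$ requires $AskFirst(m_w)\neq null$, so whenever both members of a matched pair have non-null $\Match$, exactly one of them is $First$ and the other is $Second$. This lets me split into two symmetric cases; I will write out the case where $u$ is $First$ (with $AskFirst(u)=x$) and $v$ is $Second$ (with $AskSecond(v)=y$), the other case being handled by swapping $u \leftrightarrow v$ and $x \leftrightarrow y$.

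Next I would extract the value of $s_u$ from the assumed non-activability of $u$'s $MatchFirst$ rule. Not being eligible for $MatchFirst$ forces $s_u$ to equal $(p_u=AskFirst(u)\,\land\,p_{p_u}=u\,\land\,p_{m_u}\in\{AskSecond(m_u),null\})$. Since $p_u=x$ but $p_x\neq u$, the conjunct $p_{p_u}=u$ fails, so the right-hand side is $False$ and therefore $s_u=False$.

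Finally I would activate the contradiction through the $ResetMatch$ rule on $v$. Its second disjunct is exactly $AskSecond(v)\neq null \,\land\, p_v\neq null \,\land\, s_{m_v}=False$. All three conditions hold here, since $AskSecond(v)=y\neq null$, $p_v=y\neq null$, and $s_{m_v}=s_u=False$ by the previous step. So $v$ is eligible, contradicting stability. The symmetric case gives a contradiction through $ResetMatch$ on $u$ in exactly the same way.

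I do not expect any real obstacle: the argument is a direct application of Lemma~\ref{lem:trivial} together with the self-consistency conditions of the rules. The only subtle point is to recognise that the second disjunct of $ResetMatch$ was designed precisely to handle this pathological situation where a $First$ node has failed to obtain a reciprocated pointer, so that without examining it carefully one might be tempted into a longer chain-of-augmenting-paths argument like the one used for Lemma~\ref{2}, which is unnecessary here.
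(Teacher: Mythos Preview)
Your proof is correct and follows essentially the same route as the paper's: assume stability, use Lemma~\ref{lem:trivial} to get $\Match(u)=x$ and $\Match(v)=y$, split on which of $u,v$ is \emph{First}, deduce $s_u=False$ from the failure of $p_{p_u}=u$ in the $MatchFirst$ guard, and then trigger the second disjunct of $ResetMatch$ on the \emph{Second} node. The paper's write-up is slightly terser but the logical content is identical.
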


\begin{proof}
By contradiction, assume $C$ is stable.  From Lemma~\ref{lem:trivial}, $\Match(u)= x$. Assume to begin that $AskFirst(u) \neq null$. Because $p_{p_{u}} \neq u$ we have $s_{u} = False$, otherwise  $u$ is eligible for $MatchFirst$. Since $AskSecond(v)\neq null$ and $s_{m_{v}}  =s_{u} = False$ then $v$ can apply the $ResetMatch$ rule which yields a contradiction. Therefore assume that $AskSecond(u) \neq null$. The situation is symmetric (because now $AskFirst(v)\neq null$) and therefore we get the same contradiction as before.
\end{proof}


\begin{lemma}\label{4}
Let $(x, u, v, y)$ be a 3-augmenting path on $(G,\M)$. Let $C$ be a configuration. If $p_{y}= p_u=p_v=p_{y}=null$  holds in $C$ (see Figure \ref{30eme}), then $C$ is not stable.
\end{lemma}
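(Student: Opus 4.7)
I will argue by contradiction: assume $C$ is stable and $p_x=p_u=p_v=p_y=null$, and derive that one of $u,v$ has $AskFirst\neq null$, which by Lemma~\ref{lem:trivial} forces its $p$-value to be non-null, a contradiction.

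\textbf{Step 1: single endpoints have $end=False$ and are eligible candidates.} Since $p_x=null$ in the stable configuration $C$, the rule \textsf{SingleNode-ResetEnd} cannot be activable, so $end_x=False$; symmetrically $end_y=False$. Hence $x$ satisfies the predicate $p_x=u\,\vee\,end_x=False$ appearing in $BestRematch(u)$ (since $x\in single(\Voisin(u))$), and likewise $y$ satisfies the predicate used by $BestRematch(v)$.

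\textbf{Step 2: $\alpha_u,\alpha_v\neq null$.} Because $p_u=null$, the $Update$ rule is not activable, hence $(\alpha_u,\beta_u)=BestRematch(u)$, and moreover $\alpha_u\neq\beta_u$ unless both are $null$. The set used to compute $\alpha_u$ is non-empty by Step 1 (it contains $x$), so $\alpha_u\neq null$. The same argument with $v$ and $y$ gives $\alpha_v\neq null$.

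\textbf{Step 3: the uniqueness condition of $AskFirst$ holds.} Since $\alpha_u\neq null$, the pair $(\alpha_u,\beta_u)$ contains at least one non-null element, and by Step 2 either $\beta_u=null\neq\alpha_u$ or $\beta_u\neq\alpha_u$ is another non-null node (identifiers are unique); in either case $\{\alpha_u,\beta_u\}$ already provides two distinct values in the multiset, so $Unique(\{\alpha_u,\beta_u,\alpha_v,\beta_v\})\ge 2$. Together with $\alpha_u\neq null$ and $\alpha_{m_u}=\alpha_v\neq null$, the outer guard of $AskFirst$ is satisfied at both $u$ and $v$.

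\textbf{Step 4: at least one of $u,v$ has $AskFirst\neq null$.} I do a short case analysis on the inner test of $AskFirst$, using that identifiers are unique:
\begin{itemize}
\itemshort
\item If $\alpha_u\neq\alpha_v$, then the endpoint with the smaller $\alpha$ satisfies the first disjunct, so its $AskFirst$ returns its $\alpha$.
\item If $\alpha_u=\alpha_v$ and $\beta_u=null$, then $AskFirst(u)=\alpha_u\neq null$; symmetrically if $\beta_v=null$.
\item If $\alpha_u=\alpha_v$ and both $\beta_u,\beta_v\neq null$, then the third disjunct is triggered at whichever of $u,v$ has the smaller identifier.
\end{itemize}
In every situation, at least one of $AskFirst(u)$, $AskFirst(v)$ is non-null.

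\textbf{Step 5: conclusion.} Let $w\in\{u,v\}$ be a matched node with $AskFirst(w)\neq null$. Then $\Match(w)\neq null$, so Lemma~\ref{lem:trivial} gives $p_w=\Match(w)\neq null$ in the stable configuration $C$, contradicting the hypothesis $p_u=p_v=null$.

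\textbf{Expected difficulty.} The only delicate point is Step 4: the inner predicate of $AskFirst$ is deliberately asymmetric (with tie-breaks via $\beta$ being $null$ and via identifiers), and one must check that these tie-breaking clauses exactly cover all sub-cases so that at least one of $u,v$ qualifies. Everything else is a straightforward unfolding of definitions and a direct appeal to Lemma~\ref{lem:trivial}.
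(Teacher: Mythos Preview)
Your proof is correct and follows the same approach as the paper's: assume stability, use non-eligibility for \textsf{SingleNode--ResetEnd} to get $end_x=end_y=False$, use non-eligibility for \textsf{Update} (with $p_u=p_v=null$) to get $(\alpha_u,\beta_u)=BestRematch(u)$ and $(\alpha_v,\beta_v)=BestRematch(v)$, hence $\alpha_u,\alpha_v\neq null$, and then contradict $p_u=p_v=null$ via Lemma~\ref{lem:trivial}. The paper's proof is terser---it simply asserts $\Match(u)\neq null$ and $\Match(v)\neq null$ without unpacking $AskFirst$---whereas your Steps~3--4 make the case analysis explicit; the only minor point is that your Step~3 argument for $Unique\geq 2$ relies on counting $null$ as a value in the multiset, which is the natural reading, but if one prefers not to, the same conclusion follows because $x\neq y$ rules out the case $\alpha_u=\alpha_v$ with $\beta_u=\beta_v=null$.
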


\begin{proof}
By contradiction, assume $C$ is stable. $end_x=False$ (resp. $end_y=False$), otherwise $x$ (resp. $y$) is eligible for a \emph{ResetMatch}.  $(\alpha_u,\beta_u)=BestRematch(u)$ (resp. $(\alpha_v,\beta_v)=BestRematch(v)$), otherwise $u$ (resp. $v$) is eligible for an \emph{Update}. Thus, there is at least an available single node for $u$ and $v$ and so $\Match(u)\neq null$ and $\Match(v)\neq null$. Then, this contradicts the fact that  $\Match(u)= null$ (see Lemma~\ref{lem:trivial}).
\end{proof}


\begin{theorem}
\label{corr}
In a stable configuration we have, $\forall(u,v)\in \M$:
\begin{itemize}
\itemshort
\item  $p_{u}=p_{v}=null$ or
\item $\exists x\in single(N(u)), \exists y\in single(N(v))$ with $x\neq y$ such that $p_{x}=u \wedge p_{u}=x \wedge p_{y}=v \wedge p_v=y$. 
\end{itemize}
\end{theorem}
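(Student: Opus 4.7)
My plan is to proceed by a case analysis on the pair $(p_u, p_v)$ for $(u,v) \in \M$, invoking the preceding lemmas to eliminate every configuration that does not fall into one of the two disjuncts of the conclusion.

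First, by Lemma \ref{lem:3} applied to $(u,v)$ and then with $u$ and $v$ swapped, a stable configuration must satisfy $p_u = null$ iff $p_v = null$; the ``both null'' branch is exactly the first disjunct of the theorem. I may therefore assume henceforth that $p_u, p_v \neq null$ and set $x := p_u$, $y := p_v$. The clause $p_u \notin (single(N(u)) \cup \{null\})$ in the trigger of the Update rule, together with $p_u \neq null$ and stability, forces $x \in single(N(u))$, and symmetrically $y \in single(N(v))$.

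What remains is to prove $x \neq y$ and $p_x = u \land p_y = v$. For distinctness I invoke Lemma \ref{lem:utile} (the case depicted in Figure \ref{1er}), which forbids $u$ and $v$ from pointing at a common single neighbour in a stable state. Once $x \neq y$ is secured, the tuple $(x, u, v, y)$ is a $3$-augmenting path on $(G, \M)$, so I can distinguish the three sub-cases in which at least one of $p_x = u$, $p_y = v$ fails. If $p_x \neq u$ and $p_y \neq v$, the configuration is exactly the one of Figure \ref{3eme}, forbidden by Lemma \ref{5}. If $p_x = u$ and $p_y \neq v$, the pointer hypotheses of Lemma \ref{2} are met under the renaming $(x_1, u_1, v_1, x_2) := (x, u, v, y)$, a contradiction. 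The last case, $p_x \neq u$ and $p_y = v$, is handled symmetrically by applying Lemma \ref{2} to the reversed $3$-augmenting path $(y, v, u, x)$ via the renaming $(x_1, u_1, v_1, x_2) := (y, v, u, x)$.

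The step I expect to require the most care is this symmetric application of Lemma \ref{2}: one must confirm that the $3$-augmenting path notion is invariant under reversal and that the four pointer conditions align correctly after the relabelling. Everything else reduces to direct, almost mechanical, applications of the lemmas already established.
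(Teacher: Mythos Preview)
Your case analysis for $x\neq y$ matches the paper's argument almost verbatim, including the symmetric application of Lemma~\ref{2} to the reversed augmenting path. The difficulty is your treatment of the case $x=y$.

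You invoke Lemma~\ref{lem:utile} to rule out $p_u=p_v$. In the paper, however, Lemma~\ref{lem:utile} is stated \emph{after} Theorem~\ref{corr} and its proof reads ``Since $p_u\neq null$, by Theorem~\ref{corr} the result holds.'' Using it here is therefore circular. Moreover, even ignoring the circularity, Lemma~\ref{lem:utile} is a statement about a single node $x$ with $p_x\neq null$; it does not directly speak to two matched nodes sharing a $p$-target, so you would still need an intermediate argument (and that argument would have to handle the sub-case $p_x\notin\{u,v\}$, where the lemma gives you nothing).

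The paper disposes of $x=y$ by a short direct computation with the predicates: in a stable configuration $p_u=\Match(u)$ and $p_v=\Match(v)$ (Lemma~\ref{lem:trivial}); assuming w.l.o.g.\ that $u$ is First, one gets $x=AskFirst(u)=\alpha_u$ while $x=AskSecond(v)=Lowest(\{\alpha_v,\beta_v\}\setminus\{\alpha_{m_v}\})=Lowest(\{\alpha_v,\beta_v\}\setminus\{\alpha_u\})$, which by construction cannot equal $\alpha_u$. Replace your appeal to Lemma~\ref{lem:utile} with this argument and your proof goes through.
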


\begin{proof}
We will prove that all cases but these two are not possible in a stable configuration.
First,  Lemma \ref{lem:3} says the configuration cannot be stable if exactly one of $p_u$ or $p_v$ is not $null$. 

Second, assume that  $p_u\neq null \land p_v \neq null$. Let $p_{u}=x$ and $p_{v}=y$. Observe that $x\in single(N(u))$ (resp. $y\in single(N(v))$), otherwise $u$ (resp. $v$) is eligible for \emph{Update}.

Case $x\neq y$: If $p_{x}\neq u $ and $p_{y}\neq v$ then Lemma \ref{5} says the configuration cannot be stable. If $p_{x} = u $ and $p_{y}\neq v$ then Lemma \ref{2} says the configuration cannot be stable. Thus, the only remaining possibility when $p_u\neq null$ and $p_v \neq null$ is: $p_{x} = u $ and $p_{y}= v$.

Case $x=y$: $Ask(u)\null$ (resp. $Ask(v)\neq null$), otherwise $u$ (resp. $v$) is eligible for a \emph{ResetMatch}. W.l.o.g. let us assume that $u$ is First. $x=AskFirst(u)$ (resp. x=AskSecond(v)), otherwise $u$ (resp. $v$) is eligible for \emph{MatchFirst} (resp. \emph{MatchSecond}). Thus $AskFirst(u) = AskSecond(v)$ which is impossible according to these two predicates.
\end{proof}


\begin{lemma} \label{lem:utile}
Let $x$ be a single node. In a stable configuration, if $p_x=u, u\neq null$ then it exists a 3-augmenting path $(x,u,v,y)$ on $(G,\M)$ such that $p_x=u \land p_u=x \land p_v=y \land p_y=v$.
\end{lemma}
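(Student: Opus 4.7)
The plan is to chain together Lemma~\ref{lem:trivial} and Theorem~\ref{corr} to extract the required path from the hypothesis $p_x = u \neq null$.

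First I would apply the second property of Lemma~\ref{lem:trivial} to $x$: since $p_x = u \neq null$ in a stable configuration, this immediately gives $u \in matched(N(x))$ and $p_u = x$. Thus $u$ is a matched node with a well-defined mate; I would set $v := m_u$, so that $(u,v) \in \M$.

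Next I would feed the edge $(u,v)$ into Theorem~\ref{corr}. Since $p_u = x \neq null$, the first alternative of the theorem is excluded, so the second alternative must hold: there exist $x' \in single(N(u))$ and $y \in single(N(v))$ with $x' \neq y$ satisfying $p_{x'} = u$, $p_u = x'$, $p_y = v$, $p_v = y$. From $p_u = x = x'$ I would conclude $x' = x$, which yields the pointer configuration $p_x = u \land p_u = x \land p_v = y \land p_y = v$ claimed in the lemma, together with $x \neq y$, $x \in single(N(u))$, $y \in single(N(v))$.

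Finally I would verify that $(x, u, v, y)$ is indeed a 3-augmenting path on $(G, \M)$. The edge $(u,v)$ lies in $\M$ by construction; the edges $(x,u)$ and $(v,y)$ lie in $E \setminus \M$ because $x$ and $y$ are single and hence cannot be incident to any edge of $\M$. All four nodes are distinct: $x \neq y$ comes from Theorem~\ref{corr}; $u \neq v$ since they are the two endpoints of a matching edge; and any equality of a single node with a matched node is ruled out by the stability of the underlying maximal matching. There is no real obstacle here — the proof is essentially a one-line corollary of Theorem~\ref{corr} once Lemma~\ref{lem:trivial} is used to identify $u$'s mate $v$; the only thing worth double-checking is that the single/matched partition forces $x \notin \{u,v\}$ and $y \notin \{u,v\}$.
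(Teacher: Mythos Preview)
Your proposal is correct and follows exactly the same approach as the paper: invoke Lemma~\ref{lem:trivial} to obtain $u\in matched(N(x))$ and $p_u=x$, then apply Theorem~\ref{corr} to the matched edge $(u,m_u)$ and rule out the first alternative since $p_u\neq null$. The paper's proof is a two-line version of what you wrote; your additional verification that $(x,u,v,y)$ satisfies the definition of a 3-augmenting path is a reasonable expansion of what the paper leaves implicit.
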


\begin{proof}
By lemma \ref{lem:trivial}, if $p_x=u$ with $u\neq null$ then $u\in matched(N(x))$ and $p_u=x$. Since $p_u\neq null$, by Theorem \ref{corr} the result holds. 
\end{proof}

Observe that according to this Lemma, cases from Figure~\ref{1er} are impossible. \\


Thus, in a stable configuration, for all edges $(u,v)\in \M$, if $p_u = p_v = null$ then $(u,v)$ does not belong to a 3-augmenting-path on $(G,\MM)$. In other words, we obtain:

\begin{corollary}
In a stable configuration, there is no 3-augmenting path on $(G,\MM)$ left.
\end{corollary}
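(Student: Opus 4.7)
The plan is to argue by contradiction: suppose $P = (\alpha, \beta, \gamma, \delta)$ is a 3-augmenting path on $(G, \MM)$ in some stable configuration.

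First I would pin down the status of the endpoints in the underlying matching $\M$. By Theorem \ref{corr}, every $\M$-matched node is also matched in $\MM$ (either the $\M$-edge survives with $p = null$ on both endpoints, or both endpoints get rematched to distinct single neighbors). Since $\alpha$ and $\delta$ are $\MM$-unmatched, they must therefore be single in $\M$. Lemma \ref{lem:trivial} then tells us that any single node $x$ with $p_x \neq null$ is mutually paired with its target and hence $\MM$-matched, which forces $p_\alpha = p_\delta = null$.

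Then I would split on which disjunct of Definition \ref{defMM} produces the middle edge $(\beta, \gamma) \in \MM$. If $(\beta, \gamma) \in \M$ with $p_\beta = p_\gamma = null$, then $(\alpha, \beta, \gamma, \delta)$ is also a 3-augmenting path on $(G, \M)$ with all four $p$-variables equal to $null$, which contradicts Lemma \ref{4}; this recovers the statement highlighted in the paragraph immediately preceding the corollary. If instead $(\beta, \gamma) \in E \setminus \M$ with $p_\beta = \gamma$ and $p_\gamma = \beta$, then Lemma \ref{lem:trivial} forces a matched node's pointer into its single neighborhood and a single node's pointer into its matched neighborhood, so exactly one of $\beta, \gamma$ is matched in $\M$ and the other single, say $\gamma$ single. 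But $\delta$, a neighbor of $\gamma$, was just shown to be single in $\M$ as well, so the edge $(\gamma, \delta) \in E$ connects two $\M$-single nodes and could be added to $\M$, contradicting the maximality of $\M$.

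The step I expect to be the main obstacle is this second subcase: none of the correctness lemmas directly forbids a non-$\M$ middle edge on its own, so the contradiction does not come from the algorithm's internal dynamics but from the standing assumption that the underlying matching $\M$ is maximal. Noticing that maximality of $\M$---rather than fishing for yet another invariant on the $p$-variables---is what bridges the preceding sentence (which only addresses $\M$-middle-edges) to the full statement of the corollary.
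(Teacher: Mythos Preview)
Your argument is correct and in fact more complete than what the paper provides. The paper's ``proof'' of this corollary is literally the phrase ``In other words'' following the sentence that handles only your first subcase (middle edge in $\M$ with both pointers $null$); the second subcase---middle edge a rematch edge $(\beta,\gamma)\in E\setminus\M$ with $p_\beta=\gamma$, $p_\gamma=\beta$---is not addressed at all. Your reduction of the first subcase to Lemma~\ref{4} via the contrapositive of Theorem~\ref{corr} and Lemma~\ref{lem:trivial} is the natural unpacking of that sentence, and your instinct that the second subcase is closed not by any invariant of the algorithm but by the maximality of $\M$ is exactly right: one endpoint of the rematch edge is $\M$-single, its neighbor on the outside of the path is also $\M$-single (since $\MM$-unmatched), and the edge between them violates maximality.

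One minor attribution point: the claim that a matched node's pointer lies in $single(N(u))\cup\{null\}$ in a stable configuration is not literally contained in Lemma~\ref{lem:trivial}; it follows from stability with respect to the first line of the \emph{Update} guard (or equivalently from Lemma~\ref{lem:trivial} together with the definition of $AskFirst$/$AskSecond$ and stability of the $\alpha,\beta$ variables). This does not affect the validity of your argument.
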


\section{Convergence Proof}

We start by giving a sketch of the convergence proof. In the following, $\matchNB$ denotes the number of matched nodes and $\singleNB$ the number of single nodes.

First, we focus on the variable $end$.  Recall that for a matched edge $(u,v)$, the variable $end$ indicates if both $u$ and $v$ have performed a successful rematch or not. In section \ref{section:part1}, we prove the maximum number of times a matched node $u$ can write $True$ in its  variable $end$ is $2$. The idea is that  only one writing can correspond to an incorrect initialization of the node $m_{u}$.  

\begin{duplicateTheorem}[\ref{nber}]
In any execution, a matched node $u$ can write  $end_{u} := True$ at most twice.
\end{duplicateTheorem}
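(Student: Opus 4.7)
The plan is to enumerate the rules that a matched node $u$ can use to assign $True$ to $end_u$, which turn out to be only \emph{MatchFirst} and \emph{MatchSecond}, extract the preconditions each requires on the joint state of $u$ and its mate $v=m_u$, and then argue that a second True write at $u$ forces a True-producing match move at $v$ in between. Combined with a symmetric statement for $v$, this yields an alternation of True writes between $u$ and $v$ that bounds the number of True writes at $u$ by two, the extra slack coming from an arbitrary initial configuration of $v$.

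First, I would note that \emph{Update} and \emph{ResetMatch} both assign $False$ to $end_u$, so only \emph{MatchFirst} and \emph{MatchSecond} can produce a True write. By direct inspection, \emph{MatchFirst} writes $end_u := True$ only when $p_u = AskFirst(u)$, $p_{p_u}=u$, $s_u = True$, $p_v = AskSecond(v)$ and $end_v = True$ all hold in the configuration preceding the move, and \emph{MatchSecond} writes $end_u := True$ only when $p_u = AskSecond(u)$, $p_{p_u}=u$, $p_v = AskFirst(v)$ and (from the guard) $s_v = True$ all hold. Immediately after such a move, $u$'s local state is ``calibrated'' ($p_u$, $s_u$ and $end_u$ equal their just-computed values), so the same rule is no longer activable producing the same value; any further True write at $u$ therefore requires at least one of the values read from $v$ (typically $p_v$, $s_v$ or $end_v$) to have changed in the interim.

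The core of the argument is to show that such an intermediate change can only be brought about by $v$ itself executing a True-producing match move. I would run a case analysis on which rule $v$ fires first between two consecutive True writes at $u$: an \emph{Update} or \emph{ResetMatch} at $v$ would invalidate the $Ask$-pattern needed for $u$'s next True write, and a match rule at $v$ that writes $False$ would leave $s_v$ or $end_v$ incompatible with $u$'s next True-producing preconditions. Hence every True write at $u$ after the first must be preceded by a True write at $v$, and symmetrically; True writes at the two endpoints of the matched edge therefore alternate, and an arbitrary initial state of $v$ may ``seed'' at most one such alternation without a previous True write of $v$ having occurred, giving the bound of two. The main obstacle is making the case analysis on $v$'s intermediate moves exhaustive under the adversarial daemon; I expect this step to rely on the priority ordering of the rules and on the determinism of $BestRematch$ once the neighbourhood of $u$ is stable.
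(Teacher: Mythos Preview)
Your alternation argument has a structural gap that prevents it from yielding the bound. Even granting the claim that every True write at $u$ after the first is preceded by a True write at $v$ (and symmetrically), alternation by itself permits an unbounded sequence $u,v,u,v,\ldots$ of True writes: once each of $u$ and $v$ has written $True$ once, each subsequent write at one endpoint is legitimately ``preceded'' by the previous write at the other, and nothing in your argument stops this from continuing. The sentence about the initial state of $v$ ``seeding at most one such alternation'' does not close the loop; it only explains why $u$'s \emph{first} True write need not be preceded by one of $v$, not why the alternation terminates after two rounds. Your case analysis on $v$'s intermediate move also overstates matters: an \emph{Update} or \emph{ResetMatch} at $v$ does not permanently invalidate the $Ask$-pattern, since $v$ may subsequently execute match moves that restore it.

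The paper's proof takes a different route and does not argue via alternation. It shows instead that immediately after $u$'s \emph{second} True write, the pair $(u,m_u)$ reaches a ``$stop_{uv}$'' configuration from which neither node is ever eligible again. The auxiliary fact used is weaker than yours: between $u$'s two True writes, $v=m_u$ must have executed \emph{some} rule (not necessarily a True-producing one). That single prior move of $v$ is enough to activate invariants on $v$'s state---e.g.\ that $s_v=True$ forces $p_v\in single(N(v))$ with $p_{p_v}=v$, and that the first line of the \emph{Update} guard is permanently false for $v$---which, combined with the preconditions of $u$'s second True write, pin down the full joint state of $u$ and $v$ precisely enough to conclude that both are frozen. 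The termination thus comes from reaching an absorbing state, not from a counting/alternation argument.
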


This theorem is the key point of the convergence proof.  Indeed, this result exhibits an action that can be performed only a finite number of time. Thus, with this result, we have a first strong step leading to the proof of the silent property of our algorithm.\\

After considering the $end$ variable of matched nodes in section \ref{section:part1}, we focus on the $end$ variable of single nodes in section \ref{section:part2}. Since single nodes just follow orders from their neighboring matched nodes (Lemma~\ref{theol}),  
we can count the number of times single nodes can change the value of their $end$ variable as described above. 
There are $\singleNB$ possible modifications due to bad initializations. Moreover, a matched node $u$ can write  $True$ twice in  $end_{u}$, so $end_{u}$ can be $True$ during $3$ distinct sub-executions. As a single node $x$ copies the $end$-value of the matched node it points to ($p_{x}=u$), then a single node can write $True$ in its $end$-variable at most $3$ times as well. So we obtain $6\matchNB$ modifications.

\begin{duplicateLemma}[\ref{big}]
In any execution, the number of transitions where a single node changes the value of its $end$ variables (from $True$ to $False$ or from $False$ to $True$) is  at most $\singleNB+6 \matchNB$ times.
\end{duplicateLemma}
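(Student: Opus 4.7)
The plan is to examine the three rules that can modify $end_x$ for a single node $x$ and then use Theorem \ref{nber} to charge each $end_x$-transition to either an initial configuration or a matched neighbor. I would first observe that only \emph{SingleNode-UpdateEnd} can turn $end_x$ from $False$ to $True$, since the other two rules (\emph{SingleNode-ResetEnd} and \emph{SingleNode-UpdateP}) only write $False$. Moreover, the guard of \emph{SingleNode-UpdateEnd} forces $p_x = u$ and $p_u = x$ for some matched neighbor $u$, so every False-to-True transition at $x$ copies the value $end_u = True$ from $x$'s matched partner.

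Next I would count the False-to-True transitions. By Theorem \ref{nber} each matched node $u$ performs at most two True-writes on $end_u$, so together with a possible initial True-interval, $end_u = True$ holds on at most three maximal sub-executions. I would then argue that within any such interval, $p_u$ cannot change: inspecting the four matched-node rules, \emph{Update} and \emph{ResetMatch} always set $end_u := False$ (ending the interval), while \emph{MatchFirst} and \emph{MatchSecond} can preserve $end_u = True$ only if $p_u$ already equals $\mathit{AskFirst}(u)$ or $\mathit{AskSecond}(u)$ respectively, so their assignment to $p_u$ is idempotent. Consequently $p_u$ takes a fixed value $y$ throughout the interval, only $y$ can execute \emph{SingleNode-UpdateEnd}, and once $y$ has copied the value the rule becomes disabled and cannot be re-enabled within the interval (neither \emph{SingleNode-UpdateP} nor \emph{SingleNode-ResetEnd} is enabled when $p_y = u$ and $p_u = y$). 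Hence each interval contributes at most one False-to-True transition at a single node, yielding at most $3\matchNB$ such transitions overall.

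Finally I would bound the True-to-False transitions by a parity argument: because $end_x$ is boolean, every True-to-False transition is preceded either by an initial configuration with $end_x = True$ (contributing at most $\singleNB$ such transitions, one per single node) or by an earlier False-to-True transition (contributing at most $3\matchNB$ by the previous step). So True-to-False transitions are at most $\singleNB + 3\matchNB$, and adding the $3\matchNB$ False-to-True transitions gives the desired bound of $\singleNB + 6\matchNB$.

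The main obstacle will be the case analysis needed to show that $p_u$ cannot change while $end_u$ remains $True$. Each of the four matched-node rules must be inspected carefully, because the way their guards and the expression assigned to $end_u$ interact with the current $p_u$ (and with $s_u$ and $end_{m_u}$) is delicate, especially for \emph{MatchFirst} whose new $end_u$ depends on a conjunction explicitly involving the old $p_u$.
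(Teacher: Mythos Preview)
Your proposal is correct and follows essentially the same approach as the paper. The paper decomposes the argument into a chain of auxiliary lemmas (Lemmas~\ref{one}, \ref{pa}, \ref{theol}, \ref{kk}) that establish exactly the facts you outline: that a single node writes $True$ only by copying it from a matched neighbor $u$ with $p_u=x$, that $p_u$ stays constant throughout any maximal interval where $end_u=True$, and that therefore each of the at most three such intervals (one initial plus two from Theorem~\ref{nber}) contributes at most one copy, giving $3\matchNB$ False-to-True transitions; the final parity count is identical to yours.
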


In section \ref{section:part3}, we count the maximal number of $Update$ rule that can be performed in any execution. To do that, we observe that the first line of the $Update$ guard can be $True$ at most once in an execution (Lemma~\ref{lem:UpdateL1}). Then we prove for the second line of the guard to be $True$, a single node has to change its $end$ value (Lemma \ref{flip}).  Thus, for each single node modification of the $end-$value, at most all matched neighbors  of this single node can perform one $Update$ rule.

\begin{duplicateCorollary}[\ref{update}]
Matched nodes can execute at most $\Delta(\singleNB+6 \matchNB) + \matchNB$ times the $Update$ rule.
\end{duplicateCorollary}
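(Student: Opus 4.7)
The plan is to partition the executions of the \emph{Update} rule according to which of the two lines of its guard triggered it, and then bound each class separately. More precisely, when a matched node $u$ performs \emph{Update} in configuration $C$, its guard is satisfied because either the disjunction on the first line holds (a ``structural'' inconsistency in $\alpha_u$, $\beta_u$ or $p_u$), or the expression on the second line holds (an inconsistency between $(\alpha_u,\beta_u)$ and \emph{BestRematch}$(u)$, combined with a $p_u$-status condition). I assign each \emph{Update} execution to the first line whenever that line is true, and to the second line otherwise.

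For line-1 \emph{Update} executions, Lemma~\ref{lem:UpdateL1} tells us that at any given matched node $u$, the first line of the guard can be true at most once along the whole execution. Summing over the $\matchNB$ matched nodes yields at most $\matchNB$ line-1 \emph{Update} executions in total.

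For line-2 \emph{Update} executions, the main ingredient is Lemma~\ref{flip}: for the second line to evaluate to true at a node $u$, some single neighbor of $u$ must have flipped its $end$ variable since the last time $u$ executed \emph{Update}. I would make this precise by observing that whenever $u$ performs an \emph{Update}, its assignment $(\alpha_u,\beta_u) := \textit{BestRematch}(u)$ and $(p_u,s_u,end_u) := (null,False,False)$ restores the consistency required for the line-1 disjunction to fail and for the first conjunct of line~2, $(\alpha_u,\beta_u) \ne \textit{BestRematch}(u)$, to fail as well. Thus for $u$ to become eligible again via line~2, the value of $\textit{BestRematch}(u)$ must strictly change, which — by inspecting the definition of $\textit{BestRematch}$ — requires some single neighbor $x \in single(N(u))$ to toggle either its $p_x$ pointer toward/away from $u$ or its $end_x$ flag; both are controlled by Lemma~\ref{flip}'s reduction to an $end$-flip of a single node. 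Consequently the number of line-2 \emph{Update}s at $u$ is at most the number of $end$-flips performed by nodes of $single(N(u))$.

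Summing over all matched nodes, each $end$-flip of a single node $x$ is charged to at most $|N(x) \cap matched(V)| \leq \Delta$ matched neighbors, so the total number of line-2 \emph{Update}s is at most $\Delta$ times the total number of $end$-flips of single nodes. By Lemma~\ref{big} the latter is bounded by $\singleNB + 6\matchNB$, giving at most $\Delta(\singleNB + 6\matchNB)$ line-2 \emph{Update}s. Adding the two contributions yields the claimed bound $\Delta(\singleNB + 6\matchNB) + \matchNB$. The main obstacle I expect is verifying the careful reduction used in the second bullet: one has to make sure that a line-2 re-eligibility at $u$ really can be traced back to an $end$-flip of a single neighbor (rather than, say, a mere change of $p_{p_u}$ or $end_{p_u}$ in the conditional clause $(p_u=null \vee (p_{p_u}\neq u \land end_{p_u}=True))$). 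This is where Lemma~\ref{flip} must be invoked at the right granularity, and where a case analysis on which conjunct of line~2 newly became true is needed.
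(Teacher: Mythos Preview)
Your proposal is essentially the paper's own argument, with one bookkeeping wrinkle. The paper does not split by which line of the guard fires; it simply charges $\matchNB$ for the \emph{first} \emph{Update} of each matched node (which may be triggered by line~1 or line~2), and then invokes Lemma~\ref{flip} to say that every \emph{subsequent} eligibility of $u$ for \emph{Update}---necessarily via line~2, by Lemma~\ref{lem:UpdateL1}---forces an $end$-flip of some single neighbor, yielding $\Delta(\singleNB+6\matchNB)$ via Lemma~\ref{big}.

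Your line-1/line-2 partition leaves a small hole: if $u$'s very first \emph{Update} is triggered by line~2 (line~1 being already false in the initial configuration), that execution is neither a line-1 \emph{Update} (so it is not absorbed by your $\matchNB$ term) nor a ``re-eligibility after a previous \emph{Update}'' (so your end-flip charging does not apply to it either). Your sentence ``the number of line-2 \emph{Update}s at $u$ is at most the number of $end$-flips performed by nodes of $single(N(u))$'' is therefore not justified as stated. The fix is exactly the paper's framing: charge the first \emph{Update} of each node---whatever triggers it---to the $\matchNB$ budget; all later \emph{Update}s are then automatically line-2 and are covered by Lemma~\ref{flip}. With that adjustment the two proofs coincide. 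Your closing worry (about the $p_{p_u}$/$end_{p_u}$ clause of line~2 causing eligibility without an $end$-flip) is already handled: Lemma~\ref{flip} applies to the full guard and guarantees the $end$-flip directly.
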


In section \ref{section:part4}, we count the maximal number of moves performed by matched nodes between two $Update$. The idea is that in an execution without  $Update$, $\alpha$ and $\beta$ values of all matched nodes remain  constant. Thus, in these small executions, at most one augmenting path is detected per matched edge and at most one rematch attempt is performed  per matched edge. We obtain that the maximal number of moves of a matched node  in these  small executions is  $12$ (Lemma~\ref{bupd}).  By the previous remark and Corollary~\ref{update},  we obtain: 

\begin{duplicateTheorem}[\ref{matched}]
In any execution, matched nodes can execute at most $12\matchNB  (\Delta(\singleNB+6 \matchNB) + \matchNB)$ rules.
\end{duplicateTheorem}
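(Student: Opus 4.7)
The plan is a direct composition of Corollary~\ref{update} with Lemma~\ref{bupd}. From Corollary~\ref{update}, the total number of $Update$ rules executed by matched nodes in any maximal execution is at most $U := \Delta(\singleNB+6\matchNB)+\matchNB$. These $U$ $Update$ transitions cut the execution into at most $U+1$ maximal sub-executions during which no matched node executes an $Update$, which is the regime targeted by Lemma~\ref{bupd}.

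Inside every such sub-execution the pointers $\alpha_w$ and $\beta_w$ of every matched node $w$ are frozen, since only the $Update$ rule rewrites them. So during such a sub-execution $w$ can only modify its local state through $MatchFirst$, $MatchSecond$ and $ResetMatch$ moves, all of which depend on the frozen pair $(\alpha_w,\beta_w)$ (and on the analogous frozen pair of $m_w$). Lemma~\ref{bupd} bounds by $12$ the number of such moves performed by a single matched node in one sub-execution, so summing over the $\matchNB$ matched nodes we obtain at most $12\matchNB$ non-$Update$ matched-node moves per $Update$-free sub-execution. Charging each such sub-execution to the $Update$ that terminates it (the initial segment before the first $Update$ being absorbed into the coefficient), the total number of rules executed by matched nodes is bounded by $12\matchNB \cdot U = 12\matchNB\,(\Delta(\singleNB+6\matchNB)+\matchNB)$, which is exactly the announced bound.

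The accounting above is essentially a product of two already established bounds, so the real difficulty of the argument lies upstream, in Lemma~\ref{bupd}: one has to case-split on whether $AskFirst(w)$ and $AskSecond(w)$ are $null$ or not, and enumerate the short cycle of consistent values $(p_w, s_w, end_w)$ that $w$ can traverse via $MatchFirst$, $MatchSecond$ and $ResetMatch$ while $(\alpha_w,\beta_w)$ and $(\alpha_{m_w},\beta_{m_w})$ remain fixed, showing that the guards of all these rules become simultaneously false after a bounded number of transitions (the constant $12$). Once Lemma~\ref{bupd} is in hand, Theorem~\ref{matched} follows immediately by the two-line multiplicative counting sketched above.
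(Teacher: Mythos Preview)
Your proposal is correct and follows essentially the same route as the paper: combine Corollary~\ref{update} (at most $U=\Delta(\singleNB+6\matchNB)+\matchNB$ executions of \emph{Update}) with Lemma~\ref{bupd} (at most $12$ non-\emph{Update} moves per matched edge in any \emph{Update}-free stretch), then multiply. One small remark: Lemma~\ref{bupd} is stated per matched \emph{edge} ($u$ and $v$ together do at most $12$ moves), not per matched node, so your per-node reading is valid only as a loose upper bound; the paper's own proof uses the same loose reading, and the resulting constant $12\matchNB\,U$ matches.
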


Finally, we count the maximal number of moves that single nodes can perform, counting rule by rule. The $ResetEnd$ is done at most once (Lemma \ref{lem:ResetEndMove}). The number of $UpdateEnd$ is bounded by the number of times single nodes can change their $end$-value, so it  is at most  $\singleNB + 6 \matchNB$ (Lemma~\ref{lem:updateEndMove}). Finally, $UpdateP$ is counted as follows: between two consecutive $UpdateP$ executed by a single node $x$, a matched node has to make a move.  The total number of executed \emph{UpdateP} is then at most  $12\matchNB (\Delta (\singleNB + 6 \matchNB) + \matchNB)+1$ (See Lemma~\ref{lem:updateMove}).  

\begin{duplicateCorollary}[\ref{cor:end}]
The algorithm $\algoName$ converges in $O(n^3)$ steps under the adversarial distributed daemon. 
\end{duplicateCorollary}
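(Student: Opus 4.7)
The plan is to aggregate the move bounds already established for each of the algorithm's rules and verify that the total is $O(n^3)$. First I would invoke Theorem~\ref{matched} to bound the number of moves performed by matched nodes by $12\matchNB(\Delta(\singleNB + 6\matchNB) + \matchNB)$. Then, for the three single-node rules, I would sum the bounds from the rule-specific lemmas: at most one \emph{ResetEnd} per single node by Lemma~\ref{lem:ResetEndMove}, contributing at most $\singleNB$ in total; at most $\singleNB + 6\matchNB$ executions of \emph{UpdateEnd} by Lemma~\ref{lem:updateEndMove}; and at most $12\matchNB(\Delta(\singleNB + 6\matchNB) + \matchNB) + 1$ executions of \emph{UpdateP} by Lemma~\ref{lem:updateMove}.

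Adding these contributions yields a total number of moves bounded by
\[
24\matchNB\bigl(\Delta(\singleNB + 6\matchNB) + \matchNB\bigr) + 2\singleNB + 6\matchNB + 1.
\]
Expanding, the dominant term is of order $\Delta\,\matchNB^{2}$, and with $\matchNB,\singleNB,\Delta \le n$ this gives $O(n^{3})$; every other term is $O(n^{2})$ at worst. Since under the adversarial distributed daemon each step executes at least one rule, the number of steps is bounded above by the number of moves, yielding the claimed $O(n^{3})$ bound.

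The corollary is essentially a routine combination of the four upstream counting arguments; all the delicate reasoning has already been packaged into Theorem~\ref{matched} and Lemmas~\ref{lem:ResetEndMove}--\ref{lem:updateMove}. The only subtlety I would want to double-check carefully is that the bound on \emph{UpdateP} in Lemma~\ref{lem:updateMove} is genuinely a global count rather than a per-single-node count: naively summing a per-node bound over the $\singleNB$ single nodes would inflate the total to $O(n^{4})$. The stated charging scheme — that between two consecutive \emph{UpdateP}'s executed by the same single node $x$ some matched node must move, so that the global matched-node move budget can be made to pay for all \emph{UpdateP}'s simultaneously — is the only genuinely non-mechanical ingredient of this proof, and confirming that reading of Lemma~\ref{lem:updateMove} is where I would expect to spend the most care.
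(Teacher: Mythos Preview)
Your proposal is correct and follows exactly the paper's approach: the paper proves the preceding (unnumbered) corollary by invoking Theorem~\ref{matched} for matched nodes and Lemmas~\ref{lem:ResetEndMove}, \ref{lem:updateEndMove}, \ref{lem:updateMove} for single nodes, then states Corollary~\ref{cor:end} without further proof. Your caution about Lemma~\ref{lem:updateMove} being a global rather than per-node bound is well-placed---the paper's proof of that lemma is terse on precisely this charging point---but the intended and stated reading is indeed global, so your aggregation is the one the paper has in mind.
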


\subsection{A matched node can write $True$ in its $end$-variable at most twice}\label{section:part1}

The first three lemmas are technical lemmas.

\begin{lemma}\label{Trr}
Let $u$ be a matched node. Consider an execution $\Exe$ starting after $u$ executed some rule. Let $C$ be any configuration in $\Exe$. If $end_u=True$ in $C$ then $s_u=True$ as well.
\end{lemma}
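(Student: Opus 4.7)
My plan is to observe that $end_u$ and $s_u$ both belong to $u$'s local state, so only $u$ can modify them. Consequently, the values of $end_u$ and $s_u$ stay constant between consecutive rule executions by $u$. It therefore suffices to inspect the four rules a matched node may execute -- \emph{Update}, \emph{ResetMatch}, \emph{MatchFirst}, \emph{MatchSecond} -- and check that, immediately after any one of them fires, the implication $end_u=True \Rightarrow s_u=True$ holds. Since $\Exe$ starts after $u$ has executed some rule, the implication will hold in the first configuration of $\Exe$, and then the fact that only $u$ writes $end_u$ and $s_u$ propagates it to every subsequent configuration in $\Exe$ (re-verifying it whenever $u$ moves again).

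The easy rules are \emph{Update} and \emph{ResetMatch}: both commands perform the triple assignment $(p_u,s_u,end_u):=(null,False,False)$, so $end_u=False$ after execution and the implication holds vacuously. For \emph{MatchSecond}, the command sets $s_u:=end_u$ using the new value of $end_u$ (the guard-based formula just assigned), so after the rule we literally have $s_u=end_u$, which makes the implication trivial.

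The only step that requires care is \emph{MatchFirst}, because its three assignments are simultaneous: every right-hand side is evaluated with the \emph{old} variable values. The new value of $end_u$ is the conjunction
$p_u{=}AskFirst(u)\,\land\,p_{p_u}{=}u\,\land\,s_u\,\land\,p_{m_u}{=}AskSecond(m_u)\,\land\,end_{m_u}$, while the new value of $s_u$ is
$p_u{=}AskFirst(u)\,\land\,p_{p_u}{=}u\,\land\,p_{m_u}\in\{AskSecond(m_u),null\}$. The former formula strictly implies the latter (in particular $p_{m_u}=AskSecond(m_u)$ implies $p_{m_u}\in\{AskSecond(m_u),null\}$), so whenever the new $end_u$ is $True$ the new $s_u$ is $True$ as well. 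The main -- and really only -- obstacle is being disciplined about old-versus-new values in this simultaneous update; once that is done, the four-case analysis plus the locality argument closes the lemma.
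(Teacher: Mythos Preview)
Your proof is correct and follows essentially the same approach as the paper: both look at the last rule $u$ executed before $C$, dismiss \emph{Update} and \emph{ResetMatch} since they force $end_u=False$, observe that \emph{MatchSecond} literally copies $end_u$ into $s_u$, and for \emph{MatchFirst} note that the boolean expression assigned to $end_u$ implies the one assigned to $s_u$. One minor quibble: you describe the \emph{MatchFirst} assignments as ``simultaneous'' while treating $s_u:=end_u$ in \emph{MatchSecond} as sequential; the paper's model actually specifies commands as a \emph{sequence} of actions, but this inconsistency is harmless here since the \emph{MatchFirst} formula for $s_u$ does not reference $end_u$, so the order is immaterial in that case.
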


\begin{proof}
Let $C_0\mapsto C_1$ be the transition in $\Exe$ in which $u$ executed a rule for the last time before $C$. Observe that $C$ may be equal to $C_1$. The executed rule is necessarily a $match$ rule, otherwise $end_u$ could not be $True$ in $C_1$. If it is a $MatchSecond$ the lemma holds since in that case $s_u$ is a copy of $end_u$. Assume now it is a $MatchFirst$. For $end_u$ to be $True$ in $C_1$,  $p_u=AskFirst(u) \land p_{p_u} = u \land p_{m_u}=AskSecond(m_u)$ must hold in $C_0$, according to the guard of \emph{MatchFirst}.  This implies that $u$ writes $True$ in $s_u$ in transition $C_0\mapsto C_1$.
\end{proof}


\begin{lemma}\label{ess}
Let $u$ be a matched node. Consider an execution $\Exe$ starting after $u$ executed some rule. Let $C$ be any configuration in $\Exe$. In $C$, if $s_u=True$ then $\exists x\in single(N(u)) : p_u =x \land p_x = u$.
\end{lemma}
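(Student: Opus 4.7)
The plan is to locate the most recent move by $u$ at or before $C$, call it $C_0 \mapsto C_1$ with $C_1 \le C$; such a transition exists because $\Exe$ starts just after $u$ executed a rule. Since $u$ makes no further move between $C_1$ and $C$, the values of $s_u$ and $p_u$ at $C$ coincide with those at $C_1$. The goal then reduces to showing that this last rule forces, in $C_1$, the existence of some $x \in single(N(u))$ with $p_u = x$ and $p_x = u$, and that the equality $p_x = u$ is preserved up to $C$.

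The first step is to identify which rule was executed in $C_0 \mapsto C_1$. Both \emph{Update} and \emph{ResetMatch} explicitly assign $s_u := False$ and so cannot leave $s_u = True$ in $C_1$; hence the executed rule is either \emph{MatchFirst} or \emph{MatchSecond}. In either case, inspection of the assignment to $s_u$ shows that $s_u = True$ in $C_1$ requires $p_u = \Match(u) \ne null$ and $p_{p_u} = u$ to hold in $C_0$, where $\Match(u)$ stands for $AskFirst(u)$ in the MatchFirst case and for $AskSecond(u)$ in the MatchSecond case. Setting $x := \Match(u)$, the rule also writes $p_u := x$, so in $C_1$ we have $p_u = x$, and since $x$ is not the moving node its variables are unchanged by the transition, giving $p_x = u$ in $C_1$ as well. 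For the membership, I would invoke the priority ordering: because \emph{MatchFirst} and \emph{MatchSecond} have lower priority than \emph{Update}, the guard of \emph{Update} must be false in $C_0$; in particular the clause $p_u \notin (single(N(u)) \cup \{null\})$ fails, and since $p_u = x \ne null$ this yields $x \in single(N(u))$.

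The remaining step, and the only mildly delicate one, is to prove that $p_x = u$ persists from $C_1$ all the way to $C$. The only rule that can modify $p_x$ is \emph{SingleNode-UpdateP}, whose guard requires $p_x = null$, or $p_x \notin matched(N(x)) \cup \{null\}$, or $p_x \ne null \wedge p_{p_x} \ne x$. As long as $p_x = u$ and $p_u = x$, all three disjuncts fail: $u$ is a matched neighbor of $x$, so $u \in matched(N(x))$, and $p_{p_x} = p_u = x$. Since $u$ makes no move between $C_1$ and $C$, $p_u$ stays equal to $x$, so by a straightforward induction on the configurations traversed in $[C_1, C]$, node $x$ is never activable for \emph{UpdateP} in that interval and $p_x$ remains equal to $u$. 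This produces the required witness $x \in single(N(u))$ with $p_u = x$ and $p_x = u$ at $C$.
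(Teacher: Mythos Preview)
Your approach matches the paper's proof almost step for step: locate the last move of $u$, argue it must be a \emph{match} rule, extract $x=\Match(u)$ with $p_u=x$ and $p_x=u$ in $C_0$, use the priority of \emph{Update} to force $x\in single(N(u))$, and then argue $p_x=u$ persists because $x$ is never eligible for \emph{UpdateP}.

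There is, however, one genuine slip. When you write ``since $x$ is not the moving node its variables are unchanged by the transition, giving $p_x=u$ in $C_1$ as well'', you are treating $C_0\mapsto C_1$ as if only $u$ moves. Under the adversarial distributed daemon any nonempty set of activable nodes may move in a single transition, so $x$ could in principle execute a rule in $C_0\mapsto C_1$ simultaneously with $u$. The remedy is precisely the argument you already deploy for the interval $[C_1,C]$: in $C_0$ one has $p_x=u\in matched(N(x))$ and $p_{p_x}=p_u=x$, so all three disjuncts of the \emph{UpdateP} guard fail and $x$ cannot change $p_x$ in that transition either. The paper handles this point explicitly; once you patch this sentence, your proof is complete and essentially identical to the paper's.
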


\begin{proof}
Consider transition $C_0\mapsto C_1$ in which $u$ executed a rule for the last time before $C$. The executed rule is necessarily a $match$ rule, otherwise $s_u$ could not be $True$ in $C_1$. Observe now that whichever $match$ rule is applied,  $Ask(u)\neq null$ -- let us assume $Ask(u)=x$ -- and $p_u=x$  and $p_x=u$ must hold in $C_0$ for $s_u$ to be $True$ in $C_1$. $p_u=x$ still holds in $C_1$ and until $C$. Moreover, $x$ must be in $single(N(u))$, otherwise $u$ would have executed an \emph{Update} instead of a match rule in $C_0\mapsto C_1$, since \emph{Update} has the higest priority among all rules. Finally, in transition $C_0\mapsto C_1$, $x$ cannot execute $UpdateP$ nor $ResetEnd$ since $p_x\in matched(N(x)) \land p_{p_x}=x$ holds in $C_0$. Thus in $C_1$, $p_u=x$ and $p_x=u$ holds. Using the same argument, $x$ cannot execute $UpdateP$ nor $ResetEnd$ between configurations $C_1$ and $C$. Thus $p_u=x \land p_x=u$ in $C$.
\end{proof}


\begin{lemma}\label{lem:UpdateL1}
Let $u$ be a matched node  and $\Exe$ be an execution containing a transition $C_0\mapsto C_1$ where $u$ makes a move. From $C_1$, the predicate in the first line of the guard of the $Update$ rule will ever hold from $C_1$. 
\end{lemma}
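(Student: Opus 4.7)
The plan is to verify that immediately after any move by $u$ the four disjuncts on the first line of the $Update$ guard are all false, and then to argue that this invariant is preserved in every later configuration. The key observation is that $\alpha_u$, $\beta_u$ and $p_u$ can only be modified by $u$ itself, while $single(N(u))$ is fixed (since the underlying matching $\M$ is stable); so it suffices to examine each configuration produced by a move of $u$.

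I would proceed by a case analysis on which rule $u$ executes in the transition $C_0 \mapsto C_1$. If it is $Update$, I would unfold $BestRematch(u)$ and observe that the returned pair $(a,b)$ is drawn from $single(N(u)) \cup \{null\}$ with $b$ picked from the set with $a$ removed; this directly falsifies the disjuncts $\alpha_u > \beta_u$ (because $a < b$ whenever both are non-$null$), ``both values lie outside $single(N(u))\cup\{null\}$'', and ``$\alpha_u = \beta_u \wedge \alpha_u \neq null$''. The command also sets $p_u := null$, so the fourth disjunct $p_u \notin single(N(u))\cup\{null\}$ is false as well.

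If $u$ executes $MatchFirst$, $MatchSecond$ or $ResetMatch$, then $Update$ was not activable in $C_0$ (it has strictly higher priority), so the first-line disjuncts are already all false in $C_0$; in particular $\alpha_u, \beta_u \in single(N(u)) \cup \{null\}$ in $C_0$. None of these three rules alters $\alpha_u$ or $\beta_u$, so the first three disjuncts remain false in $C_1$. For the fourth disjunct I would just read off the command: $p_u$ becomes $null$ (in $ResetMatch$), or $AskFirst(u) = \alpha_u$ (in $MatchFirst$, whose guard forces $AskFirst(u) \neq null$, hence $\alpha_u \in single(N(u))$), or $AskSecond(u) \in \{\alpha_u,\beta_u\}$ (in $MatchSecond$, whose guard likewise forces a non-$null$ return, placing the value in $single(N(u))$). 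In every sub-case $p_u \in single(N(u)) \cup \{null\}$ in $C_1$.

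I would then conclude by a trivial induction on the moves of $u$ after $C_1$: between two consecutive moves of $u$ none of $\alpha_u$, $\beta_u$, $p_u$ or $single(N(u))$ changes, so the invariant is preserved; and when $u$ moves again it does so from a configuration in which the first line is false, so the same case analysis reapplies. The main subtlety lies in parsing the disjuncts correctly — in particular reading ``$\alpha_u, \beta_u \notin S$'' as ``at least one of them is outside $S$'' and ``$\alpha_u > \beta_u$'' as an identifier comparison that is false when an operand is $null$ — but once the semantics is pinned down the proof reduces to a direct inspection of the four commands.
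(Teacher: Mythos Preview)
Your proposal is correct and follows essentially the same approach as the paper: both arguments perform a case analysis on the rule $u$ executes, use the priority of \emph{Update} to conclude that the first-line disjuncts were already false before any non-\emph{Update} move, and verify directly that the commands of each rule leave $\alpha_u,\beta_u,p_u$ in $single(N(u))\cup\{null\}$. The only cosmetic difference is that the paper fixes an arbitrary later configuration $C_2$ and looks at the \emph{last} move of $u$ before it, whereas you phrase the same thing as a forward induction on the moves of $u$; these are equivalent formulations of the same argument.
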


\begin{proof}
Let $C_2$ be any configuration in $\Exe$ such that $C_2 \geq C_1$. Let $C_{10} \mapsto C_{11}$ be the last transition before $C_2$ in which $u$ executes a move. Notice that by definition of $\Exe$, this transition exists. Assume by contradiction that one of the following predicates holds in $C_2$.

\begin{enumerate}
\itemshort
\item $ (\alpha_u > \beta_u) \vee  (\alpha_u,\beta_u \notin (single(N(u)) \cup \{null \})) \vee (\alpha_u =\beta_u \land \alpha_u \neq null)$
\item $p_u \notin (single(N(u))\cup \{null \})$
\end{enumerate}

By definition between $C_{11}$ and $C_2$, $u$ does not execute rules. To modify the variables $\alpha_u, \beta_u$ and $p_u$, $u$ must execute a rule. Thus one of the two predicates also holds in $C_{11}$.

We first show that if predicate (1) holds in $C_{11}$ then we get a contradiction. If $u$ executes an $Update$ rule in transition $C_{10} \mapsto C_{11}$, then by definition of the $BestRematch$ function, predicate (1) cannot hold in $C_{11}$ (observe that the only way for $\alpha_u=\beta_u$ is when $\alpha_u = \beta_u = null$). Thus assume that $u$ executes a $match$ or $ResetMatch$ rule. Notice that these rules do not modify the value of the $\alpha_u$ and $\beta_u$ variables. This implies that if $u$ executes one of these rules in $C_{10} \mapsto C_{11}$, predicate (1) not only hold in $C_{11}$ but also in $C_{10}$. Observe that this implies, in that case that $u$ is eligible for $Update$ in $C_{10} \mapsto C_{11}$, which gives the contradiction since $Update$ is the rule with the highest priority among all rules.

Now assume predicate (2) holds in $C_{11}$. In transition $C_{10} \mapsto C_{11}$, $u$ cannot execute $Update$ nor $ResetMatch$ as this would imply that $p_u=null$ in $C_{11}$. Assume that in $C_{10} \mapsto C_{11}$ $u$ executes a $match$ rule. Since in $C_{11}$, $p_u \notin (single(N(u))\cup \{null \})$ this implies that in $C_{10}$,  $Ask(u) \notin (single(N(u))\cup \{null \})$. This implies that $\alpha_u, \beta_u \notin (single(N(u)) \cup \{null \})$ in $C_{10}$. Thus $u$ is eligible for $Update$ in transition $C_{10} \mapsto C_{11}$ and this yields the contradiction since $Update$ is the rule with the highest priority among all rules.

Since these two predicates cannot hold in $C_2$, this concludes the proof.
\end{proof}


Now, we focus on particular configurations for   a matched edge $(u,v)$ corresponding to the fact they have completely exploited a $3$-augmenting path.

\begin{lemma}\label{lemm:stableEnd}
Let $(u,v)$ be a matched edge, $\Exe$ be an execution and $C$ be a configuration of $\Exe$. If in $C$, we have: 
\begin{enumerate}
\itemshort
\item $p_u\in single(N(u)) \land p_u=AskFirst(u) \land p_{p_{u}}=u$;
\item $p_v\in single(N(v)) \land p_v=AskSecond(v) \land p_{p_{v}}=v$;
\item $s_u=end_u=s_v=end_v=True$;
\end{enumerate}
then neither $u$ nor $v$ will ever be eligible for any rule from $C$.
\end{lemma}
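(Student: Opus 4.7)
The plan is to prove the lemma in two phases: first verify that in the configuration $C$ itself neither $u$ nor $v$ is eligible for any rule, then show that this non-eligibility is preserved throughout the rest of $\Exe$.

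For Phase~1, I would enumerate the four matched-node rules (\emph{Update}, \emph{MatchFirst}, \emph{MatchSecond}, \emph{ResetMatch}) and check each guard on $u$ and on $v$ using conditions~(1)--(3). Two intermediate facts do most of the work. From $p_u = AskFirst(u) \in single(N(u))$ together with the $AskFirst$ definition, $\alpha_u = p_u$ is a single neighbor of $u$; and from conditions~(1) and~(2) combined, together with the strict tiebreakers in $AskFirst$ and the shape of $AskSecond(v)=Lowest(\{\alpha_v,\beta_v\}\setminus\{\alpha_u\})$, one obtains $AskFirst(v) = null$ and consequently $AskSecond(u) = null$. From these, \emph{ResetMatch} fails on both $u$ and $v$ (using $s_{m_v} = s_u = True$), \emph{MatchSecond} fails on $u$, and \emph{MatchFirst} fails on $v$. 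For \emph{MatchFirst} on $u$ and \emph{MatchSecond} on $v$, I plug conditions~(1)--(3) together with $p_{m_u} = p_v = AskSecond(v) = AskSecond(m_u)$ and $end_v = True$ into the ``expected values of $s$ and $end$'' inside the bracketed predicates; both expected values evaluate to $True$ matching reality, so every disjunct is false. For \emph{Update} on $u$, disjunct~4 fails by condition~(1), disjunct~5 fails because $p_{p_u} = u$ kills its inner disjunction, and disjunct~2 fails from $\alpha_u \in single(N(u))$; a mirror argument handles \emph{Update} on $v$.

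For Phase~2, I observe that the guards of $u$'s and $v$'s rules depend only on their own local variables (fixed since $u,v$ do not move), on the $\alpha,\beta$ of each other (also fixed), on $p_{p_u} = p_x$ and $p_{p_v} = p_y$ where $x := p_u,\, y := p_v$, and on $end_x, end_y$. It suffices to show $p_x$ and $p_y$ never change. In $C$, $x$'s rule \emph{UpdateP} is ineligible because $p_x = u \in matched(N(x))$ and $p_{p_x} = p_u = x$, and \emph{ResetEnd} is ineligible because $p_x \neq null$. So $x$'s only available move is \emph{UpdateEnd}, which alters only $end_x$; and $end_x$ enters $u$'s guards only through \emph{Update}'s disjunct~5, whose gatekeeper $p_{p_u} \neq u$ remains permanently false. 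A symmetric argument handles $y$, so conditions~(1)--(3) persist and Phase~1's verdict extends to every configuration of $\Exe$ from $C$ onward.

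The main obstacle will be the treatment of \emph{Update}'s disjuncts~1 and~3 for $u$ and $v$ in Phase~1, namely ruling out $\alpha > \beta$ and $\alpha = \beta \neq null$. These are not directly killed by conditions~(1)--(3) and require a finer case analysis of how $AskFirst(u) = p_u$ and $AskSecond(v) = Lowest(\{\alpha_v, \beta_v\} \setminus \{\alpha_u\}) = p_v$ jointly constrain the four values $\alpha_u, \beta_u, \alpha_v, \beta_v$, exploiting the $Unique \geq 2$ clause inside $AskFirst$ and the specific structure of $AskSecond$; this is where the bulk of the bookkeeping will lie.
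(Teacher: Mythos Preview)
Your two-phase decomposition is exactly the paper's approach (the paper compresses Phase~1 into a single ``Observe'' and sketches Phase~2 in two sentences), and your Phase~2 argument is correct and slightly more careful than the paper's. The problem is Phase~1, and specifically the obstacle you yourself flag at the end.

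Your proposed remedy for disjuncts~1 and~3 of \emph{Update}---a case analysis of how $AskFirst(u)$ and $AskSecond(v)$ constrain $\alpha_u,\beta_u,\alpha_v,\beta_v$---cannot succeed, because conditions~(1)--(3) simply do not constrain $\beta_u$ (nor do they force $\alpha_v\neq\beta_v$). Concretely, take single nodes $x<y<z$ with $x,y\in single(N(u))$, $z\in single(N(v))$, and set $\alpha_u=y$, $\beta_u=x$, $\alpha_v=z$, $\beta_v=null$, $p_u=y$, $p_y=u$, $p_v=z$, $p_z=v$, and all of $s_u,end_u,s_v,end_v$ equal to $True$. Then $AskFirst(u)=y$ (since $\alpha_u<\alpha_v$ and $Unique\geq 2$) and $AskSecond(v)=z$, so conditions~(1)--(3) hold; yet $\alpha_u>\beta_u$, so $u$ is eligible for \emph{Update} in $C$ and the lemma fails as literally stated. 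The paper's one-word ``Observe'' hides the same gap. What actually rescues the argument is that in every application of this lemma (via Lemmas~\ref{lemm:stableFirst} and~\ref{lemm:stableSecond} in the proof of Theorem~\ref{nber}) both $u$ and $v$ have already executed at least one move, so Lemma~\ref{lem:UpdateL1} guarantees the entire first line of the \emph{Update} guard is permanently false. The clean repair is to add that hypothesis to the lemma and invoke Lemma~\ref{lem:UpdateL1}, rather than trying to extract the constraint from $AskFirst$/$AskSecond$.
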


\begin{proof}
Observe first that neither $u$ nor $v$ are eligible for any rule in $C$. Moreover, $p_u$ (resp. $p_v$) is not eligible for an $UpdateP$ move since $u$ (resp. $v$) does not make any move. Thus $p_{p_u}$ and $p_{p_v}$ will remain constant since $u$ and $v$ do not make any move and so neither $u$ nor $v$ will ever be eligible for any rule from $C$.
\end{proof}


The  configuration $C$ described in Lemma~\ref{lemm:stableEnd} is called a $stop_{uv}$ configuration. From such a configuration neither $u$ nor $v$ will ever be eligible for any rule.

In Lemmas~\ref{lemm:stableFirst} and~\ref{lemm:stableSecond}, we consider executions where a matched node $u$ writes $True$ in $end_{u}$ twice, and we focus on the transition $C_0\mapsto C_1$ where $u$ performs its second writing. Lemma~\ref{lemm:stableFirst} shows that, if $u$ is First in $C_0$, then $C_{1}$ is a $stop_{um_{u}}$ configuration.  Lemma~\ref{lemm:stableSecond} shows that, if $u$ is Second in $C_0$, then either $C_{1}$ is a $stop_{um_{u}}$ configuration or it exists a configuration $C_{3}$ such that $C_{3}>C_{1}$, $u$ does not make any move from $C_{1}$ to $C_{3}$ and $C_{3}$ is a $stop_{um_{u}}$ configuration. 

Lemma~\ref{lem:uv:end} and  Corollary~\ref{upd} are  required to prove Lemmas~\ref{lemm:stableFirst} and~\ref{lemm:stableSecond}.


\begin{lemma}\label{lem:uv:end}
 Let $(u,v)$ be a matched edge. Let $\Exe$ be some execution in which $v$ does not execute any rule. If it exists a transition $C_0\mapsto C_1$ in $\Exe$ where $u$ writes $True$ in $end_u$, then $u$ is not eligible for any rule from $C_1$.
\end{lemma}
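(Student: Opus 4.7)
My plan starts with the observation that, among the rules available to the matched node $u$, only \emph{MatchFirst} and \emph{MatchSecond} can assign $end_u := True$: \emph{Update} and \emph{ResetMatch} always assign $False$, and single-node rules do not apply to $u$. So $u$ executes one of these two in the transition $C_0 \mapsto C_1$, and I would split on which. In each case I read the preconditions in $C_0$ directly off the RHS of the $end_u$ assignment: in the \emph{MatchFirst} case this yields $x := AskFirst(u) \neq null$ with $p_u = x$, $p_x = u$, $s_u = True$, $p_v = AskSecond(v)$ and $end_v = True$ (with $v = m_u$); in the \emph{MatchSecond} case it yields $x := AskSecond(u) \neq null$ with $p_u = x$, $p_x = u$, $p_v = AskFirst(v)$, together with $s_v = True$ from the guard. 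Since $u$ is the only node that moves in $C_0 \mapsto C_1$, all these values are preserved in $C_1$, where additionally $s_u = True$ and $end_u = True$.

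I would then check in $C_1$ that no rule of $u$ is activable. Lemma~\ref{lem:UpdateL1} kills the first line of \emph{Update}, and its second line's conjunction $(p_u=null) \vee (p_{p_u}\neq u \wedge end_{p_u}=True)$ fails because $p_u = x \neq null$ and $p_{p_u} = p_x = u$. The match rule just executed has its third disjunction false, since $p_u, s_u, end_u$ already agree with the intended RHS by construction. \emph{ResetMatch}'s first disjunct requires both $Ask$ values to be $null$, and its second requires $s_v = False$, both of which fail. To extend the statement from $C_1$ to every later configuration, I would induct on the suffix of $\Exe$ starting at $C_1$: $v$'s variables are constant by hypothesis; $u$'s variables are constant as long as $u$ does not move; and for the only single neighbor that appears in any of $u$'s guards, namely $x$, the rules \emph{UpdateP} and \emph{ResetEnd} both demand $p_x = null$ or $p_{p_x} \neq x$, neither of which can arise while $u$ holds $p_u = x$ and $\M$ is stable, so $p_x$ stays equal to $u$. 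The only mobile quantity left is then $end_x$, but it never appears in a clause of $u$'s guards that is not already killed by $p_{p_u} = u$.

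The delicate step is ruling out the \emph{other} match rule: \emph{MatchSecond} in the first case, \emph{MatchFirst} in the second. For this I plan a short case analysis on the definition of $AskFirst$ to establish the mutual-exclusion fact that $u$ being First forces $AskSecond(u) = null$, and symmetrically. Comparing $\alpha_u$ with $\alpha_v$ together with the $u$ vs.\ $v$ tie-break handles most configurations immediately, since it makes $AskFirst(v) = null$ and therefore $AskSecond(u) = null$ by definition. The subtle case is the degenerate one $\alpha_u = \alpha_v$ with $\beta_u = \beta_v = null$, where both $u$ and $v$ qualify as First simultaneously; there I would compute $AskSecond(u) = Lowest(\{\alpha_u, null\} \setminus \{\alpha_u\}) = null$ directly, so mutual exclusion still holds and the unwanted match rule's guard is immediately false. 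Once this is in hand, the persistence induction above closes the proof.
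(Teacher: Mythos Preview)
Your overall strategy matches the paper's: identify that $u$ executed a \emph{match} rule, extract the preconditions that must hold in $C_0$ for $end_u$ to become $True$, propagate them to $C_1$, and then argue that none of $u$'s guards can become true thereafter because $v$ is frozen and $p_x$ cannot change while $p_u=x$. The paper does this a bit more compactly by working with the single abstraction $Ask(u)$ rather than splitting on \emph{MatchFirst}/\emph{MatchSecond}, and it phrases the persistence step as a contradiction argument (``let $C_2\mapsto C_3$ be the next move of $u$'') rather than an induction, but the content is the same. Your explicit mutual-exclusion analysis of $AskFirst$/$AskSecond$ for the ``other'' match rule is more detailed than anything the paper writes down; the paper leaves that implicit.

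There is one genuine gap. The sentence ``Since $u$ is the only node that moves in $C_0\mapsto C_1$, all these values are preserved in $C_1$'' is false in this model: under the adversarial distributed daemon, any subset of activable nodes may move in a single transition. The hypothesis guarantees only that $v$ does not move; in particular the single node $x=p_u$ may act in $C_0\mapsto C_1$. What you must argue---and what the paper does argue---is that $x$ is not eligible for \emph{UpdateP} or \emph{ResetEnd} in $C_0$ because $p_x=u\in matched(N(x))$ and $p_{p_x}=p_u=x$, so $p_x$ is unchanged in $C_1$ even if $x$ moves (it could still execute \emph{UpdateEnd}). You give exactly this argument later, for the inductive step; you need to invoke it already at the $C_0\mapsto C_1$ transition rather than appealing to a false exclusivity claim.

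A minor point: the phrase ``the only single neighbor that appears in any of $u$'s guards, namely $x$'' is not literally correct, since $BestRematch(u)$ in the \emph{Update} guard reads $p_y$ and $end_y$ for every $y\in single(N(u))$. Your argument is still sound because you have already killed the second line of \emph{Update} via the conjunct $p_{p_u}=u$, so $BestRematch$ is irrelevant; just phrase it that way.
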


\begin{proof}
To write $True$ in $end_u$ in transition $C_0\mapsto C_1$, $u$ must have executed a $match$ rule. 
According to this rule, \mbox{$(p_u=Ask(u)\land p_{p_{u}}=u)$} holds $C_0$ with $p_u\in single(N(u))$, otherwise $u$ would have executed an \emph{Update} instead of a $match$ rule. Now, in $C_0\mapsto C_1$, $p_u$ cannot execute \emph{UpdateP} then it cannot change its $p$-value and $v$ does not execute any move then it cannot change $Ask(u)$. Thus, \mbox{$(p_u=Ask(u)\land p_{p_{u}}=u)$} holds in both $C_0$ and $C_1$.

Assume now by contradiction that $u$ executes a rule after configuration $C_1$.  Let $C_2\mapsto C_3$ be the next transition in which it executes a rule. Recall that between configurations $C_1$ and $C_2$ both $u$ and $v$ do not execute rules. Observe also that $p_u$ is not eligible for $UpdateP$ between these configurations. Thus \mbox{$(p_u=Ask(u)\land p_{p_{u}}=u)$} holds from $C_0$ to $C_2$. Moreover the following points hold as well between $C_0$ and $C_2$ since in $C_0\mapsto C_1$ $u$ executed a $match$ rule and $v$ does not apply rules in $\Exe$:
\begin{itemize}
\itemshort
\item $\alpha_u$, $\alpha_v$, $\beta_u$ and $\beta_v$ do not change.
\item The values of the variables of $v$ do not change.
\item $Ask(u)$ and $Ask(v)$ do not change.
\item If $u$ was $First$ in $C_0$ it is $First$ in $C_2$ and the same holds if it was $Second$.
\end{itemize}

Using these remarks, we start by proving that $u$ is not eligible for $ResetMatch$ in $C_2$. If it is $First$ in $C_2$, this holds since $AskFirst(u)\neq null$ and $AskSecond(u)=null$. If it is $Second$ then to be eligible for $ResetMatch$, $s_v=False$ must hold in $C_2$ since $AskSecond(u)\neq null$. Since $u$ executed $end_u=True$ in $C_0\mapsto C_1$ and since $u$ was $Second$ in $C_0$, then necessarily $s_v=True$ in $C_0$ and thus in $C_2$ (using remark 2 above). So $u$ is not eligible for $ResetMatch$ in $C_2$.

We show now that $u$ is not eligible for an $Update$ in $C_2$. The $\alpha$ and $\beta$ variables of $u$ and $v$ remain constant between $C_0$ and $C_2$. Thus if any of the three first disjunctions in the $Update$ rule holds in $C_2$ then it also holds in $C_0$ and in $C_0\mapsto C_1$ $u$ should have executed an $Update$ since it has higher priority than the $match$ rules. Moreover since in $C_2$  \mbox{$(p_u=Ask(u)\land p_{p_{u}}=u)$} holds, the last two disjunctions of $Update$ are $False$ and we can state $u$ is not eligible for this rule.

We conclude the proof by showing that $u$ is not eligible for a $match$ rule in $C_2$. If $u$ was $First$ in $C_0$ then it is $First$ in $C_2$. To write $True$ in $end_u$ then $(p_u=AskFirst(u) \land p_{p_{u}} = u \land s_u \land p_{m_{u}} = AskSecond(m_u) \land end_{m_u})$ must hold in $C_0$. Since in $C_0\mapsto C_1$ $v$ does not execute rules, it also holds in $C_1$. 
The same remark between configurations $C_1$ and $C_2$ implies that this predicate holds in $C_2$. Thus in $C_2$, all the three conditions of the $MatchFirst$ guard are $False$ and $u$ not eligible for $MatchFirst$.
A similar remark  if $u$ is $Second$ implies that $u$ will not be eligible for $MatchSecond$ in $C_2$ if it was $Second$ in $C_0$.
\end{proof}


\begin{corollary}\label{upd}
Let $(u,v)$ be a matched edge. In any execution, if $u$ writes $True$ in $end_u$ twice, then $v$ executes a rule between these two writing. 
\end{corollary}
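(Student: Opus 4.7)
My plan is to derive Corollary \ref{upd} by a direct reduction to Lemma \ref{lem:uv:end}, proceeding by contradiction. Suppose node $u$ writes $True$ in $end_u$ during the transition $C_0 \mapsto C_1$ and then again later during the transition $C_2 \mapsto C_3$, with $C_1 \leq C_2$. The negation of the conclusion is that $v$ performs no move in any transition from $C_0 \mapsto C_1$ up to and including $C_2 \mapsto C_3$ — that is, $v$ is inactive throughout the entire window covering the two writings.

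Under this assumption, I would consider the finite sub-execution $\Exe'$ of the given execution starting at configuration $C_0$ and ending at configuration $C_3$. By construction, $v$ does not execute any rule in $\Exe'$, so $\Exe'$ satisfies the hypothesis of Lemma \ref{lem:uv:end}. Moreover, the transition $C_0 \mapsto C_1$ lies in $\Exe'$ and is precisely one in which $u$ writes $True$ in $end_u$. Applying the lemma, I would conclude that $u$ is not eligible for any rule from $C_1$ onwards in $\Exe'$. This directly contradicts the fact that $u$ is eligible and executes a rule (necessarily a $match$ rule, since only the two $match$ rules can write $True$ in $end_u$) in the transition $C_2 \mapsto C_3$, which lies strictly after $C_1$ in $\Exe'$.

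The one subtlety worth flagging is the interpretation of the phrase ``between these two writing'' in the statement: under the distributed daemon, several processes can move simultaneously in a single atomic step, so in principle $v$ could move together with $u$ in either of the two writing transitions. The negation used above must therefore be read as asserting that $v$ does not move in any of the transitions $C_0 \mapsto C_1, \ldots, C_2 \mapsto C_3$, so that the full sub-execution $\Exe'$ satisfies Lemma \ref{lem:uv:end}'s hypothesis. Once this is handled, the corollary follows immediately from the lemma with no further case analysis required.
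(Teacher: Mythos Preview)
Your proposal is correct and matches the paper's approach: the paper states this result as an immediate corollary of Lemma~\ref{lem:uv:end} without giving a separate proof, and your contradiction argument is exactly the natural derivation. Your remark about the inclusive reading of ``between'' (so that $v$ is assumed inactive on the whole window $C_0\mapsto C_1,\ldots,C_2\mapsto C_3$) is the right way to align the corollary with the hypothesis of Lemma~\ref{lem:uv:end}, and is consistent with how the paper subsequently uses Corollary~\ref{upd} in Lemmas~\ref{lemm:stableFirst} and~\ref{lemm:stableSecond}.
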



\begin{lemma}\label{lemm:stableFirst}
Let $(u,v)$ be a matched edge and $\Exe$ be an execution where $u$ writes $True$ in its variable $end_u$ at least twice. Let $C_0\mapsto C_1$ be the transition where $u$ writes $True$ in $end_u$ for the second time in $\Exe$. If $u$ is First in $C_0$ then the following holds:
\begin{enumerate}
\itemshort
\item in configuration $C_0$,
	\begin{enumerate}
	\itemshort
	\item $s_v=end_v=True$;
	\item $p_u = AskFirst(u) \land p_{p_u}=u \land s_u=True \land p_v = AskSecond(v)$;
	\item $p_u\in single(N(u))$;
	\item $p_v\in single(N(v)) \land p_{p_v}=v$; 
	\end{enumerate}
\item $v$ does not execute any move in $C_0\mapsto C_1$;
\item in configuration $C_1$, 
\begin{enumerate}
\itemshort
\item $s_u=end_u=True$;
\item $p_u\in single(N(u)) \land p_v\in single(N(v))$;
\item $s_v=end_v=True$;
\item $p_u = AskFirst(u) \land p_v = AskSecond(v)$; 
\item $p_{p_u}=u \land p_{p_v}=v$.
\end{enumerate}
\end{enumerate}
\end{lemma}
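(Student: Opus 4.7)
The plan is to analyze the transition $C_0 \mapsto C_1$ by first identifying which rule $u$ fires, reading off what its guard forces on $C_0$, then showing $v$ cannot move during the transition, and finally computing $C_1$ directly. First I would identify $u$'s rule. The action writes $True$ in $end_u$, so it is not \emph{Update} or \emph{ResetMatch} (both assign $False$). Since $u$ is First, $AskFirst(u)\neq null$ and $AskSecond(u)=null$, so \emph{MatchSecond} is disabled. Hence $u$ fires \emph{MatchFirst}, and for $end_u$ to be assigned $True$ its assignment clause forces $p_u=AskFirst(u) \land p_{p_u}=u \land s_u=True \land p_v=AskSecond(v) \land end_v=True$ in $C_0$; this gives~(1)(b) and the $end_v=True$ half of~(1)(a). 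Because \emph{MatchFirst} has lower priority than \emph{Update}, $u$ is not \emph{Update}-eligible in $C_0$; the disjunct $p_u\notin single(N(u))\cup\{null\}$ of its guard together with $p_u\neq null$ then yields $p_u\in single(N(u))$, which is~(1)(c).

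Next I would complete~(1)(a) and~(1)(d) via Lemmas~\ref{Trr} and~\ref{ess} applied to $v$. To invoke them I need $v$ to have executed at least one rule strictly before $C_0$; this is exactly what Corollary~\ref{upd} provides, since $C_0 \mapsto C_1$ is the second writing of $True$ into $end_u$. Working in the sub-execution starting right after $v$'s earlier move, Lemma~\ref{Trr} promotes $end_v=True$ in $C_0$ to $s_v=True$, completing~(1)(a); Lemma~\ref{ess} then produces a single neighbor $y$ of $v$ with $p_v=y$ and $p_y=v$, and since $p_v=AskSecond(v)$ by~(1)(b) this $y$ is $AskSecond(v)$, giving~(1)(d).

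For~(2), I would rule out each of $v$'s rules at $C_0$. \emph{MatchFirst} is out because $v$ is Second, so $AskFirst(v)=null$. Both disjuncts of \emph{ResetMatch} fail: the first because $AskSecond(v)\neq null$, the second because $s_u=True$ by~(1)(b). Lemma~\ref{lem:UpdateL1}, instantiated at $v$'s prior move, defuses the first line of \emph{Update}, while $p_v\neq null$ with $p_{p_v}=v$ from~(1)(d) defuses its second line. For \emph{MatchSecond}, I would expand the three disjuncts in its guard and verify that, under the equalities of~(1), each assignment predicate already coincides with the current values of $s_v, end_v, p_v$, so the guard is false. Hence $v$ makes no move during $C_0\mapsto C_1$.

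Finally,~(3) follows from~(1),~(2), and the effect of \emph{MatchFirst} on $u$. The action sets $p_u:=AskFirst(u)$, $s_u:=True$, and $end_u:=True$ (the two Boolean assignments reduce to $True$ since every conjunct was established in~(1)), giving~(3)(a); since $v$ is idle,~(3)(c) is immediate, the pointers $\alpha,\beta$ of $u,v$ and the value of $m_v$ are unchanged so $AskFirst(u)$ and $AskSecond(v)$ are stable from $C_0$ to $C_1$, yielding~(3)(d), and~(3)(b) is then~(1)(c) combined with~(1)(d). For~(3)(e) I would check that the single nodes $p_u$ and $p_v$ are not \emph{UpdateP}-eligible in $C_0$ (under~(1) their $p$-variables point into matched neighbors with $p_{p_x}=x$), so their $p$-pointers survive into $C_1$. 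The main obstacle is the exhaustive case analysis for~(2); the subtle structural step is invoking Corollary~\ref{upd} to justify applying Lemmas~\ref{Trr} and~\ref{ess} to $v$ in the first place.
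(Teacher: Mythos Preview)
Your proof is correct and follows essentially the same route as the paper: identify that $u$ fires \emph{MatchFirst}, read off (1)(b) and $end_v=True$ from the assignment clause, invoke Corollary~\ref{upd} so that Lemmas~\ref{Trr} and~\ref{ess} apply to $v$ for (1)(a) and (1)(d), do a rule-by-rule elimination on $v$ for (2), and then propagate (1) and (2) to obtain (3). The only noticeable variation is your argument for (1)(c): you use the priority of \emph{Update} over \emph{MatchFirst} in $C_0$ directly, whereas the paper appeals to Lemma~\ref{lem:UpdateL1} (using that $u$ has already moved once before its second $end_u:=True$); both arguments are valid and yield the same conclusion.
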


\begin{proof}
We prove Point 1a. Observe that for $u$  to write $True$ in $end_u$, $end_v$ must be $True$ in $C_0$. By Lemma \ref{Trr} this implies that $s_v$ is True as well. Now Point 1b holds by definition of the $MatchFirst$ rule.  As in $C_0$, $u$ already executed an action, then according to Lemma \ref{lem:UpdateL1}, Point 1c holds and will always hold. 
By Corollary \ref{upd}, $u$ cannot write $True$ consecutively if $v$ does not execute moves. Thus at some point before $C_0$, $v$ applied some rule. This implies that in configuration $C_0$, since s$_v=True$, by Lemma \ref{ess}, $\exists x\in single(N(v)): p_v =x \land p_x=v$. Thus Point 1d holds.

We now show that $v$ does not execute any move in $C_0\mapsto C_1$ (Point 2).
Recall that $v$ already executed an action before $C_0$, so by Lemma \ref{lem:UpdateL1}, line 1 of the $Update$ guard does not hold in $C_0$. Moreover, by Point 1d, line 2 does not hold either. Thus, $v$ is not eligible for $Update$ in $C_0$. We also have that $s_u=True$ and $AskSecond(v)\neq null$ in $C_0$, thus $v$ is not eligible for $ResetMatch$. Observe now that by Points 1a, 1b and 1d, $v$ is not eligible for $MatchSecond$ in $C_0$. Finally $v$ cannot execute $MatchFirst$ since $AskFirst(v)=null$. Thus $v$ does not execute any move in $C_0\mapsto C_1$ and so Point 2 holds. 

In $C_1$, $end_u$ is True by hypothesis and according to Point 1b, $u$ writes $True$ in $s_u$ in transition $C_0\mapsto C_1$. Thus Point 3a holds. Points 3b holds by Points 1c and 1d. Points 3c holds by Points 1a and 2. 
$AskFirst(u)$ and $AskSecond(v)$ remain constant in $C_0\mapsto C_1$ since neither $u$ nor $v$ executes an $Update$ in this transition. Moreover $p_v$ remains constant in $C_0\mapsto C_1$  by Point 2 and $p_u$ remains constant also since it writes $AskFirst(u)$ in $p_u$ in this transition while $p_u=AskFirst(u)$ in $C_0$. Thus Points 3d holds. Observe that nor $p_u$ neither $p_v$ is eligible for an $UpdateP$ in $C_0$, thus Point 3e holds.
\end{proof}


Now, we consider the case where $u$ is Second.

\begin{lemma}\label{lemm:stableSecond}
Let $(u,v)$ be a matched edge and $\Exe$ be an execution where $u$ writes $True$ in its variable $end_u$ at least twice. Let $C_0\mapsto C_1$ be the transition where $u$ writes $True$ in $end_u$ for the second time in $\Exe$. If $u$ is Second in $C_0$ then the following holds:
\begin{enumerate}
\itemshort
\item in configuration $C_0$,
	\begin{enumerate}
	\itemshort
	\item $s_v=True \land p_v=AskFirst(v)$;
	\item $p_v\in single(N(v)) \land p_{p_v} =v$;
	\end{enumerate}
\item in transition $C_0 \mapsto C_1$, $v$ is not eligible for $Update$ nor $ResetMatch$;
\item in configuration $C_1$,
	\begin{enumerate}
	\itemshort
	\item $s_u=end_u=True$;
	\item $p_v\in single(N(v)) \land p_v=AskFirst(v) \land p_{p_{v}}=v$;
	\item $p_u\in single(N(u)) \land p_u=AskSecond(u) \land p_{p_{u}}=u$;
	\item $s_v = True$;
	\end{enumerate}
\item $u$ is not eligible for any move in $C_1$;
\item If $end_u=False$  in $C_1$ then the following holds:
\begin{enumerate}
\itemshort
\item From $C_1$, $v$ executes a next move and this move is a $MatchFirst$;
\item Let us assume this move (the first move of $v$ from $C_1$) is done in transition $C_2 \mapsto C_3$.  In configuration $C_3$, we have:
	\begin{enumerate}
	\itemshort
	\item $s_u=end_u=True$;
	\item $p_v\in single(N(v)) \land p_v=AskFirst(v) \land p_{p_{v}}=v$;
	\item $p_u\in single(N(u)) \land p_u=AskSecond(u) \land p_{p_{u}}=u$;
	\item $s_v = True$;
	\item $u$ does not execute moves between $C_1$ and $C_3$;
	\item $end_v = True$;
	\end{enumerate}
\end{enumerate}
\end{enumerate}
\end{lemma}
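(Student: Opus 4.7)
The proof proceeds through the five points in order, each building on the previous ones, and the argument is largely a guard-by-guard case analysis using Lemmas \ref{Trr}, \ref{ess}, and \ref{lem:UpdateL1} together with Corollary \ref{upd}.

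For Point 1, the plan is first to observe that, since $u$ is Second in $C_0$ and writes $True$ to $end_u$ in $C_0\mapsto C_1$, the rule executed must be \emph{MatchSecond}; indeed \emph{Update} and \emph{ResetMatch} would reset $end_u$ to $False$, \emph{MatchFirst} is unavailable because $AskFirst(u)=null$, and no other rule exists for matched nodes. The guard and command of \emph{MatchSecond} force $s_v=s_{m_u}=True$ and $(p_u=AskSecond(u)\land p_{p_u}=u\land p_{m_u}=AskFirst(m_u))$ in $C_0$; in particular $p_v=AskFirst(v)$. Since this is the second writing of $True$, Corollary \ref{upd} implies $v$ already moved before $C_0$, so Lemma \ref{ess} applied to $s_v=True$ yields a single neighbor $x$ of $v$ with $p_v=x\land p_x=v$, i.e. Point 1(b).

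For Point 2, I examine the two relevant rules: Line 1 of \emph{Update} is excluded by Lemma \ref{lem:UpdateL1} (since $v$ has moved), while lines 2 and 3 fail because $p_v\in single(N(v))$ and $p_{p_v}=v$ from Point 1(b); \emph{ResetMatch}'s two disjuncts both fail since $AskSecond(v)=null$ as $v$ is First. For Point 3 the plan is to case-split on whether $v$ moves or not in $C_0\mapsto C_1$: the only rule $v$ could possibly fire is \emph{MatchFirst} (since \emph{Update} and \emph{ResetMatch} are excluded by Point 2, \emph{MatchSecond} has $AskSecond(v)=null$). A direct check shows \emph{MatchFirst} would set $p_v$ and $s_v$ to values consistent with the conclusions; if $v$ does not move its variables stay as in $C_0$ and the conclusions follow from Point 1 directly. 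The ``single'' and ``$p_{p_u}=u$, $p_{p_v}=v$'' facts follow because the single neighbors $p_u,p_v$ satisfy all three UpdateP disjuncts as False. Point 4 is a routine verification: given the state of $C_1$, I walk through \emph{Update}, \emph{MatchSecond}, \emph{ResetMatch}, \emph{MatchFirst} and show each guard is False.

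For Point 5 (reading the hypothesis as $end_v=False$ in $C_1$, since $end_u=True$ by Point 3(a)), the plan for (a) is to show the third line of the \emph{MatchFirst} guard fires at $v$: the formula evaluates to $True$ using $p_v=AskFirst(v)$, $p_{p_v}=v$, $s_v=True$, $p_{m_v}=AskSecond(m_v)$, and $end_u=True$ from Point 3, while $end_v=False$ disagrees; the same Update/ResetMatch exclusion as Point 2 then shows \emph{MatchFirst} is the only eligible rule, so any next move of $v$ must be a \emph{MatchFirst}. For (b), executing \emph{MatchFirst} in $C_2\mapsto C_3$ assigns $end_v:=True$, $s_v:=True$ and leaves $p_v=AskFirst(v)$, which gives (ii), (iv), (vi); (i) and (v) require showing $u$ does not move between $C_1$ and $C_3$. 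The main obstacle is exactly this last step: I must argue no activation by nodes other than $u$ and $v$ can re-enable any of $u$'s four rules. The argument is that $u$'s guards read only its own variables, $v$'s variables, and (for \emph{Update}) $end_{p_u}$; since $u$ does not move, $v$ does not move before $C_2\mapsto C_3$, and line 3 of \emph{Update} contains the factor $(p_{p_u}\ne u\land end_{p_u}=True)$ which is False whenever $p_{p_u}=u$, any UpdateEnd/ResetEnd of the single neighbor $p_u$ is harmless. This locks $u$ inactive through $C_3$, and the \emph{MatchFirst} computation above then delivers the remaining conclusions.
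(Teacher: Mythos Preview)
Your proof is correct and follows essentially the same approach as the paper's: the same case analysis through the rule guards, the same invocations of Lemmas~\ref{Trr}, \ref{ess}, \ref{lem:UpdateL1} and Corollary~\ref{upd}, and the same propagation argument for Point~5. You even correctly diagnose the typo in the hypothesis of Point~5 (it should read $end_v=False$, since $end_u=True$ by 3(a)). The only noticeable local difference is in Point~3(c): the paper obtains $p_u\in single(N(u))\land p_{p_u}=u$ in $C_1$ by applying Lemma~\ref{ess} to $s_u=True$, whereas you track the \emph{MatchSecond} guard through the transition and rule out an \emph{UpdateP} by the single neighbor; both arguments are valid and equally short.
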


\begin{proof}
We show Point 1a. For $u$ to write $True$ in transition $C_0 \mapsto C_1$, $u$ executes a $MatchSecond$ in this transition. Thus $s_v=True$ must hold in $C_0$ and $p_v=AskFirst(v)$ as well.  By Corollary \ref{upd}, $u$ cannot write $True$ consecutively if $v$ does not execute any move. Thus at some point before $C_0$, $v$ applied some rule. Thus, and by Lemma \ref{ess},  $\exists x\in single(N(v)) : p_v =x \land p_x = v$ in configuration $C_0$, so Point 1b holds. 

As $AskFirst(v)\neq null$ in $C_0$, $v$ is not eligible for $ResetMatch$ in $C_0$. We prove now that $v$ is not eligible for $Update$. By Corollary \ref{upd} and Lemma \ref{lem:UpdateL1}, line 1 of the $Update$ guard does not hold in $C_0$. Finally, according to Point 2b, the second line of the $Update$ guard does not hold, which concludes Point 2.

We consider now Point 3a. In $C_1$, $s_u=end_u=True$ holds because, executing a $MatchSecond$, $u$ writes $True$ in $end_u$ and writes $end_u$ in $s_u$ during transition $C_0 \mapsto C_1$.

We now show Point 3b. $AskFirst(v)$ and $AskSecond(u)$ remain constant in $C_0\mapsto C_1$ since neither $u$ nor $v$ execute an $Update$ in this transition. Moreover, the only rule $v$ can execute in $C_0\mapsto C_1$ is a $MatchFirst$, according to Point 2. Thus $v$ does not change its $p$-value in $C_0\mapsto C_1$ and so $p_v=AskFirst(v)$ in $C_1$. Now, in $C_0$, $v\in matched(N(p_v)) \land p_{p_v}=v$ thus $p_v$ cannot execute $UpdateP$ in $C_0 \mapsto C_1$ and thus it cannot change its $p$-value. So, $p_{p_v}=v$ in $C_1$.

Point 3c holds since after $u$ executed a $MatchSecond$ in $C_0 \mapsto C_1$, observe that necessarily $p_u=AskSecond(u)$ in $C_1$. Moreover, $s_u = True$ in $C_1$ so, according to Lemma \ref{ess}, $\exists y\in single(N(u)) : p_u =y \land p_y = u$ in $C_1$.

$p_v =AskFirst(v)$ and $p_{p_{v}}=v$ hold in $C_0$, according to Points 2a and 2b. Moreover, $p_u=AskSecond(u)$ holds in $C_0$ since $u$ writes $True$ in $end_u$ while executing a $MatchSecond$ in \\
$C_0\mapsto C_1$. Finally, by Point 2, $v$ can only execute $MatchFirst$ in $C_0 \mapsto C_1$, thus variable $s_v$ remains $True$ in transition $C_0 \mapsto C_1$ and Point 3d holds.

We now prove Point 4. If $end_v=True$ in $C_1$, then according to Lemma \ref{lemm:stableEnd}, $u$ is not eligible for any rule in $C_1$. Now, let us consider the case $end_v=False$ in $C_1$. By Points 3c and 3d, $u$ is not eligible for $ResetMatch$. By Point 3c and Lemma \ref{lem:UpdateL1}, $u$ is not eligible for $Update$. By Points 3a, 3b and 3c, $u$ is not eligible for $MatchSecond$. Finally, since $u$ is Second in $C_1$, $u$ is not eligible for $MatchFirst$ neither and Point 4 holds. 

Now since between $C_1$ and $C_2$, $v$ does not execute any rule (by Point 5b), and since $p_u$ (resp. $p_v$) is not eligible for $UpdateP$ while $u$ (resp. $v$) does not move (because $p_{p_u}=u$ (resp. $p_{p_v}=v$)), then $Ask(u), Ask(v)$, $p_{p_u}$ and $p_{p_v}$ remain constant while $u$ does not make any move. And so, properties 3a, 3b, 3c and 3d hold for any configuration between $C_1$ and $C_2$, thus $u$ is not eligible for any rule between $C_1$ and $C_2$ and $u$ will not execute any move from $C_1$ to $C_3$. Moreover, the $end_v$-value is the same from $C_1$ to $C_2$. 

If $end_v=False$ in $C_2$, then  $v$ is eligible for a $MatchFirst$ and that it will write $True$ in its $end_v$-variable while all properties of Point 3 will still hold in $C_3$. Thus Point 5 holds. 
\end{proof}


\begin{theorem}\label{nber}
In any execution, a matched node $u$ can write  $end_{u} := True$ at most twice.
\end{theorem}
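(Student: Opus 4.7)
The plan is to argue by contradiction and reduce to the preparatory lemmas already established. Suppose for contradiction that $u$ writes $True$ into $end_u$ at least three times in some execution $\Exe$, and let $C_0 \mapsto C_1$ denote the transition in which $u$ performs its \emph{second} such writing; write $v = m_u$. By Corollary~\ref{upd}, between any two writings of $True$ in $end_u$ the mate $v$ must execute at least one rule, so in order for a third writing to exist, $v$ must make a move somewhere after $C_1$. Hence it suffices to exhibit a configuration at or shortly after $C_1$ from which neither $u$ nor $v$ is ever activable again, i.e.\ a $stop_{uv}$ configuration in the sense of Lemma~\ref{lemm:stableEnd}.

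I would split the argument according to whether $u$ is First or Second in $C_0$. In the First case I would invoke Lemma~\ref{lemm:stableFirst}: its Point~3 gives exactly the hypotheses of Lemma~\ref{lemm:stableEnd} in $C_1$, namely $p_u = AskFirst(u) \in single(N(u))$, $p_v = AskSecond(v) \in single(N(v))$, $p_{p_u}=u$, $p_{p_v}=v$, and $s_u=end_u=s_v=end_v=True$. Consequently $C_1$ is a $stop_{uv}$ configuration, so neither $u$ nor $v$ can move again, contradicting the existence of a third writing.

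In the Second case I would use Lemma~\ref{lemm:stableSecond}. If $end_v = True$ already in $C_1$, then Points 3a--3d of that lemma combined with $end_v = True$ supply all the conditions of Lemma~\ref{lemm:stableEnd}, so $C_1$ is already a $stop_{uv}$ and the argument finishes as above. Otherwise, Point~5 guarantees that $u$ does not move between $C_1$ and the next move of $v$, that this move is a $MatchFirst$ performed in some transition $C_2 \mapsto C_3$, and that $C_3$ satisfies all the required predicates (in particular $end_v = True$ is added to the list in 5b). Applying Lemma~\ref{lemm:stableEnd} at $C_3$ again freezes both $u$ and $v$, giving the contradiction.

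The theorem itself is therefore a short case split; the real work has been pushed into the preparatory lemmas. The main obstacle is trusting that Lemmas~\ref{lemm:stableFirst} and~\ref{lemm:stableSecond} correctly track all the relevant variables of $u$ and $v$ across the second-writing transition — especially the predicates $p_{p_u}=u$ and $p_{p_v}=v$, which can be threatened by an $UpdateP$ move of a single neighbor, and the invariance of $AskFirst$, $AskSecond$, $\alpha$ and $\beta$ while $u$ and $v$ themselves stay put. Once those invariants are in hand, the contradiction follows immediately from the $stop_{uv}$ property.
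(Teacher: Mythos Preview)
Your proposal is correct and follows essentially the same route as the paper: a case split on whether $u$ is First or Second at its second writing, then invoking Lemma~\ref{lemm:stableFirst} (resp.\ Lemma~\ref{lemm:stableSecond}) together with Lemma~\ref{lemm:stableEnd} to produce a $stop_{uv}$ configuration from which neither $u$ nor $v$ ever moves again. The paper's own proof is a two-line version of exactly this; your extra invocation of Corollary~\ref{upd} is harmless but unnecessary, since once $u$ is frozen it certainly cannot write $end_u := True$ a third time.
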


\begin{proof}
Let $(u,v)$ be a matched edge and $\Exe$ be an execution where $u$ writes $True$ in its variable $end_u$ at least twice. Let $C_0\mapsto C_1$ be the transition where $u$ writes $True$ in $end_u$ for the second time in $\Exe$. 
 If $u$ is First (resp. Second) in $C_0$ then from  Lemmas  \ref{lemm:stableEnd} and \ref{lemm:stableFirst}, (resp.~\ref{lemm:stableSecond}), from C1,  neither $u$ nor $v$ will ever be eligible for any rule.
 \end{proof}

\subsection{The number of times single nodes can change their $end$-variable}\label{section:part2}

Recall that $\matchNB$ is the number of matched nodes and $\singleNB$ is the number of single nodes.

\begin{lemma}\label{one}
Let $x$ be a single node. If $x$ writes $True$ in some transition $C_0 \mapsto C_1$ then, in $C_0$, $\exists u \in matched(N(x)):p_x = u \land p_u=x \land end_x=False \land end_u=True$.
\end{lemma}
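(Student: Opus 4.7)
The plan is to prove this by a straightforward case analysis on which single-node rule $x$ could have executed in the transition $C_0 \mapsto C_1$. The only candidate will be \emph{UpdateEnd}, and then the conclusion will follow directly by reading off its guard and command.

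First I would recall the three rules available to a single node: \emph{SingleNode-ResetEnd}, \emph{SingleNode-UpdateP}, and \emph{SingleNode-UpdateEnd}. Since $x$ writes $True$ in $end_x$ during $C_0 \mapsto C_1$, the rule $x$ executes must assign $True$ to $end_x$. I would then rule out the first two: \emph{ResetEnd} assigns $end_x := False$, and \emph{UpdateP} also assigns $end_x := False$. Hence the only possibility is that $x$ executed \emph{UpdateEnd} in $C_0 \mapsto C_1$.

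Next, I would examine the guard and the command of \emph{UpdateEnd}. Its command is $end_x := end_{p_x}$, so for $x$ to write $True$ we must have $end_{p_x} = True$ in $C_0$. The guard requires in $C_0$ that $p_x \in matched(N(x))$, that $p_{p_x} = x$, and that $end_x \neq end_{p_x}$. Setting $u := p_x$, the first two conditions give $u \in matched(N(x))$, $p_x = u$ and $p_u = x$. Since $end_{p_x} = end_u = True$ and $end_x \neq end_{p_x}$, we further get $end_x = False$ and $end_u = True$, which is exactly the desired conclusion.

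There is essentially no obstacle here; the lemma is a direct bookkeeping observation about which rule can flip $end_x$ upward. The only subtle point is making sure no other single-node rule could be responsible for the write, which is immediate from the fact that both other rules have the constant $False$ appearing on the right-hand side of the assignment to $end_x$.
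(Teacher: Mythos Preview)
Your proof is correct and follows exactly the same approach as the paper's own proof: identify that the only single-node rule capable of setting $end_x$ to $True$ is \emph{UpdateEnd}, and then read off the guard of that rule. The paper compresses this into two sentences, while you spell out the case analysis on the three single-node rules explicitly, but the content is identical.
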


\begin{proof}
To write True in its $end$ variable, a single node must apply $UpdateEnd$. Observe now that to apply this rule, the conditions described in the Lemma must hold.
\end{proof}


\begin{lemma}\label{pa}
Let $u$ be a matched node. Consider an execution $\Exe$ starting after $u$ executed some rule and in which $end_u$ is always $True$, except for the last configuration $D$ of $\Exe$ in which it may be $False$. Let $\Exe\backslash D$ be all configurations of $\Exe$ but configuration $D$. In $\Exe\backslash D$, the following holds:
	\begin{itemize}
	\itemshort
	\item $p_u  \in single(N(u))$;
	\item $p_u$ remains constant.
	\end{itemize}
\end{lemma}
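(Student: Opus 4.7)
The plan is to combine Lemmas \ref{Trr} and \ref{ess} to pin down $p_u$ at each configuration of $\Exe \backslash D$, and then to show by case analysis on the rule $u$ may execute that no transition internal to $\Exe \backslash D$ can modify $p_u$.

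For the first bullet, since $\Exe$ starts after $u$ has executed some rule and $end_u = True$ at every configuration of $\Exe \backslash D$, Lemma \ref{Trr} gives $s_u = True$ at each such configuration. Lemma \ref{ess} then produces an $x \in single(N(u))$ with $p_u = x$ and $p_x = u$, which immediately yields $p_u \in single(N(u))$.

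For the second bullet, I would consider any transition $C_i \mapsto C_{i+1}$ with both $C_i$ and $C_{i+1}$ in $\Exe \backslash D$. Since only $u$ can modify $p_u$, assume $u$ moves. Both $Update$ and $ResetMatch$ assign $end_u := False$, contradicting $end_u = True$ in $C_{i+1}$; hence $u$ executes a match rule. Inspecting the two match rules, the assignment $end_u := (\ldots)$ evaluates to $True$ in $C_i$ only if $p_u = Ask(u)$ in $C_i$, where $Ask$ is $AskFirst$ or $AskSecond$ according to the rule. The same rule then performs $p_u := Ask(u)$, still evaluated in $C_i$, so $p_u$ is rewritten with the value it already had. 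Hence $p_u$ is unchanged across the transition, and since this holds at every internal transition of $\Exe \backslash D$, $p_u$ is constant on $\Exe \backslash D$.

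The only subtlety is to keep track of which configuration each side of a match rule is evaluated in; once it is clear that both the guard's predicate on $end_u$ and the subsequent $p_u$-assignment are read from the same pre-configuration $C_i$, the apparent self-consistency $p_u \mapsto p_u$ is genuine and the claim reduces to the case analysis above.
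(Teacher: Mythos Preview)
Your proof is correct and the second bullet matches the paper's argument essentially verbatim: rule out $Update$ and $ResetMatch$ because they force $end_u:=False$, then observe that a match rule writing $end_u:=True$ requires $p_u=Ask(u)$ in the pre-configuration, so the subsequent $p_u:=Ask(u)$ is a no-op.

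For the first bullet you take a slightly different route. The paper argues directly that the \emph{last} rule $u$ executed before $\Exe$ must have been a match rule (else $end_u$ would start out $False$), and since $Update$ has higher priority, $p_u\in single(N(u))$ at the start of $\Exe$; the second bullet then propagates this throughout $\Exe\backslash D$. You instead invoke Lemmas~\ref{Trr} and~\ref{ess} at each configuration to get $p_u\in single(N(u))$ pointwise, independently of the constancy argument. Your route is a little cleaner in that the two bullets become logically independent, whereas the paper's first bullet tacitly relies on the second. Both are fine.

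One tiny wording nit: in your final paragraph you refer to ``the guard's predicate on $end_u$'', but what you actually use is the \emph{command's} assignment to $end_u$ (the right-hand side of $end_u:=(\ldots)$), not a guard clause. The substance is right.
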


\begin{proof}
Since $end_u=True$ in $\Exe\backslash D$, the last rule executed before $\Exe$ is necessarily a $Match$ rule. So, at the beginning of $\Exe$, $p_u\in single(N(u))$, otherwise, $u$ would not have executed a $Match$ rule, but an $Update$ instead. 

We prove now that in $\Exe\backslash D$, $p_u$ remains constant. 
Assume by contradiction that there exists a transition in which $p_u$ is modified. Let $C_0 \mapsto C_1$ be the first such transition. First, observe that in $\Exe\backslash D$, $u$ cannot execute $ResetMatch$ nor $Update$ since that would set $end_u$ to $False$. Thus $u$ must execute a $Match$ rule in $C_0 \mapsto C_1$. Since the value of $p_u$ changes in this transition, this implies that $Ask(u)\neq p_u$ in $C_0$. Thus, whatever the $Match$ rule, observe now that in $C_1$, $end_u$ must be $False$, which gives a contradiction and concludes the proof.
\end{proof}


\begin{definition}\label{copy}
Let $u$ be a matched node. We say that a transition $C_0\mapsto C_1$ is of type \emph{"a single copies True from $u$"} if it exists a single node $x$ such that  $(p_x=u \land p_u=x \land end_x=False)$ in $C_0$ and $end_x=True$ in $C_1$. Notice that by Lemma \ref{one}, $end_u=True$ in $C_0$ and $x \in single(N(u))$.

If a transition $C_0\mapsto C_1$ is of type \emph{"a single node copies True from $u$"} and if $x$ is the single node with $(p_x=u \land p_u=x \land end_x=False)$ in $C_0$ and $end_x=True$ in $C_1$, then we will say \emph{$x$ copies $True$ from $u$}.
\end{definition}


\begin{lemma}\label{theol}
Let $u$ be a matched node and $\Exe$ be an execution. In $\Exe$, there are at most three transitions of type \emph{"a single copies $True$ from $u$"}. 
\end{lemma}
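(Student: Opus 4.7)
The plan is to partition $\Exe$ into maximal intervals of consecutive configurations along which $end_u$ stays $True$. By Theorem \ref{nber}, $u$ performs at most two writes of $True$ into $end_u$; since $end_u$ is only modified by $u$ itself and can turn from $False$ to $True$ only via such a write (or by already being $True$ in the initial configuration), at most three such ``truthful intervals'' exist. By Lemma \ref{one}, every ``copies $True$ from $u$'' transition starts in a configuration with $end_u = True$, so every such transition lies in one of these intervals, and it suffices to show that each interval contributes at most one copy.

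Inside a fixed truthful interval $I$, I would first establish that $p_u$ remains constant. If $I$ starts after $u$ has already executed a rule, this is exactly Lemma \ref{pa}. Otherwise $I$ is the initial interval, and the same reasoning as in Lemma \ref{pa}'s proof applies verbatim: any rule $u$ could execute during $I$ that modifies $p_u$ is either $Update$ or $ResetMatch$ (both assign $end_u := False$, ending the interval) or a $Match$ rule with $Ask(u) \neq p_u$, in which case the expression assigned to $end_u$ evaluates to $False$ and the interval again ends. Let $q$ denote this constant value of $p_u$ on $I$. Any copy transition in $I$ requires $p_u = x$ with $x$ single, so $x = q$, and in particular $q \in single(N(u))$ (otherwise $I$ trivially carries zero copies).

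Next I would argue that $q$ can copy $True$ at most once within $I$. Suppose such a copy occurs at $C_k \mapsto C_{k+1}$ in $I$. By Definition \ref{copy}, $p_q = u$ in $C_k$; the rule $q$ executes is $UpdateEnd$, which does not modify $p_q$, so $p_q = u$ and $end_q = True$ both hold in $C_{k+1}$. From that point on, within $I$, none of the rules available to $q$ can switch $end_q$ back to $False$: $ResetEnd$ is disabled because $p_q = u \neq null$; the guard of $UpdateP$ is false because $p_q = u \in matched(N(q))$ and $p_{p_q} = p_u = q$; and an $UpdateEnd$ step would copy the constantly-true $end_u$ and hence still assign $True$ to $end_q$. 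Thus $end_q$ stays $True$ for the remainder of $I$, ruling out a second copy.

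Combining the pieces, each of the at most three truthful intervals contributes at most one ``copies $True$ from $u$'' transition, yielding the claimed bound of three. The only mildly delicate point is handling the initial interval, where Lemma \ref{pa} does not apply directly because $u$ has not yet moved; the remedy is to re-run the guard analysis of that lemma's proof, which shows $p_u$ constant regardless.
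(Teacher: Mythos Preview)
Your proof is correct and follows essentially the same strategy as the paper: partition the execution according to when $end_u=True$, use Theorem~\ref{nber} to cap the number of such pieces at three, invoke Lemma~\ref{pa} (or its argument) to get $p_u$ constant on each piece, and conclude at most one copy per piece. The paper's decomposition is organized slightly differently---it isolates an initial segment $\Exe_{init}$ ending at $u$'s first move and then alternates $\Exe_{true}/\Exe_{false}$ blocks---whereas you work directly with maximal truthful intervals and absorb the ``$u$ has not yet moved'' case by re-running the guard analysis of Lemma~\ref{pa}; both lead to the same count. One point worth noting: your explicit argument that the single node $q$ cannot flip $end_q$ back to $False$ inside an interval (because $ResetEnd$ and $UpdateP$ are disabled while $p_q=u$, $p_u=q$) actually spells out a step the paper leaves implicit when it jumps from ``$p_u$ constant'' to ``at most one copy.''
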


\begin{proof}
Let $\Exe$ be an execution. We consider some sub-executions of $\Exe$. 

Let $\Exe_{init}$ be a sub-execution of $\Exe$ that starts in the initial configuration of $\Exe$ and that ends just after the first move of $u$.  Let $C_0 \mapsto C_1$ be the last transition of $\Exe_{init}$. Observe that $u$ does not execute any move until configuration $C_0$ and executes its first move in transition $C_0 \mapsto C_1$. We will write $\Exe_{init} \setminus C_1$ to denote all configurations of $\Exe_{init}$ but the configuration $C_1$. We prove that there is at most one transition of type \emph{"a single copies $True$ from $u$"} in $\Exe_{init}$.

There are two possible cases regarding $end_u$ in all configuration of $\Exe_{init} \setminus C_1$: either $end_u$ is always $True$ or $end_u$ is always $False$. 
If $end_u=False$ then by Definition \ref{copy}, no single node can copy $True$ from $u$ in $\Exe_{init}$, not even in transition $C_0\mapsto C_1$, since no single node is eligible for such a copy in $C_0$. 
If $end_u=True$, once again, there are two cases: either (i) $(p_u=null \lor p_u \notin single(N(u)))$ in all configuration of $\Exe_{init} \setminus C_1$, or (ii) $(p_u \in single(N(u)))$ in $\Exe_{init} \setminus C_1$. In case (i) then by Definition \ref{copy} no single node can copy $True$ from $u$ in $\Exe_{init}$, not even in $C_0\mapsto C_1$. In case (ii), observe that $p_u$ remains constant in all configurations of $\Exe_{init} \setminus C_1$, thus at most one single node can copy $True$ from $u$ in $\Exe_{init}$.

Let $\Exe_{true}$ be a sub-execution of $\Exe$ starting after $u$ executed some rule and such that: for all configurations in $\Exe_{true}$ but the last one, $end_u=True$.  There is no constraint on the value of $end_u$ in the last configuration of $\Exe_{true}$. 
According to Lemma \ref{pa}, $p_u \in single(N(u))$ and $p_u$ remains constant in all configurations of $\Exe_{true}$ but the last one. This implies that at most one single can copy $True$ from $u$ in $\Exe_{true}$.

Let $\Exe_{false}$ be an execution starting after $u$ executed some rule and such that: for all configurations in $\Exe_{false}$ but the last one, $end_u=False$. There is no constraint on the value of $end_u$ in the last configuration of $\Exe_{false}$. By Definition \ref{copy}, no single node will be able to copy $True$ from $u$ in $\Exe_{false}$.

To conclude, by Corollary \ref{nber}, $u$ can write $True$ in its $end$ variable at most twice. Thus, for all executions $\Exe$, $\Exe$ contains exactly one sub-execution of type $\Exe_{init}$, and at most two sub-executions of type $\Exe_{true}$ and the remaining sub-executions are of type $\Exe_{false}$. This implies that in total, we have at most three transitions of type "\emph{a single copies $True$ from $u$}" in $\Exe$.
\end{proof}


\begin{lemma}\label{kk}
In any execution, the number of transitions where a single node writes $True$ in its $end$ variable is at most $3\matchNB$.
\end{lemma}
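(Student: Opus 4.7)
My plan is to reduce the count directly to Lemma~\ref{theol} by identifying each writing of $True$ by a single node with an event of the form ``a single copies $True$ from $u$'' for a uniquely determined matched node $u$.

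First I would observe that among the three rules available to a single node, only $UpdateEnd$ can write $True$ in the variable $end$: $ResetEnd$ writes $False$, and $UpdateP$ assigns $False$ to $end_u$ in its command. Hence any transition in which a single node $x$ writes $True$ in $end_x$ must be a $UpdateEnd$ executed by $x$. Applying Lemma~\ref{one} to this transition gives, in the pre-configuration $C_0$, a matched neighbor $u$ of $x$ with $p_x=u \land p_u=x \land end_x=False \land end_u=True$. Together with $end_x=True$ in $C_1$, this is exactly the condition from Definition~\ref{copy} saying that $x$ copies $True$ from $u$, so the transition is of type ``a single copies $True$ from $u$''.

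Next I would observe that the matched node $u$ associated with this writing is uniquely determined by $x$ and $C_0$ (namely $u=p_x$), and moreover that within a single transition at most one single node can copy $True$ from a given matched node $u$: indeed the condition $p_u=x$ in $C_0$ forces $x$ to be the unique value of $p_u$, so no other single node can simultaneously witness a copy from $u$. Consequently, the total number of writings of $True$ by single nodes in the whole execution is bounded above by the sum, over all matched nodes $u$, of the number of transitions of type ``a single copies $True$ from $u$''.

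Finally I would apply Lemma~\ref{theol}, which bounds this latter quantity by $3$ for each matched node $u$. Summing over the $\matchNB$ matched nodes yields the desired bound of $3\matchNB$. The only non-routine step is the first identification, and that follows immediately from Lemma~\ref{one}; the rest is bookkeeping. There is no serious obstacle here, so the proof is short.
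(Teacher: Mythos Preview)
Your proof is correct and follows essentially the same route as the paper's: identify each writing of $True$ by a single node with a transition of type ``a single copies $True$ from $u$'' (via Lemma~\ref{one} and Definition~\ref{copy}), then sum the per-$u$ bound of $3$ from Lemma~\ref{theol} over all $\matchNB$ matched nodes. You add two details the paper leaves implicit---the case analysis showing only $UpdateEnd$ can write $True$, and the uniqueness observation that $p_u=x$ pins down at most one single node per matched node per transition---both of which tighten the counting argument but do not change the approach.
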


\begin{proof}
Let $\Exe$ be an execution and $x$ be a single node. If $x$ writes $True$ in $end_x$ in some transition of $\Exe$, then $x$ necessarily executes an $UpdateEnd$ rule and by Definition \ref{copy}, this means \emph{$x$ copies $True$ from some matched node} in this transition.  Now the lemma holds by Lemma \ref{theol}.
\end{proof}


\begin{lemma}\label{big}
In any execution, the number of transitions where a single node changes the value of its $end$ variables (from $True$ to $False$ or from $False$ to $True$) is  at most $\singleNB+6 \matchNB$ times.
\end{lemma}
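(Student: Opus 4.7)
The plan is to combine Lemma~\ref{kk} with a simple parity/pigeonhole argument on each single node's sequence of $end$-values. For a fixed single node $x$, the sequence of values taken by $end_x$ along the execution is a binary word; each \emph{change} is either a $False \to True$ flip or a $True \to False$ flip. Since the two types of flips must alternate for a given node, after an initial value $end_x \in \{True, False\}$, every $True \to False$ flip of $x$ must be immediately preceded (in $x$'s own history) either by the initial value $True$ or by a $False \to True$ flip of $x$. Therefore, summed over all single nodes, the total number of $True \to False$ flips is at most (number of single nodes whose initial $end$-value is $True$) plus (total number of $False \to True$ flips on single nodes).

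First I would bound the number of $False \to True$ flips. Any transition where a single node $x$ writes $True$ into $end_x$ forces $x$ to apply \textit{SingleNode - UpdateEnd} (the only rule writing a non-copied value is \textit{UpdateP} or \textit{ResetEnd}, and neither writes $True$), so by Lemma~\ref{kk}, the global count of $False \to True$ flips performed by single nodes is at most $3\matchNB$.

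Next I would bound the number of $True \to False$ flips using the alternation observation above. The number of single nodes whose initial $end$-value equals $True$ is at most $\singleNB$, so the number of $True \to False$ flips across all single nodes is at most $\singleNB + 3\matchNB$.

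Adding the two counts yields a total number of value changes of $end$ on single nodes at most $(3\matchNB) + (\singleNB + 3\matchNB) = \singleNB + 6\matchNB$, which is the claimed bound. The argument is essentially a routine counting once Lemma~\ref{kk} is in hand; the only subtle point to make explicit is that a $True \to False$ flip really does require a prior $False \to True$ flip (or an initial $True$), i.e.\ that the binary sequence of $end_x$ along the execution truly alternates.
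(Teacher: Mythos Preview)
Your proposal is correct and follows essentially the same approach as the paper: both invoke Lemma~\ref{kk} to bound the $False\to True$ flips by $3\matchNB$, then use the alternation argument (each $True\to False$ flip is preceded either by a $False\to True$ flip or by an initial $True$) to bound the $True\to False$ flips by $\singleNB+3\matchNB$, and sum. Your write-up is slightly more explicit about why only \textit{UpdateEnd} can write $True$, but the structure of the argument is identical.
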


\begin{proof}
A single node can write $True$ in its $end$ variable at most $3 \matchNB$ times, by Corollary \ref{kk}. Each of this writing allows one writing from $True$ to $False$, which leads to $6 \matchNB$ possible modifications of the $end$ variables. Now, let us consider a single node $x$. If $end_x=False$ initially, then no more change is possible, however if $end_x=True$ initially, then one more modification from $True$ to $False$ is possible. Each single node can do at most one modification due to this initialization and thus the Lemma holds. 
\end{proof}

\subsection{How many $Update$ in an execution?}\label{section:part3}

\begin{definition}
Let $u$ be a matched node and $C$ be a configuration. We define $Cand(u,C)=\{  x\in single(N(u)) : (p_x = u ~\vee~ end_x = False)  \}$ which is the set of vertices considered by the function $BestRematch(u)$ in configuration $C$.
\end{definition}


\begin{lemma}\label{flip}
Let $u$ be a matched node that has already executed some rule. If there exists a transition $C_0 \mapsto C_1$ such that $u$ is eligible for $Update$ in $C_1$ and not in $C_0$, then there exists a single node $x$ such that $x \in Cand(u,C_0) \backslash Cand(u,C_1)$ or $x\in Cand(u,C_1) \backslash Cand(u,C_0)$. Moreover, in transition $C_0 \mapsto C_1$, $x$ flips the value of its end variable.
\end{lemma}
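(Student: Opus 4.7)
The plan is first to reduce eligibility for $Update$ to the second line of the guard via Lemma~\ref{lem:UpdateL1}, then to argue that the change in eligibility forces $Cand(u)$ to change, and finally to perform a case analysis on any witness $x$ in the symmetric difference to prove that $x$ must have flipped its $end$ variable.

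Because $u$ has already executed some rule, Lemma~\ref{lem:UpdateL1} applies and the first line of the $Update$ guard is false in all configurations from that point onwards, in particular in both $C_0$ and $C_1$. Eligibility for $Update$ in configuration $C$ therefore reduces to the conjunction $P_1(C)\wedge P_2(C)$, where $P_1(C)\equiv ((\alpha_u,\beta_u)\neq BestRematch(u))$ and $P_2(C)\equiv (p_u=null\vee (p_{p_u}\neq u\wedge end_{p_u}=True))$. Since $u$ is not eligible for $Update$ in $C_0$, it does not execute $Update$ in transition $C_0\mapsto C_1$, and no other rule writes to $\alpha_u$ or $\beta_u$; so these two variables hold the same values in $C_0$ and $C_1$.

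Next I argue $Cand(u,C_0)\neq Cand(u,C_1)$. Observe that $BestRematch(u)$ is a deterministic function of $Cand(u)$ (the two smallest identifiers in that set, padded with $null$). If $P_1$ flips from false to true then, since $(\alpha_u,\beta_u)$ is constant, $BestRematch(u)$ must differ between $C_0$ and $C_1$, and hence so must $Cand(u)$. Otherwise $P_1$ already held in $C_0$ and $P_2$ must be what flipped from false to true; a short case analysis on the three possible triggers for the flip of $P_2$ (a change of $p_u$ by $u$ itself, or a change of $p_{p_u}$, or a change of $end_{p_u}$), combined with Lemma~\ref{lem:UpdateL1} (which ensures $p_u\in single(N(u))\cup\{null\}$), identifies a single neighbor of $u$ whose state change makes it enter or leave $Cand(u)$.

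Finally I pick $x\in Cand(u,C_0)\triangle Cand(u,C_1)$ and show $x$ flipped $end_x$. Assume $x\in Cand(u,C_0)\setminus Cand(u,C_1)$, so $(p_x=u\vee end_x=False)$ holds in $C_0$ and $(p_x\neq u\wedge end_x=True)$ holds in $C_1$. If $p_x=u$ in $C_0$, then $p_x$ changed during the transition, and the only rule through which a single node modifies $p_x$ is $UpdateP$, which additionally sets $end_x:=False$; this contradicts $end_x=True$ in $C_1$. Hence $p_x\neq u$ already in $C_0$, forcing $end_x=False$ in $C_0$ while $end_x=True$ in $C_1$, so $x$ flipped its $end$ variable via an $UpdateEnd$. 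The case $x\in Cand(u,C_1)\setminus Cand(u,C_0)$ is symmetric: the same $UpdateP$-vs-$end$-flip dichotomy shows $x$ flipped $end_x$ from true to false, either via $UpdateP$ or $ResetEnd$.

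The main obstacle is the sub-case of the middle paragraph in which only $P_2$ flips: one has to track which specific variable drove the flip and verify that the corresponding single neighbor really enters or leaves $Cand(u)$, rather than merely changing its internal state. Once this is in place, the $P_1$-flip sub-case is immediate from $BestRematch$ being a function of $Cand(u)$, and the witness analysis in the last paragraph is routine, the only subtlety being the interaction between $UpdateP$ and the forced reset of $end_x$.
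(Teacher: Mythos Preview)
Your overall structure matches the paper's exactly: reduce to the second line of the guard via Lemma~\ref{lem:UpdateL1}, argue that $Cand(u)$ must change between $C_0$ and $C_1$, and then do the case analysis on a witness $x$ in the symmetric difference. Your final paragraph (the witness analysis) is essentially the paper's argument, phrased the same way and using the same ``$UpdateP$ forces $end_x:=False$'' observation.

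Where you go beyond the paper is the explicit $P_1/P_2$ split in the middle paragraph. The paper simply asserts that ``$(\alpha_u,\beta_u)\neq BestRematch(u)$ must become True in $C_0\mapsto C_1$'', tacitly assuming $P_1$ was False in $C_0$; you correctly notice that $P_1$ might already hold, which would force $P_2$ to be the predicate that flips. Your treatment of that sub-case is the weak point, however. You list three triggers for the $P_2$-flip, one of which is ``a change of $p_u$ by $u$ itself''. But $Cand(u,C)=\{x\in single(N(u)):p_x=u\vee end_x=False\}$ depends only on the $p$- and $end$-values of $u$'s single neighbours, not on $p_u$. If in $C_0$ one has $P_1$ true, $p_u=y\in single(N(u))$ with $p_y=u\vee end_y=False$ (so $\neg P_2$), and then $u$ executes $ResetMatch$ in $C_0\mapsto C_1$ while no single neighbour moves, one gets $p_u=null$ in $C_1$ (hence $P_2$) with $Cand(u,C_0)=Cand(u,C_1)$ unchanged. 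Your ``short case analysis'' produces no witness $x$ here, and you give no argument ruling this scenario out. The other two triggers are fine: if $p_u=y$ is fixed and $p_{p_u}$ or $end_{p_u}$ changes so that $P_2$ flips, then $y$ itself moves from $Cand(u,C_0)$ to its complement, and your last paragraph applies to it.

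So the approach is the paper's, and your extra care in naming the $P_2$-flip case is a genuine improvement in exposition; but the sub-case where $u$ itself changes $p_u$ is a real gap that neither your sketch nor the paper's assertion closes.
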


\begin{proof}
Since $u$ has already executed some rule, to become eligible for $Update$ in transition $C_0 \mapsto C_1$, necessarily the second disjonction in the $Update$ rule must hold, by Lemma~\ref{lem:UpdateL1}. This implies that $(\alpha_v, \beta_v) \ne BestRematch(v)$ must become $True$ in $C_0 \mapsto C_1$. Now either $Lowest(Cand(u,C_0))\notin Cand(u,C_1)$ or $\exists x \notin Cand(u,C_0)$ such that $x= Lowest(Cand(u,C_1))$. This proves the first point.

For the second point we first consider the case $x\in Cand(u,C_1)$ and $x\notin Cand(u,C_0)$. Necessarily $end_x=True\land p_x \neq u $ in $C_0$ and $end_x=False \lor p_x = u $ in $C_1$. If $p_x=u$ in $C_1$ then in transition $C_0 \mapsto C_1$, $x$ has executed an $UpdateP$ and the second point holds. Assume now that $p_x\neq u$ in $C_1$. Necessarily $end_u=False$ in $C_1$ and the Lemma holds. 
 
We consider the second case in which $x\notin Cand(u,C_1)$ and $x\in Cand(u,C_0)$. Necessarily in $C_1$, $p_x \neq v$ and $ end_x=True$. Thus if $end_x=False$ in $C_0$ the lemma holds.  Assume by contradiction that $end_x=True$ in $C_0$.  This implies $p_x=u$ in $C_0$. But since in $C_1$ $p_x \neq u$ then $x$ executed either $UpdateP$ or $UpdateEnd$ in $C_0 \mapsto C_1$ which implies $end_x=False$ in $C_1$, a contradiction. This completes the proof.
\end{proof}


\begin{corollary}\label{update}
Matched nodes can execute at most $\Delta(\singleNB+6 \matchNB) + \matchNB$ times the $Update$ rule.
\end{corollary}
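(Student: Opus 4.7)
The plan is to partition the $Update$ moves into two buckets: the very first $Update$ executed by each matched node, and all subsequent ones. The first bucket contributes at most $\matchNB$ moves, since there are only $\matchNB$ matched nodes.

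For the second bucket I would first observe that immediately after a matched node $u$ performs an $Update$, it is not eligible for $Update$ anymore. Indeed, $Update$ writes $(p_u,s_u,end_u):=(null,False,False)$ and $(\alpha_u,\beta_u):=BestRematch(u)$; a direct inspection of the guard shows that both disjunctions are then false — the first because $BestRematch$ returns either two ordered distinct values taken from $single(N(u))$ or $null$'s (so $\alpha_u\le\beta_u$, they are not both non-$null$ and equal, and they lie in $single(N(u))\cup\{null\}$; and $p_u=null$), and the second because $(\alpha_u,\beta_u)$ is by construction equal to $BestRematch(u)$. Hence between any two consecutive $Update$ moves of $u$, there must exist a transition $C_0\mapsto C_1$ in which $u$ becomes newly eligible for $Update$.

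At that transition $u$ has already moved (it executed the previous $Update$), so Lemma~\ref{lem:UpdateL1} rules out the first line of the guard as the cause. The new eligibility must therefore come from the second line, and Lemma~\ref{flip} then guarantees that some single neighbor of $u$ flips its $end$ variable in $C_0\mapsto C_1$. I would now charge globally: a single node $x$ that flips its $end$-value in a single transition can account for at most $|N(x)\cap matched(V)|\le\Delta$ new eligibility events at that transition (one per matched neighbor). Summing over all transitions and over all single nodes, the total number of newly-eligible events across all matched nodes is at most $\Delta$ times the total number of $end$-flips performed by single nodes, which by Lemma~\ref{big} is bounded by $\Delta(\singleNB+6\matchNB)$. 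Since each non-first $Update$ move can be injectively charged to the newly-eligible event that immediately precedes it, the second bucket contributes at most $\Delta(\singleNB+6\matchNB)$. Adding the two buckets gives the announced bound $\Delta(\singleNB+6\matchNB)+\matchNB$.

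The main obstacle is the clean verification that an $Update$ move truly disables $u$ for $Update$, since it hinges on the fact that $BestRematch$ never returns an unordered or ill-typed pair (in particular the degenerate case where fewer than two candidates exist must yield $(a,null)$ or $(null,null)$, which keeps both $(\alpha_u>\beta_u)$ and $(\alpha_u=\beta_u\land \alpha_u\ne null)$ false). Once this is in hand, the rest is a straightforward charging argument that combines Lemma~\ref{lem:UpdateL1}, Lemma~\ref{flip} and Lemma~\ref{big}.
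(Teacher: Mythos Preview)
Your proof is essentially the paper's own: both account for at most $\matchNB$ ``initial'' $Update$s and then charge every later one, via Lemma~\ref{flip}, to an $end$-flip of a single neighbor, of which there are at most $\singleNB+6\matchNB$ by Lemma~\ref{big}, each able to trigger at most $\Delta$ new eligibilities. One minor wrinkle (present in the paper's argument as well): your claim that $u$ is not eligible for $Update$ immediately after executing it tacitly assumes no single neighbor moves in the same atomic step---if one does, $BestRematch(u)$ evaluated in the new configuration may already differ from the value just written---but in that case the required $end$-flip occurs in that very transition, so the charging goes through unchanged.
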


\begin{proof}
Initially each matched node can be eligible for an $Update$. Now, let us consider only matched nodes  that have  already executed a move. For such a node to become eligible for an $Update$ rule, at least one single node must change the value of its $end$ variable by Lemma \ref{flip}. Thus, each change of the $end$ value of a single node can generate at most $\Delta$ matched nodes to be eligible for an $Update$. By Lemma \ref{big}, the number of transitions where a single node changes the value of its $end$ variables is  at most $\singleNB+6 \matchNB$ times. Thus we obtain at most $\Delta(\singleNB+6 \matchNB)$ $Update$ generated by a change of the $end$ value of a single node and the Lemma holds.
\end{proof}

\subsection{A bound on the total number of moves in any execution}\label{section:part4}

\begin{definition}\label{chocapic}
In the following, we call $\noUpdate$, a finite execution that does not contain any executed $Update$ rule. Let $\Cstart$ be the first configuration of $\noUpdate$ and $\Cend$ be the last one.
\end{definition}

Observe that in the execution $\noUpdate$, all variables $\alpha$ and $\beta$ remain contant and thus, predicates $AskFirst$ and $AskSecond$ for all matched nodes remains constant too.


\begin{lemma}\label{null}
Let $(u,v)$ be a matched edge. If $Ask(u)=Ask(v)=null$ in $\noUpdate$, then $u$ and $v$ can both execute at most one $ResetMatch$.
\end{lemma}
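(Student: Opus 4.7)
The plan is to use the fact that, in $\noUpdate$, the variables $\alpha$ and $\beta$ of every matched node are frozen (by Definition~\ref{chocapic} and the observation that follows it), so the values of $AskFirst(u)$, $AskSecond(u)$, $AskFirst(v)$, $AskSecond(v)$ remain constantly $null$ throughout $\noUpdate$. This immediately restricts which rules $u$ (and symmetrically $v$) can apply during $\noUpdate$.

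First I would enumerate the rules available to $u$ in $\noUpdate$. The $Update$ rule is forbidden by the very definition of $\noUpdate$. Since $AskFirst(u)=null$ and $AskSecond(u)=null$ throughout $\noUpdate$, the guards of $MatchFirst$ and $MatchSecond$ are false in every configuration of $\noUpdate$. Moreover, the second disjunct of the $ResetMatch$ guard requires $AskSecond(u)\neq null$, so only the first disjunct of $ResetMatch$,
\[
(AskFirst(u) = AskSecond(u) = null) \wedge ((p_u, s_u, end_u) \ne (null, False, False)),
\]
can ever enable a move of $u$ in $\noUpdate$. Hence, the only rule $u$ can execute in $\noUpdate$ is $ResetMatch$.

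Next I would observe that executing $ResetMatch$ sets $(p_u, s_u, end_u) := (null, False, False)$, which falsifies the very predicate that enabled the rule. Since $p_u, s_u, end_u$ are local variables of $u$, they can only be modified by $u$ itself, and we just showed the only rule available to $u$ in $\noUpdate$ is $ResetMatch$. Therefore, after $u$ performs one $ResetMatch$ in $\noUpdate$, the triple $(p_u, s_u, end_u)$ remains $(null, False, False)$ for the remainder of $\noUpdate$, making $u$ ineligible for any further move. The exact same argument, applied symmetrically to $v$, bounds $v$'s number of $ResetMatch$ moves in $\noUpdate$ by one.

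This proof is essentially bookkeeping and I do not anticipate a real obstacle: the only subtlety is to explicitly invoke the freezing of $\alpha$, $\beta$ (and hence of $Ask$) in $\noUpdate$ before discarding the $Match$ and second-branch $ResetMatch$ guards. Once that is done, the rest is a one-line inspection of the update performed by $ResetMatch$.
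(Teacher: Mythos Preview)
Your proof is correct and follows essentially the same approach as the paper's: both argue that in $\noUpdate$ the only rule available to $u$ and $v$ is $ResetMatch$, and that after one execution its own command falsifies its guard. Your version is simply more explicit in ruling out each rule (including the second disjunct of the $ResetMatch$ guard) and in justifying why the triple $(p_u,s_u,end_u)$ cannot subsequently change.
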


\begin{proof}
Recall that in the execution $\noUpdate$, by definition, $u$ and $v$ do not execute the $Update$ rule. Moreover, these two nodes are not eligible for $Match$ rules since $Ask(u)=Ask(v)=null$. Thus they are only eligible for $ResetMatch$. Observe now it is not possible to execute tis rule twice in a row, which completes the proof.
\end{proof}


\begin{lemma}\label{debv}
Let $(u,v)$ be a matched edge. Assume that in $\noUpdate$, $u$ is $First$ and $v$ is $Second$. If $s_u$ is False in all configurations of $\noUpdate$ but the last one, then $v$ can execute at most one rule in $\noUpdate$.
\end{lemma}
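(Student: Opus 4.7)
The plan is to show that throughout $\noUpdate$, the only rule $v$ can ever be eligible for is $ResetMatch$, and that after one such firing $v$ becomes permanently frozen.

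First I would observe that since no $Update$ is executed in $\noUpdate$, the variables $\alpha_u,\beta_u,\alpha_v,\beta_v$ remain constant, and consequently the values of $AskFirst$ and $AskSecond$ at both $u$ and $v$ are constant throughout $\noUpdate$. In particular $AskSecond(v)\neq null$ holds in every configuration because $v$ is Second. I would then check directly from the definitions of $AskFirst$ and $AskSecond$ that the hypotheses ``$u$ is First'' and ``$v$ is Second'' force $AskFirst(v)=null$: the only way $v$ could also be First would require $\alpha_u=\alpha_v$ and $\beta_u=\beta_v=null$, but in that situation $AskSecond(v)$ evaluates to $null$, contradicting that $v$ is Second.

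Next I would rule out each possible rule for $v$ in turn. $v$ is not eligible for $Update$ by the definition of $\noUpdate$. $v$ is not eligible for $MatchFirst$ since its guard requires $AskFirst(v)\neq null$. For $MatchSecond$, the guard requires $s_{m_v}=s_u=True$, but by hypothesis $s_u=False$ in every configuration of $\noUpdate$ except possibly $\Cend$, and every transition of $\noUpdate$ starts from a configuration distinct from $\Cend$, so $v$ can never fire $MatchSecond$. For $ResetMatch$, the first disjunct of its guard needs $AskSecond(v)=null$ which is false; only its second disjunct can apply, and it additionally requires $p_v\neq null$ (together with $s_u=False$, which holds).

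Finally I would close by exploiting the command of $ResetMatch$, which sets $(p_v,s_v,end_v):=(null,False,False)$: immediately after any such firing $p_v=null$. To re-enable the second disjunct of $ResetMatch$, $v$ would need to restore $p_v$ to a non-null value, but the only rules that modify $p_v$ are $MatchFirst$ and $MatchSecond$, both already forbidden throughout $\noUpdate$. Hence $v$ fires at most one rule, namely a single $ResetMatch$. The main delicate point is the initial derivation that $AskFirst(v)=null$ (needed to rule out $MatchFirst$); the rest is a routine rule-by-rule case analysis.
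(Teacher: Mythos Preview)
Your proof is correct and follows essentially the same approach as the paper's: show that $v$ can only ever be eligible for $ResetMatch$, and that after one such firing it is frozen. The paper's proof is a two-line version of yours that leaves implicit both the derivation of $AskFirst(v)=null$ (to rule out $MatchFirst$) and the check that no rule can restore $p_v\neq null$ afterwards; your write-up simply spells these out.
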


\begin{proof}
Since $s_u=False$ in all configurations of $\noUpdate$ but the last one, node $v$ which is $Second$ can only be eligible for $ResetMatch$. Observe that if $v$ executes $ResetMatch$, it is not eligible for a rule anymore and the Lemma holds.  
\end{proof}


\begin{lemma}\label{debu}
Let $(u,v)$ be a matched edge. Assume that in $\noUpdate$, $u$ is $First$ and $v$ is $Second$. If $s_u$ is False throughout  $\noUpdate$, then $u$ can execute at most one rule in $\noUpdate$.
\end{lemma}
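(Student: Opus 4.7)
The plan is to reduce the lemma to showing two things: (i) in $\noUpdate$, the only rule $u$ can possibly execute is $MatchFirst$, and (ii) under the hypothesis $s_u=False$ throughout $\noUpdate$, $u$ cannot execute $MatchFirst$ twice. Together these yield that $u$ makes at most one move.

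For (i), I would argue from the predicate definitions that whenever $u$ is $First$, we have $AskSecond(u)=null$. Indeed, $AskSecond(u)$ is non-null only when $AskFirst(m_u)\neq null$, and a case analysis on the guard of $AskFirst$ (comparing $\alpha_u$ with $\alpha_{m_u}$, then comparing $\beta$-values, then using the tie-breaking $u<m_u$) shows that at most one of $u,m_u$ gets a non-null value from $AskFirst$. Consequently $u$ is ineligible for $MatchSecond$ (its guard requires $AskSecond(u)\neq null$) and for $ResetMatch$ (the first disjunction needs $AskFirst(u)=null$ and the second needs $AskSecond(u)\neq null$, both failing). By Definition~\ref{chocapic}, $u$ does not execute $Update$ in $\noUpdate$. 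So the only rule left for $u$ is $MatchFirst$.

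For (ii), suppose $u$ executes $MatchFirst$ at some transition $C_0\mapsto C_1$ of $\noUpdate$. The command evaluates $s_u := (p_u{=}AskFirst(u)\,\land\, p_{p_u}{=}u \,\land\, p_{m_u}\in\{AskSecond(m_u),null\})$ using the $C_0$-values; since $s_u$ must remain $False$ in $C_1$, this right-hand side is $False$ in $C_0$, and likewise $end_u$ gets $False$. After $C_1$ we have $p_u=AskFirst(u)$, $s_u=False$, $end_u=False$, and these values of $u$'s local variables persist until $u$ moves again (they are written only by $u$, and by step (i) its only possible rule is $MatchFirst$); $AskFirst(u)$ also stays constant because no $Update$ occurs in $\noUpdate$, so $\alpha$'s and $\beta$'s are frozen. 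Now examine the three disjuncts of the $MatchFirst$ guard at any later configuration $C_2$ of $\noUpdate$: the first ($p_u\neq AskFirst(u)$) is false; the third compares $end_u$ to an expression with $s_u$ as a conjunct, which is therefore false, matching $end_u=False$; so only the second disjunct can trigger, forcing $(p_u{=}AskFirst(u)\,\land\, p_{p_u}{=}u \,\land\, p_{m_u}\in\{AskSecond(m_u),null\})$ to hold in $C_2$. But then a second execution of $MatchFirst$ would assign $s_u := True$ in the next configuration, contradicting the hypothesis.

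The main obstacle is the eligibility step (i), which relies on the non-obvious fact that $u$ being $First$ forces $AskSecond(u)=null$; once this is established by inspecting the $AskFirst$/$AskSecond$ predicates, the rest is a short syntactic analysis of the $MatchFirst$ guard.
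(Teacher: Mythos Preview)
Your proof is correct and follows essentially the same approach as the paper's: first argue that $u$ can only execute $MatchFirst$ in $\noUpdate$, then show that a second $MatchFirst$ would force $s_u$ to become $True$, contradicting the hypothesis. Your step (i) is more detailed than the paper's (which simply asserts ``Node $u$ can only be eligible for $MatchFirst$''), and in step (ii) you derive $end_u=False$ directly from the $s_u$ conjunct in the assignment while the paper invokes Lemma~\ref{Trr}; both routes are equivalent.
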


\begin{proof}
Node $u$ can only be eligible for $MatchFirst$. Assume $u$ executes $MatchFirst$ for the first time in some transition $C_0 \mapsto C_1$, then in $C_1$, necessarily, $p_u=AskFirst(u)$, $s_u=False$ (by hypothesis) and $end_u=False$ by Lemma \ref{Trr}. Let $\noUpdate_1$ be the execution starting in $C_1$ and finishing in $\Cend$. Since in $\noUpdate_1$, there is no $Update$, observe that $p_u=AskFirst(u)$ remains $True$ in this execution. Assume by contradiction that $u$ executes another $MatchFirst$ in $\noUpdate_1$. Consider the first transition $C_2 \mapsto C_3$ after $C_1$ when it executes this rule. Notice that between $C_1$ and $C_2$ it does not execute rules. Thus in $C_2$, $p_u=AskFirst(u)$, $s_u=False$ and $end_u=False$ hold. Now if $u$ executes $MatchFirst$ in $C_2$ it is necessarily to modify the value of $s_u$ or $end_u$. By definition, it cannot change the value of $s_u$. Moreover it cannot modify the value of $end_u$ as this would imply by Lemma \ref{Trr} that $s_u=True$ in $C_3$. This completes the proof.
\end{proof}


\begin{lemma}\label{fin}
Let $(u,v)$ be a matched edge. Assume that in $\noUpdate$, $u$ is $First$, $v$ is $Second$ and that $u$ writes $True$ in $s_u$ in some transition of $\noUpdate$. 
Let $C_0 \mapsto C_1$ be the transition in $\noUpdate$ in which $u$ writes $True$ in $s_u$ for the first time. Let $\noUpdate_1$ be the execution starting in $C_1$ and finishing in $\Cend$. In $\noUpdate_1$, $u$ can apply at most 3 rules and $v$ at most 2.
\end{lemma}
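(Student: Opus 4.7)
My plan is to restrict, in order, the rules that $u$ and $v$ can ever execute in $\noUpdate_1$, the invariants forced by the transition $C_0 \mapsto C_1$, the number of moves of $v$, and finally the number of moves of $u$. Because $\noUpdate$ contains no $Update$ move, the variables $\alpha_u,\beta_u,\alpha_v,\beta_v$ are frozen throughout $\noUpdate_1$, so the predicates $AskFirst$ and $AskSecond$ for $u$ and $v$ are constant. The hypothesis that $u$ is $First$ and $v = m_u$ is $Second$ gives $AskFirst(u) \neq null$ and $AskFirst(v) = null$; the definition of $AskSecond$ then yields $AskSecond(u) = null$. Reading the guards, this rules out $MatchSecond$ and both disjuncts of $ResetMatch$ for $u$, so in $\noUpdate_1$ the only move $u$ can make is $MatchFirst$; symmetrically, $v$ can only make $MatchSecond$ (which needs $s_u = True$) or the second branch of $ResetMatch$ (which needs $s_u = False$ and $p_v \neq null$).

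Next, I would extract the invariants produced at $C_1$. Since $u$ writes $True$ to $s_u$ in $C_0 \mapsto C_1$, the RHS of the $s_u$ assignment of $MatchFirst$ evaluated to $True$ in $C_0$; hence $p_u = AskFirst(u) \land p_{p_u} = u \land p_v \in \{AskSecond(v), null\}$ holds in $C_0$, and $p_u = AskFirst(u)$ still holds in $C_1$. The single node $x = p_u$ has $p_x = u$ and $u \in matched(N(x))$, so $x$ is not eligible for $UpdateP$; combined with the fact that $u$'s only available rule rewrites $p_u$ to $AskFirst(u)$, the equalities $p_u = AskFirst(u)$ and $p_{p_u} = u$ are invariants of $\noUpdate_1$. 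Moreover, $v$'s only possible moves leave $p_v$ equal to $AskSecond(v)$ or $null$, so the membership $p_v \in \{AskSecond(v), null\}$ is also invariant. A direct consequence is that every $MatchFirst$ executed by $u$ sets $s_u := True$; so $s_u = True$ is an invariant of $\noUpdate_1$ and $v$ can never become eligible for $ResetMatch$.

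For the bound on $v$, it remains to count $MatchSecond$ moves. Immediately after any such move, $p_v = AskSecond(v)$ and $s_v = end_v$, which kills the first and third disjuncts of the $MatchSecond$ guard; the middle disjunct can only be re-enabled by a change of $p_{p_v}$, since $p_u$ stays at $AskFirst(u)$. A short case analysis on the single node $y = AskSecond(v)$ shows that after at most two $MatchSecond$ moves by $v$ the pair $(p_v, p_y) = (y, v)$ is realised, after which $y$ fails the $UpdateP$ guard and $p_{p_v}$ remains equal to $v$ forever in $\noUpdate_1$; the value of $end_v$ then coincides with the RHS of the $MatchSecond$ assignment, so $v$ is ineligible. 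Consequently $v$ performs at most two moves in $\noUpdate_1$.

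Finally, for $u$ the same invariants collapse the first two disjuncts of the $MatchFirst$ guard, so $u$ is eligible only when $end_u \neq (p_v = AskSecond(v) \land end_v)$. The RHS of this test can change only after a move of $v$ that modifies $p_v$ or $end_v$, so every re-activation of $u$ is caused either by an initial mismatch at $C_1$ or by one of the at most two moves of $v$; each such cause produces at most one $MatchFirst$ of $u$, since that move re-establishes the equality. Hence $u$ performs at most $1 + 2 = 3$ moves in $\noUpdate_1$. The main obstacle I expect is precisely the case analysis on $y$ in the $v$-bound: despite $y$'s freedom to fire $UpdateP$, $UpdateEnd$ and $ResetEnd$ during $\noUpdate_1$, one must verify that the oscillations of $p_y$ cannot provoke a third $MatchSecond$ of $v$.
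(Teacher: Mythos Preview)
Your proof is correct and follows essentially the same approach as the paper: freeze $AskFirst$/$AskSecond$, establish the invariants $p_u = AskFirst(u)$, $p_{p_u} = u$, $p_v \in \{AskSecond(v), null\}$ and hence $s_u = True$ throughout $\noUpdate_1$, deduce that $v$ can only execute $MatchSecond$ and does so at most twice, and then bound $u$'s $MatchFirst$ moves by $1 + 2 = 3$ since each is triggered by a change in $(p_v, end_v)$. The obstacle you flag at the end is not a real one and is already handled by your own outline: after $v$'s first $MatchSecond$ we have $p_v = y$, so $v$'s second $MatchSecond$ can only fire from a configuration where $p_y = v$, and in that configuration $y$ is already ineligible for $UpdateP$ (since $p_{p_y} = p_v = y$), hence $p_y = v$ persists through and after the transition and $end_v = True$ becomes permanent---this is precisely the paper's argument.
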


\begin{proof}
We first prove that in $\noUpdate_1$, $s_u$ remains $True$. 
Observe that $u$ cannot execute $Update$ neither $ResetMatch$ since it is $First$. So $u$ can only execute $MatchFirst$ in $\noUpdate_1$. For $u$ to write $False$ in $s_u$, it must exists a configuration in $\noUpdate_1$ such that $p_u\neq AskFirst(u)\lor p_{p_u}\neq u \lor p_v\not \in \{AskSecond(v), null\}$. Let us prove that none of these cases are possible. 

Since $u$ executed $MatchFirst$ in transition $C_0 \mapsto C_1$ writting $True$ in $s_u$ then, by definition of this rule, $p_u= AskFirst(u)\land p_{p_u}= u \land p_v\in \{AskSecond(v), null\}$ holds in $C_0$.  
As there is no $Update$ in $\noUpdate$, then $AskFirst(u)$ and $AskSecond(v)$ remains constant throughout $\noUpdate$ (and $\noUpdate_1$). So each time $u$ executes a $MatchFirst$, it writes the same value $AskFirst(u)$ in its $p$ value. Thus $p_u= AskFirst(u)$ holds throughout $\noUpdate_1$. Moreover, each time $v$ executes a rule, it writes either $null$ or the same value $AskSecond(v)$ in its $p$ value. Thus $p_v\in \{AskSecond(v), null\}$ holds throughout $\noUpdate_1$.
Now by Lemma \ref{ess}, in $C_1$ we have, $\exists x\in single(N(u)) : p_u =x \land p_x = u$, since $s_u=True$ . This stays $True$ in $\noUpdate_1$ as $p_u$ remains constant and $x$ will then not be eligible for $UpdateP$ in $\noUpdate_1$. Thus $p_{p_u}= u$ holds throughout $\noUpdate_1$. 
Thus, $p_u= AskFirst(u)\land p_{p_u}= u \land p_v\in \{AskSecond(v), null\}$ holds throughout $\noUpdate_1$ and so $s_u=True$ throughout  $\noUpdate_1$.

This implies that in $\noUpdate_1$, $v$ is only eligible for $MatchSecond$. The first time it executes this rule in some transition $B_0\mapsto B_1$, with $B_1\geq C_1$,  then in $B_1$, $p_v=AskSecond(v)$, $s_v=end_v$ and this will hold between $B_1$ and $\Cend$. If $end_v=True$ in $B_1$ then this will stay $True$ between $B_1$ and $\Cend$. Indeed, $p_v$ is not eligible for $UpdateP$ and we already showed that $p_u=AskFirst(u)$ holds in $\noUpdate_1$. In that case, between $B_1$ and $\Cend$, $v$ will not be eligible for any rule and so $v$ will have executed at most one rule in $\noUpdate_1$. In the other case, that is $end_v(=s_v)=False$ in $B_1$, since $p_v=AskSecond(v)$ holds between $B_1$ and $\Cend$, necessarily, the next time $v$ executes a $MatchSecond$ rule, it is to write $True$ in $end_v$. After that observe that $v$ is not eligible for any rule. Thus, $v$ can execute at most 2 rules in $\noUpdate_1$.

To conclude the proof it remains to count the number of moves of $u$ in $\noUpdate_1$. Recall that we proved that $s_u$ is always $True$ in $\noUpdate_1$. Thus whenever $u$ executes a $MatchFirst$, it is to modify the value of its $end$ variable. Observe that this value depends in fact of the value of $end_v$ and of $p_v$ since we proved $p_u= AskFirst(u)\land p_{p_u}= u \land s_u \land p_v\in \{AskSecond(v), null\}$ holds throughout $\noUpdate_1$. Since we proved that in $\noUpdate_1$, $v$ can execute at most two rules, this implies that these variables can have at most three different values in $\noUpdate_1$. Thus $u$ can execute at most 3 rules in $\noUpdate_1$.  
\end{proof}


\begin{lemma}\label{initi}
Let $(u,v)$ be a matched edge. Assume that in $\noUpdate$, $u$ is $First$ and $v$ is $Second$. If $s_u$ is $True$ throughout $\noUpdate$ and if $u$ does not execute any move in $\noUpdate$, then $v$ can execute at most two rules in $\noUpdate$.
\end{lemma}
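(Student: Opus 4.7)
The plan is first to argue that under the hypotheses the only rule $v$ can apply in $\noUpdate$ is \emph{MatchSecond}, and then to bound the number of such applications by $2$ by tracking how $p_v$, $end_v$, $s_v$, and $p_x$ (where $x=AskSecond(v)$) evolve.

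Since $u$ makes no move in $\noUpdate$ and no \emph{Update} is executed, the variables $\alpha_u, \beta_u, p_u, s_u, end_u$ and $\alpha_v, \beta_v$ are all constant, and hence so are the predicates $AskFirst(u)$ and $AskSecond(v)$. Combined with $s_u=True$, $AskSecond(v)\neq null$ and $AskFirst(v)=null$, this rules out \emph{ResetMatch}, \emph{MatchFirst} and \emph{Update} (the last by definition of $\noUpdate$); so the only rule $v$ can apply is \emph{MatchSecond}. Let $B_0\mapsto B_1$ be the first such move of $v$ in $\noUpdate$. Because \emph{Update} has higher priority but is not executed, $AskSecond(v)\in single(N(v))$; set $x:=AskSecond(v)$. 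After $B_1$ we have $p_v=x$, and this equality is preserved throughout $\noUpdate$ since each further \emph{MatchSecond} re-writes $p_v:=x$. Moreover \emph{MatchSecond} also achieves $s_v=end_v$, so the first disjunct ($p_v\neq AskSecond(v)$) and the third disjunct ($s_v\neq end_v$) of its guard are both false from $B_1$ onward. Hence $v$ can only be re-enabled via the second disjunct $end_v\neq P$, where $P=(p_v=x\land p_{p_v}=v\land p_u=AskFirst(u))$; with $p_v=x$ fixed and $Q:=(p_u=AskFirst(u))$ constant, this simplifies to $P=(p_x=v\land Q)$.

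If $Q=False$, the value written into $end_v$ at every \emph{MatchSecond} is $False$ and $P=False$ throughout, so the second disjunct never becomes true and $v$ makes at most one move. Assume now $Q=True$. A sub-claim I would prove separately is that as soon as $p_v=x$ and $p_x=v$ hold in the same configuration, $x$ is ineligible for \emph{UpdateP} and \emph{ResetEnd}: the first two disjuncts of \emph{UpdateP} fail because $p_x=v\in matched(N(x))$, the third fails because $p_{p_x}=p_v=x$, and \emph{ResetEnd} requires $p_x=null$. Consequently, once $p_x=v$ holds together with $p_v=x$, the equality $p_x=v$ is preserved for the remainder of $\noUpdate$. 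If $end_v=True$ at $B_1$, then by the semantics of \emph{MatchSecond} we must have $p_v^{B_0}=x$ and $p_x^{B_0}=v$; the sub-claim applied at $B_0$ keeps $p_x=v$ throughout, so $P=True=end_v$ and $v$ is never re-enabled. If $end_v=False$ at $B_1$, then $v$ can be re-enabled only when $p_x$ eventually becomes $v$, at which point the predicate written into $end_v$ by the next \emph{MatchSecond} equals $(True\land True\land Q)=True$; after this the sub-claim keeps $p_x=v$ forever, so $end_v=P=True$ and no third \emph{MatchSecond} is possible. In every case $v$ performs at most two moves.

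The main obstacle is the bookkeeping for the absorbing sub-claim, and in particular checking that rules executed by $x$ concurrently with a \emph{MatchSecond} of $v$ (most notably \emph{UpdateEnd}, which does not touch $p_x$) cannot invalidate the simplification of $P$. These verifications follow the same pattern as those at the end of the proof of Lemma~\ref{fin}, so no new technique is required.
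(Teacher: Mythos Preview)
Your proof is correct and follows essentially the same approach as the paper's: restrict $v$ to \emph{MatchSecond} (using $s_u=True$, $AskSecond(v)\neq null$, $AskFirst(v)=null$, and the definition of $\noUpdate$), then argue that after the first application $p_v=AskSecond(v)$ and $s_v=end_v$ are invariant so at most one further move (flipping $end_v$) is possible. You are in fact more careful than the paper about the dependence of the guard on $p_x$ for $x=AskSecond(v)$; the paper absorbs this into the single sentence ``since $u$ does not execute rules, $v$ is not eligible for any rule,'' whereas your absorbing sub-claim (that $p_x=v$ is preserved once $p_v=x\land p_x=v$ holds) makes this step explicit.
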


\begin{proof}
By Definition \ref{chocapic}, $v$ cannot execute $Update$ in $\noUpdate_1$. Since we suppose that in $\noUpdate_1$, $s_u=True$ then $v$ is not eligible for $ResetMatch$. Thus in $\noUpdate_1$, $v$ can only execute $MatchSecond$. After it executed this rule for the first time, $p_v=AskSecond(v)$ and $s_v=end_v$ will always hold, since $v$ is only eligible for $MatchSecond$. Thus the second time it executes this rule, it is necessarily to modify its $end_v$ and $s_v$ variables. Observe that after that, since $u$ does not execute rules,  $v$ is not eligible for any rule.
\end{proof}


\begin{lemma}\label{bupd}
Let $(u,v)$ be a matched edge. In $\noUpdate$, $u$ and $v$ can globally execute at most $12$ rules.
\end{lemma}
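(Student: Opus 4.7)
I would proceed by case analysis on the roles of $u$ and $v$ and on the temporal behavior of $s_u$ during $\noUpdate$. Since $\noUpdate$ contains no $Update$ rule, the variables $\alpha_u, \beta_u, \alpha_v, \beta_v$ are constant, and consequently the four values $AskFirst(u), AskSecond(u), AskFirst(v), AskSecond(v)$ are constant throughout $\noUpdate$. A direct inspection of these predicates shows that only two scenarios can occur: either (A) $Ask(u) = Ask(v) = null$, or (B) exactly one of $u$ and $v$ is $First$ and the other is $Second$. Case (A) is handled immediately by Lemma~\ref{null}, which gives at most $2$ moves.

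For Case (B) I would assume without loss of generality that $u$ is $First$ and $v$ is $Second$. Inspection of the rule guards shows that $u$ is only ever eligible for $MatchFirst$, while $v$ is only eligible for $MatchSecond$ or $ResetMatch$. I would then split on the behavior of $s_u$. If $s_u$ stays $False$ throughout $\noUpdate$, Lemmas~\ref{debu} and~\ref{debv} yield $1+1 = 2$ moves. If $s_u$ equals $True$ somewhere in $\noUpdate$ but $u$ itself never writes $True$ to $s_u$, then $s_u$ is $True$ initially and, after $u$'s (at most one) move which writes $False$, it stays $False$; Lemma~\ref{initi} bounds $v$-moves in the idle prefix by $2$, and Lemmas~\ref{debu}, \ref{debv} handle the suffix, keeping the total well below $12$.

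The remaining and most involved subcase is when $u$ writes $True$ to $s_u$ at some transition $C_0 \mapsto C_1$ of $\noUpdate$. I would invoke Lemma~\ref{fin} on the suffix $\noUpdate_1$ starting at $C_1$ for a contribution of at most $3+2 = 5$ moves, and then partition the prefix ending at $C_0$ at the instant of $u$'s very first move. Before that first move $u$ is idle with $s_u$ constant, so either Lemma~\ref{initi} (if $s_u^0 = True$) or Lemma~\ref{debv} (if $s_u^0 = False$) bounds the $v$-moves by at most $2$; between $u$'s first move and $C_0$ the variable $s_u$ remains $False$, so Lemmas~\ref{debu} and~\ref{debv} together contribute at most one $u$-move and one $v$-move. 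Accounting for the two boundary transitions (in each of which $v$ may make a simultaneous move) yields a prefix-plus-boundary contribution of at most $7$, for a grand total of at most $12$.

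The main obstacle is the bookkeeping in this last subcase, where $\noUpdate$ decomposes into up to four phases separated by the two $u$-moves that flip $s_u$ (an idle prefix, $u$'s first False-writing move, a middle phase with $s_u = False$, the first True-writing move, and the True-suffix). One must apply the correct lemma to each phase and carefully verify that $v$-moves occurring simultaneously with $u$'s phase-boundary moves are counted exactly once; in particular, one should argue that the state induced by the prior application of Lemma~\ref{initi} leaves $v$ ineligible in the transition where $u$ first departs from $s_u = True$, so that no extra $v$-move pushes the total above $12$.
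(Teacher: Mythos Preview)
Your proposal is correct and follows essentially the same approach as the paper: both arguments dispose of the case $Ask(u)=Ask(v)=null$ via Lemma~\ref{null}, then decompose $\noUpdate$ at the instant of $u$'s first move and at the instant of $u$'s first write of $True$ into $s_u$, applying Lemmas~\ref{initi}, \ref{debv}, \ref{debu}, and \ref{fin} to the resulting phases.

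The only noteworthy difference is in the bookkeeping you flag in your last paragraph. The paper avoids your ``ineligibility at boundary~1'' argument by merging differently: instead of splitting the $s_u=False$ middle phase from the True-writing boundary transition, it treats them together as a single sub-execution $\noUpdate_{10}$ (from just after $u$'s first move to just after $u$'s first True-write) and applies Lemma~\ref{debv} directly to this whole sub-execution, yielding $v\le 1$ there in one shot. This is cleaner than your route, which counts $v\le 1$ in the middle and then a possible extra $v$-move at the True-writing boundary, forcing you to recover the lost unit at boundary~1 via the Lemma~\ref{initi} ineligibility observation. Your recovery argument is valid (after $v$ has exhausted the moves granted by Lemma~\ref{initi} or Lemma~\ref{debv} in the idle prefix, it is indeed not eligible in the configuration where $u$ first moves), so both routes reach the bound~$12$; the paper's merging simply makes the arithmetic come out without the extra step.
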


\begin{proof}
If $Ask(u)=Ask(v)=null$, the Lemma holds by Lemma \ref{null}. Assume now that $u$ is $First$ and $v$ $Second$. We consider two executions in $\noUpdate$. 

Let $C_0 \mapsto C_1$ be the first transition in $\noUpdate$ in which $u$ executes a rule. Let $\noUpdate_0$ be the execution starting in $\Cstart$ and finishing in $C_0$. There are two cases.

If $s_u=False$ in  $\noUpdate_0$ then $v$ is only eligible for $ResetMatch$ in this execution. Observe that after it executes this rule for the first time in $\noUpdate_0$, it is not eligible for any rule after that in $\noUpdate_0$.

If $s_u=True$ in  $\noUpdate_0$ then by Lemma \ref{initi}, $v$ can execute at most two rules in this execution. In transition $C_0 \mapsto C_1$, $u$ and $v$ can execute one rule each.

Let $\noUpdate_1$ be the execution starting in $C_1$ and finishing in $\Cend$. Whatever rule $u$ executes in transition $C_0 \mapsto C_1$ observe that $u$ either writes $True$ or $False$ in $s_u$. If $u$ writes $True$ in $s_u$ in transition $C_0 \mapsto C_1$, then by Lemma \ref{fin}, $u$ and $v$ can execute at most five rules in total in $\noUpdate_1$.

Consider the other case in which $u$ writes $False$ in $C_1$. Let $C_2\mapsto C_3$ be the first transition in $\noUpdate_1$ in which $u$ writes $True$ in $s_u$. Call $\noUpdate_{10}$ the execution between $C_1$ and $C_3$ and $\noUpdate_{11}$ the execution between $C_3$ and $\Cend$. By definition, $s_u$ stays $False$ in $\noUpdate_{10} \backslash C_3$. Thus in $\noUpdate_{10} \backslash C_3$, $u$ can execute at most one rule, by {Lemma} \ref{debu}. Now in $\noUpdate_{10}$, $u$ can execute at most two rules. By Lemma \ref{debv}, $v$ can execute at most one rule in $\noUpdate_{10}$. In total, $u$ and $v$ can execute at most three rules in $\noUpdate_{10}$. In $\noUpdate_{11}$, $u$ and $v$ can execute at most five rules by Lemma \ref{fin}. Thus in $\noUpdate_1$, $u$ and $v$ can apply at most eight rules.
\end{proof}


\begin{theorem}\label{matched}
In any execution, matched nodes can execute at most $12\matchNB (\Delta(\singleNB+6 \matchNB) + \matchNB)$ rules.
\end{theorem}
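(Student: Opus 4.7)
The plan is essentially to combine Corollary~\ref{update}, which bounds the number of $Update$ rules, with Lemma~\ref{bupd}, which bounds the number of non-$Update$ moves performed by a matched edge between two consecutive $Update$ rules. First I would fix an arbitrary execution $\Exe$ and cut it into maximal $Update$-free sub-executions, i.e., pieces of the form $\noUpdate$ from Definition~\ref{chocapic}, the cuts being exactly the transitions in which some matched node performs an $Update$. By Corollary~\ref{update}, the total number of such $Update$ transitions is at most $K := \Delta(\singleNB + 6\matchNB) + \matchNB$, so the number of maximal $Update$-free pieces is at most $K+1$.

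Second, within each maximal $Update$-free piece I would apply Lemma~\ref{bupd} to every matched edge $(u,v)$: together, $u$ and $v$ contribute at most $12$ moves during that piece. Summing over the $\matchNB/2$ matched edges gives at most $6\matchNB$ non-$Update$ moves per piece, hence at most $6\matchNB(K+1)$ such moves across the whole execution. Adding the at most $K$ executed $Update$ rules themselves yields a total of at most $6\matchNB(K+1) + K$ moves by matched nodes, which is bounded by $12\matchNB K = 12\matchNB(\Delta(\singleNB + 6\matchNB) + \matchNB)$ in any non-trivial case (i.e., whenever $\matchNB \geq 1$ and $K \geq 1$), matching the claim.

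The only point requiring care is to justify that each maximal $Update$-free piece really is an execution of the type required by Lemma~\ref{bupd}, that is, a finite execution containing no $Update$ move, so that its hypotheses (in particular the constancy of all $\alpha$, $\beta$, $AskFirst$ and $AskSecond$ values noted right after Definition~\ref{chocapic}) are satisfied. This follows directly from the definition of the partition. I do not expect any substantive obstacle beyond bookkeeping of the boundary pieces (the sub-execution before the first $Update$ and the one after the last $Update$, which both fit Definition~\ref{chocapic}) and verifying that the arithmetic collapses to the stated $12\matchNB(\Delta(\singleNB+6\matchNB) + \matchNB)$ bound.
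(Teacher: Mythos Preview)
Your proposal is correct and follows essentially the same approach as the paper: combine Corollary~\ref{update} (bounding the total number of $Update$ moves by $K=\Delta(\singleNB+6\matchNB)+\matchNB$) with Lemma~\ref{bupd} (bounding moves of a matched edge in an $Update$-free sub-execution by $12$). The paper's own proof is terser and slightly looser in the arithmetic---it simply multiplies $12$ by $\matchNB$ and by $K$ without distinguishing edges from nodes or worrying about the $K{+}1$ pieces and the $Update$ moves themselves---whereas you make the partition and the summation explicit and then observe that $6\matchNB(K{+}1)+K\le 12\matchNB K$; but the underlying idea is identical.
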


\begin{proof}
By Lemma \ref{update}, matched nodes can execute at most $\Delta(\singleNB+6 \matchNB) + \matchNB$ times the $Update$ rule. By Lemma \ref{bupd}, between the execution of two $Updates$ rules, a matched node can execute at most $12$ rules, which concludes the proof.
\end{proof}


\begin{lemma}\label{lem:ResetEndMove}
In any execution, single nodes can execute at most $\singleNB$ times the $ResetEnd$ rule.
\end{lemma}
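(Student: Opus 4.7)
The plan is to show that each single node can execute $ResetEnd$ at most once, from which the bound $\singleNB$ follows by summing over all single nodes.

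First I would unfold the rule itself: $ResetEnd$ is enabled exactly when $p_u=null \wedge end_u=True$, and its unique effect is to assign $end_u := False$. Hence immediately after any execution of $ResetEnd$ by a single node $u$, the configuration satisfies $p_u=null \wedge end_u=False$. The heart of the argument is to show that, from such a configuration onward, the joint predicate $p_u=null \wedge end_u=True$ never holds again, so that $u$ is never re-eligible for $ResetEnd$.

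To establish this, I would inspect the three rules of a single node and track their side effects on $p_u$ and $end_u$. The variable $end_u$ can only be raised to $True$ by $UpdateEnd$, whose guard requires $p_u \in matched(N(u))$ and in particular $p_u \neq null$. The variable $p_u$ can only be modified by $UpdateP$, and $UpdateP$ always includes the assignment $end_u := False$. Combining these two facts: starting from $p_u=null \wedge end_u=False$, the only way to get $end_u=True$ is through $UpdateEnd$, which forces $p_u \neq null$; and the only way subsequently to drive $p_u$ back to $null$ is through $UpdateP$, which simultaneously resets $end_u$ to $False$. So the pair $(p_u, end_u)$ can never return to $(null, True)$ after the first $ResetEnd$, and $u$ cannot fire $ResetEnd$ a second time.

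Finally I would sum over single nodes: with at most one $ResetEnd$ per single node, the total number of $ResetEnd$ moves in any execution is at most $\singleNB$. I do not anticipate any real obstacle here; the argument is a short case analysis on the three single-node rules, and the key observation is just that $UpdateP$ always clears $end_u$ as a side effect, preventing the $(null, True)$ pattern from reappearing.
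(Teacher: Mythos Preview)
Your proposal is correct and takes essentially the same approach as the paper: both argue that each single node fires $ResetEnd$ at most once by exploiting that (i) only $UpdateEnd$ can raise $end_u$ to $True$ and its guard forces $p_u\neq null$, and (ii) only $UpdateP$ can change $p_u$ and it always resets $end_u$ to $False$. The paper phrases this as a contradiction argument tracking the last $UpdateEnd$ before a hypothetical second $ResetEnd$, while you phrase it as a forward invariant on the pair $(p_u,end_u)$; the content is the same.
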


\begin{proof}
We  prove that a single node $x$ can execute the $ResetEnd$ rule at most once. Assume by contradiction that it executes this rule twice. Let $C_0\mapsto C_1$ be the transition when it executes it the second time. In $C_0$, $end_x=True$, by definition of the rule. Since $x$ already executed a $ResetEnd$ rule, it must have some point wrote $True$ in $end_x$. This is only possible through an execution of $UpdateEnd$. Thus consider the last transition $D_0\mapsto D_1$ in which it executed this rule. Observe that $D_1 \leq C_0$. Since between $D_1$ and $C_0$, $end_x$ remains $True$, observe that $x$ does not execute any rule between these two configurations. Now since in $D_1$, $p_x\neq null$ and this holds in $C_0$ then $x$ is not eligible for $ResetEnd$ in $C_0$, which gives the contradiction. This implies that single nodes can execute at most $\mathcal{O}(\singleNB)$ times the $ResetEnd$ rule. 
\end{proof}


\begin{lemma}\label{lem:updateEndMove}
In any execution, single nodes can execute at most $\singleNB+6 \matchNB$ times the $UpdateEnd$ rule.
\end{lemma}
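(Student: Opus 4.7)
The plan is to show that each execution of the $UpdateEnd$ rule by a single node $x$ necessarily toggles the value of $end_x$, and then to invoke Lemma~\ref{big} to bound the total number of such toggles.

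First, I would inspect the $UpdateEnd$ rule for nodes in $single(V)$: its guard requires in particular $end_u \neq end_{p_u}$, and its command assigns $end_u := end_{p_u}$. Hence, in any transition in which a single node $x$ executes $UpdateEnd$, the two values $end_x$ and $end_{p_x}$ differ immediately before the transition, and the command rewrites $end_x$ with the different value $end_{p_x}$. In other words, every $UpdateEnd$ move performed by a single node flips its $end$ variable from $True$ to $False$ or from $False$ to $True$.

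Next, I would invoke Lemma~\ref{big}, which bounds by $\singleNB + 6\matchNB$ the total number of transitions in any execution where a single node changes its $end$ variable. Since each $UpdateEnd$ execution is, by the previous paragraph, such a flipping transition, and since a single node performs at most one rule per transition (so distinct $UpdateEnd$ moves correspond to distinct flipping transitions), the total number of $UpdateEnd$ executions by single nodes is at most $\singleNB + 6\matchNB$, as claimed.

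The main thing to watch for is that the flip produced by $UpdateEnd$ is genuinely counted among the flips bounded by Lemma~\ref{big}, and is not conflated with a flip caused by some other rule such as $ResetEnd$ or $UpdateP$ that could also modify the $end$ variable. This is immediate, since in any single transition a single node executes at most one of its rules, so each $UpdateEnd$ move occupies its own transition in the execution. Beyond this bookkeeping observation, no further difficulty is expected: the lemma is essentially a direct corollary of Lemma~\ref{big} together with a one-line reading of the $UpdateEnd$ guard and command.
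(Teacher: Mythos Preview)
Your proposal is correct and follows essentially the same approach as the paper: observe that every execution of \emph{UpdateEnd} necessarily flips the single node's $end$ variable (since the guard requires $end_u\neq end_{p_u}$ and the command copies $end_{p_u}$), and then apply Lemma~\ref{big}. Your version is simply a slightly more expanded rendering of the paper's two-sentence proof.
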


\begin{proof}
By Lemma \ref{big}, single nodes can change the value of their $end$ variable at most $\singleNB+6 \matchNB$ times. Thus they can apply $UpdateEnd$ at most $\singleNB+6 \matchNB$  times, since in every application of this rule, the value of the $end$ variable must change.
\end{proof}


\begin{lemma}\label{lem:updateMove}
In any execution, single nodes can execute at most $12\matchNB \times (\Delta(\singleNB+6 \matchNB) + \matchNB)+1$ times the $UpdateP$ rule.
\end{lemma}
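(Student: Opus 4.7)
The plan is to bound the total number of $UpdateP$ moves by first showing that any such move by a single node $x$ must be preceded by some activity of a matched neighbor of $x$, and then invoking Theorem~\ref{matched}. The whole argument hinges on the structural observation that, between two consecutive $UpdateP$ moves executed by the same single node, some matched node necessarily performs a rule.

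\textbf{Step 1: stabilisation of $x$ right after its $UpdateP$.} I would first look at the configuration $C_1$ obtained after a transition $C_0\mapsto C_1$ in which $x$ executes $UpdateP$. By the rule, $p_x$ is assigned $\mathit{Lowest}\{w\in N(x): p_w=x\}$ and $end_x$ is set to $False$. So either $p_x=null$ (and in that case the set used by $\mathit{Lowest}$ is empty), or $p_x=u$ for some matched neighbour $u$ with $p_u=x$. A short case check on the three disjuncts of the $UpdateP$ guard shows that $x$ is not eligible for $UpdateP$ in $C_1$: the first disjunct fails because the candidate set is empty whenever $p_x=null$; the second fails because $p_x\in matched(N(x))\cup\{null\}$ by construction; and the third fails because $p_{p_x}=x$ in $C_1$ whenever $p_x\ne null$.

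\textbf{Step 2: reactivation requires a matched move.} The key point is that the only rule for a single node that rewrites $p_x$ is $UpdateP$ itself; both $ResetEnd$ and $UpdateEnd$ touch only $end_x$. Hence between two consecutive $UpdateP$ moves by $x$, the value of $p_x$ remains constant, so the reactivation of $x$ must come from a change in the environment. A case analysis mirroring Step~1 gives the two possibilities: either $p_x=null$ and some matched neighbour $u$ writes $x$ in $p_u$ (first disjunct), or $p_x=u\ne null$, $p_u=x$, and $u$ later rewrites $p_u$ to something other than $x$ (third disjunct). In both cases a matched neighbour of $x$ must execute a rule between the two $UpdateP$ moves of $x$.

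\textbf{Step 3: counting.} One charges every $UpdateP$ move, except a single initial one accounted for by the additive $+1$, to the matched-node move that triggered its eligibility in the sense of Step~2. Since the total number of matched moves is bounded by Theorem~\ref{matched} by $12\matchNB(\Delta(\singleNB+6\matchNB)+\matchNB)$, one concludes that the number of $UpdateP$ moves is at most $12\matchNB(\Delta(\singleNB+6\matchNB)+\matchNB)+1$, as announced. The delicate part I expect to be the main obstacle is the bookkeeping of the charging: one must argue that the matched move singled out in Step~2 is indeed the triggering one for the very next $UpdateP$ of~$x$, so that summing over all single nodes consumes each matched move consistently with the bound. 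This is why the analysis is formulated \emph{per single node} and the argument of Step~2 is applied to consecutive $UpdateP$ of the \emph{same} $x$: the activating matched move sits in the gap between those two $UpdateP$, and it is this witness that is used to close the count against Theorem~\ref{matched}.
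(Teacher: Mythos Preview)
Your proposal follows essentially the same approach as the paper: you show that between two consecutive \emph{UpdateP} moves by the same single node $x$ some matched node must execute a rule, via the same two-case analysis on whether $p_x$ became $null$ (so a matched neighbour later sets $p_w=x$) or $p_x=u$ with $p_u=x$ (so $u$ later changes $p_u$), and then you invoke Theorem~\ref{matched}. The paper's proof is equally terse on the charging issue you flag as delicate---it simply concludes ``Now, the lemma holds by Theorem~\ref{matched}'' without explaining why the additive constant is $+1$ rather than $+\singleNB$---so your write-up matches the paper's level of detail on that point.
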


\begin{proof}
Let $x$ be a single node. 
Let $C_0 \mapsto C_1$ be a transition in which $x$ executes an $UpdateP$ rule and let $C_2 \mapsto C_3$ be the next transition after $C_1$ in which $x$ executes an $UpdateP$ rule. We prove that for $x$ to execute the $UpdateP$ rule in $C_2\mapsto C_3$, a matched node had to execute a move between $C_0$ and $C_2$. 

In $C_1$ there are two cases: either $p_x=null$ or $p_x\neq null$. Assume to begin that $p_x=null$. This implies that in $C_0$ the set $\{w \in N(x) | p_w =x \}$ is empty. In $C_2$, $p_x=null$, since between $C_1$ and $C_2$, $x$ can only apply $UpdateEnd$ or $ResetEnd$. Thus if it applies $UpdateP$ in $C_2$, necessarily $\{w \in N(x) | p_w =x \}\neq \emptyset$. This implies that a matched node must have executed a $Match$ rule between $C_1$ and $C_2$ and the lemma holds in that case.

Consider now the case in which $p_x=u$ with $u\neq null$ in $C_1$. By definition of the $UpdateP$ rule, we also have $u \in matched(N(x)) \land p_u=x$ holds in $C_0$. In $C_2$ we still have that $p_x=u$ since between $C_1$ and $C_2$, $x$ can only execute $UpdateEnd$ or $ResetEnd$. Thus if $x$ executes $UpdateP$ in $C_2$, necessarily $p_{p_{x}}\neq x$. This implies that $p_u\neq x$ and so $u$ executed a rule between $C_0$ and $C_2$. 

Now, the lemma holds by Theorem \ref{matched}.
\end{proof}


\begin{corollary}
In any execution, nodes can execute at most $\mathcal{O}(n^3)$ moves.
\end{corollary}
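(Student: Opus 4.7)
The plan is to partition the set of moves in any execution by the kind of node making the move (matched or single), then by which rule is executed, and to apply the bounds already established. Every rule executable by a matched node ($Update$, $MatchFirst$, $MatchSecond$, $ResetMatch$) is already collectively bounded by Theorem \ref{matched}. Every rule executable by a single node is one of $ResetEnd$, $UpdateEnd$, or $UpdateP$, and these are bounded individually by Lemmas \ref{lem:ResetEndMove}, \ref{lem:updateEndMove}, and \ref{lem:updateMove}, respectively. No other rules exist, so the total number of moves is the sum of these four quantities.

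Summing yields the upper bound
\[
12\matchNB\bigl(\Delta(\singleNB+6\matchNB) + \matchNB\bigr) + \singleNB + (\singleNB + 6\matchNB) + 12\matchNB\bigl(\Delta(\singleNB+6\matchNB) + \matchNB\bigr) + 1.
\]
Using $\matchNB \le n$, $\singleNB \le n$, and $\Delta \le n-1$, each of the two identical leading terms is $O(n \cdot (n\cdot n + n)) = O(n^3)$, while the contributions from the $ResetEnd$ and $UpdateEnd$ bounds are only $O(n)$. Hence the total number of moves is $O(n^3)$, which is exactly the stated claim.

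There is no real obstacle: the work has already been done in the preceding subsections. The only thing to be careful about is verifying that the enumeration of rules is exhaustive for each node type and that the parameters $\matchNB$, $\singleNB$, $\Delta$ are all bounded by $n$ so that the leading terms collapse cleanly to $O(n^3)$. Once that bookkeeping is recorded, the corollary is immediate.
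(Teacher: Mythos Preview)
Your proposal is correct and follows essentially the same approach as the paper: partition the moves into those by matched nodes (bounded by Theorem~\ref{matched}) and those by single nodes (bounded rule-by-rule via Lemmas~\ref{lem:ResetEndMove}, \ref{lem:updateEndMove}, \ref{lem:updateMove}), then observe each contribution is $O(n^3)$. The paper's own proof is just a terser version of your argument, omitting the explicit sum.
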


\begin{proof}
According to Lemmas \ref{lem:ResetEndMove}, \ref{lem:updateEndMove} and \ref{lem:updateMove}, single nodes can execute at most $\mathcal{O}(n^3)$ moves.
Moreover, according to Theorem \ref{matched}, matched nodes can execute at most $\mathcal{O}(n^3)$ moves.
\end{proof}


\begin{corollary}\label{cor:end}
The algorithms $\algoName$ converges in $O(n^3)$ steps under the adversarial distributed daemon. 
\end{corollary}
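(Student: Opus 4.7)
The plan is essentially to combine the immediately preceding (unnumbered) corollary, which bounds the total number of moves in any maximal execution by $O(n^3)$, with the relationship between moves and steps under the adversarial distributed daemon. First I would recall that the preceding corollary follows by adding the bound of Theorem \ref{matched} on matched-node moves and the three bounds of Lemmas \ref{lem:ResetEndMove}, \ref{lem:updateEndMove}, \ref{lem:updateMove} on single-node moves; substituting $\matchNB, \singleNB \leq n$ and $\Delta \leq n-1$ turns every product $\matchNB \cdot \Delta \cdot \singleNB$ or $\matchNB \cdot \Delta \cdot \matchNB$ into $O(n^3)$, and the sum of all four bounds remains $O(n^3)$.

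Second, I would observe that by the definition of an execution given in Section \ref{sec:model}, every transition $C_i \mapsto C_{i+1}$ is associated with a nonempty set $A_i$ of activable rules whose commands are executed to reach $C_{i+1}$; hence each step contributes at least one move. Consequently the number of steps in any maximal execution is bounded above by the total number of moves, so the $O(n^3)$ bound on moves immediately yields the $O(n^3)$ bound on steps. Combined with the silence property and the correctness statement of Theorem \ref{corr}, this shows that every maximal execution of $\algoName$ reaches, within $O(n^3)$ steps, a stable configuration encoding a $\frac{2}{3}$-approximation of a maximum matching.

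No real obstacle remains at this point: the hard combinatorial work has already been carried out in Sections \ref{section:part1} through \ref{section:part4}, in particular the crucial Theorem \ref{nber} that a matched node writes $True$ in its $end$ variable at most twice, from which the $O(n^3)$ bounds on all rule executions ultimately cascade. The final step from moves to steps is the trivial observation that under the distributed daemon a round contains at least one, and at most $n$, moves, so the two complexity measures differ by at most a factor of $n$, and in particular the number of steps is bounded by the number of moves.
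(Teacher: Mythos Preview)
Your proposal is correct and mirrors the paper's approach: the paper gives no explicit proof for Corollary~\ref{cor:end}, treating it as an immediate restatement of the preceding unnumbered corollary (which already combines Theorem~\ref{matched} with Lemmas~\ref{lem:ResetEndMove}, \ref{lem:updateEndMove}, and \ref{lem:updateMove}). Your additional care in distinguishing moves from steps (each transition contains at least one move, so the number of steps is bounded by the number of moves) is a harmless elaboration that the paper leaves implicit, likely because it uses the two terms interchangeably here.
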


\bibliographystyle{plainurl}
\bibliography{matching}

\end{document}